\newtheorem{theorem}{Theorem}
\newtheorem{lemma}{Lemma}
\newtheorem{mydef}{Definition}
\newtheorem{sub-section}{}
\tikzset{cross/.style={cross out, draw=black, minimum size=2*(#1-\pgflinewidth), inner sep=0pt, outer sep=0pt},
cross/.default={1pt}}
\DeclareMathAlphabet{\pazocal}{OMS}{zplm}{m}{n}
\newcommand{\E}{\pazocal{E}}
\newcommand{\F}{\pazocal{F}}
\newcommand{\h}{\pazocal{H}}
\newcommand{\p}{\pazocal{P}}
\begin{document}

\title[Sample title]{Reducing resources for verification of quantum computations}

\author{Samuele Ferracin}\email{S.Ferracin@warwick.ac.uk}
\author{Theodoros Kapourniotis}\email{T.Kapourniotis@warwick.ac.uk}
\author{Animesh Datta}\email{animesh.datta@warwick.ac.uk}
\affiliation{Department of Physics, University of Warwick, Coventry CV4 7AL, United Kingdom}

\date{\today}

\begin{abstract} 
{We present two verification protocols where the correctness of a ``target'' computation is checked by means of ``trap'' computations that can be efficiently simulated on a classical computer. Our protocols rely on a minimal set of noise-free operations (preparation of eight single-qubit states or measurement of four observables, both on a single plane of the Bloch sphere) and achieve linear overhead. To the best of our knowledge, our protocols are the least demanding techniques able to achieve linear overhead. They represent a step towards further reducing the quantum requirements for verification.}

\end{abstract}

\pacs{Valid PACS appear here}
\keywords{Suggested keywords}
\maketitle
\section{Introduction}

Scalable quantum computers are expected to decidedly widen our computing paradigm, providing polynomial or even exponential speed-ups in the solution of certain classes of problems \cite{NC00}. While the realization  of such devices remains out of reach, the development of small-sized and imperfect quantum computers appears to be an achievable goal within the next few years \cite{W16,B&al17}. In principle, these ``first-generation'' quantum computers should be capable of outperforming classical computers in some tasks and consequently provide evidence of the so-called ``{supremacy}'' of quantum computers over classical ones \cite{AC16}. However, the faulty functioning of their inner components represents a major issue. In this scenario, a fundamental question arises: how can we get confidence about the correct functioning of a quantum computer, if the correctness of the outcome cannot be checked easily? That is to say, \textit{is it possible to verify the correctness of the outcome of a quantum computation}? 

Eventually, quantum computers may not need to be verified at all. Advances in error correction and fault-tolerance might increase their reliability up to a point where verifying their functioning becomes redundant. Nevertheless, the above question has a deep value, which goes far beyond the mere scope of verifying a quantum computation. It refers to the possibility of using classical logic to verify that a given quantum phenomenon behaves as predicted in the high complexity regime. Addressing this question would provide a deeper understanding of some aspects of computational complexity theory, such as the relationship between the classes BQP (informally, the class of problems that can be solved by a quantum computer in a polynomial time) and interactive proofs \cite{AV12,ABE08}. Also, verification techniques will be needed by ``delegated computations'', where private users access a quantum computer remotely and malevolent third-parties might try to tamper the computation \citep{BFK09}. For all these reasons, much effort has been made in the last decade in an area that can be called ``{quantum verification}'' \cite{F85,AS06,BFK09,FK12,RUV12,HM15,GKW15,M16,F16,GKK17}. 

In the previous works, it has been shown that it is possible to verify an arbitrary quantum computation in scenarios where some limited assumptions are made. Based on the assumptions they rely on, the existing protocols can be divided into two broad classes. The first is the class of protocols where a given operation performed during the computation (such as state preparation \cite{FK12,B15,KD17,KW17} or measurement \cite{HM15,MF16,HKSE17}) is ``{trusted}'', and consequently treated as an ideal and noise-free \textit{resource}. On the other hand, the protocols in the second class consider computations run on two ``untrusted'' entangled servers that cannot communicate with each other \cite{RUV12,GKW15,HPF15,M16,HH16,NV16}. The possibility of verifying a quantum computation without any kind of assumption remains open.

In what follows, we will mainly be concerned with the first class of verification protocols. In those works, the problem of verification is illustrated in terms of an interactive game between Alice (the ``verifier'') and Bob (the ``prover'') of the following kind. Alice, endowed with some restricted quantum power, wants to run a universal quantum computation on Bob's quantum computer. However, Alice has no guarantee that Bob will follow her instructions. Thus, she wants to run the computation in such a way that if Bob is {dishonest}, she expects to be aware of it.

Although games of this kind are interesting for studying the security of delegated computations, some of their main concerns (such as blindness, namely Alice's ability to encrypt her own instructions) might appear more germane to cryptography than to verification, given that it seems fairly unlikely that quantum computers will actively conspire against us. Nevertheless, the reason why the cryptographic approach is adopted is two-fold. On the one hand, we do not know how to solve the  problem of verification in a different way: so far, it has not been possible to find alternative approaches that are also scalable and unconditionally secure - as an example, a verification technique based on the statistics of the outcomes is presented in Ref. \cite{B&al16}; however, its validity relies on assumptions on the noise affecting the experimental setup \cite{BFNV18}. On the other hand, the cryptographic approach is perfectly consistent with the spirit of verification. To some extent, Bob represents \textit{all that can go wrong} along a computation: his attempts to harm the computation correspond to all of the possible sources of errors able to corrupt the computation itself.

As mentioned above, verifying quantum computations with cryptographic techniques requires that the verifier has access to a number of ideal resources. Our work represents a step forward towards minimising the resources required by the existing verification protocols. This advance is clearly valuable from both the experimental and the theoretical perspective: devising least demanding protocols makes verification easier to implement experimentally and, at the same time, helps making progresses towards understanding what is the smallest set of noise-free operations needed by the verifier to decide with confidence on the correctness of an arbitrary quantum computation. 

In what follows, we present the following results:
\begin{itemize}
\item[(i) ]In Section \ref{sec:verifSP} we present Protocol \hyperlink{pr:pr1}{1}, {a verification protocol where the correctness of a quantum computation is tested by checking the outcome of several other classically efficiently simulable computations}. We prove its effectiveness, modulo a trust assumption regarding the preparation of single qubits in the states $\{|+\rangle_{\theta}=(\ket{0}+e^{i\theta}\ket{1})/\sqrt{2}\}$, $\theta\in\{0,\pi/4,..,7\pi/4\}$.

\item[(ii)]In Section \ref{sec:verifM}, we show that the above protocol can be adapted to a scenario where the trust assumption is made on the measurement in the set of bases $\{|\pm\rangle_{\phi}\langle\pm|\}=\{R_Z(\phi)|\pm\rangle\langle\pm|R^\dagger_Z(\phi)\}$, $\phi\in\{0,\pi/4,..,7\pi/4\}$ (or equivalently, on the measurement of the observables $X\textrm{, }Y\textrm{, }(X\pm Y)/\sqrt{2}$), provided that the qubits can be reused after the measurement. Based on this, we present Protocol \hyperlink{pr:pr2}{2}.
\end{itemize}
In the language of cryptographic protocols:
\begin{itemize}
\item[(i) ]Supposing that Alice can prepare single qubits in the set of states $\{|+\rangle_{\theta}\}$, $\theta\in\{0,\pi/4,..,7\pi/4\}$, we show how she can verify Bob's behaviour by {hiding} her ``target'' computation among a given number of ``trap'' computations, whose outcomes are easy to compute on a classical computer and can thus be compared to the obtained ones.
\item[(ii)]We show how the above mentioned protocol can be adapted to the case of Alice making local measurement in the set of bases $\{|\pm\rangle_{\phi}\langle\pm|\}$, $\phi\in\{0,\pi/4,..,7\pi/4\}$ (or equivalently, of Alice measuring the observables $X\textrm{, }Y\textrm{, }(X\pm Y)/\sqrt{2}$) and resending already measured qubits to Bob.
\end{itemize}
Compared to prior works, our protocols rely on a set of resources which is minimal and that is restricted to operations contained on a single plane of the Bloch sphere. In more detail, Protocol \hyperlink{pr:pr1}1 requires noise-free preparation of eight types of single-qubit states (as opposed to \cite{FK12,B15,KD17,KW17}, which requires perfect preparation of ten types of states), while Protocol \hyperlink{pr:pr2}2 requires noise-free measurement of four observables (as opposed to \cite{HM15}, which requires trusted measurements of five observables). On the other 


\newpage
\begin{figure}[H]
\centering
\begin{tikzpicture}[scale=1.3, every node/.style={scale=1.1}]

\draw [->,thick] (0.15,3.1) -- (2.7,3.1);
\node at (2.7,3.3) {$y$};

\filldraw[fill=green!20!white, draw=black, opacity=0.6] (0.0-0.25, 0.0-0.2) rectangle (0.0+2.25, 0.0+0.7);
\filldraw[fill=green!20!white, draw=black, opacity=0.6] (0.0-0.25, 1.0-0.2) rectangle (0.0+2.25, 1.0+0.7);
\filldraw[fill=green!20!white, draw=black, opacity=0.6] (0.0-0.25, 2.0-0.2) rectangle (0.0+2.25, 2.0+0.7);

\filldraw[fill=green!20!white, draw=black, opacity=0.6] (2.0-0.25, 0.5-0.2) rectangle (2.0+2.25, 0.5+0.7);
\filldraw[fill=green!20!white, draw=black, opacity=0.6] (2.0-0.25, 1.5-0.2) rectangle (2.0+2.25, 1.5+0.7);

\filldraw[fill=green!20!white, draw=black, opacity=0.6] (4.0-0.25, 0.0-0.2) rectangle (4.0+2.25, 0.0+0.7);
\filldraw[fill=green!20!white, draw=black, opacity=0.6] (4.0-0.25, 1.0-0.2) rectangle (4.0+2.25, 1.0+0.7);
\filldraw[fill=green!20!white, draw=black, opacity=0.6] (4.0-0.25, 2.0-0.2) rectangle (4.0+2.25, 2.0+0.7);

\draw [red,line width=0.75mm,dashed] (0.0-0.25,-0.5) -- (0.0-0.25,3.0);
\draw [red,line width=0.75mm,dashed] (0.0+1.75,-0.5) -- (0.0+1.75,3.0);
\draw [red,line width=0.75mm,dashed] (2.0+1.75,-0.5) -- (2.0+1.75,3.0);
\draw [red,line width=0.75mm,dashed] (4.0+1.75,-0.5) -- (4.0+1.75,3.0);

\draw (0.0,0.0) -- (6.0,0.0);
\draw (0.0,0.5) -- (6.0,0.5);
\draw (0.0,1.0) -- (6.0,1.0);
\draw (0.0,1.5) -- (6.0,1.5);
\draw (0.0,2.0) -- (6.0,2.0);
\draw (0.0,2.5) -- (6.0,2.5);

\draw (1.0,0.0) -- (1.0,0.5);
\draw (1.0,1.0) -- (1.0,1.5);
\draw (1.0,2.0) -- (1.0,2.5);
\draw (2.0,0.0) -- (2.0,0.5);
\draw (2.0,1.0) -- (2.0,1.5);
\draw (2.0,2.0) -- (2.0,2.5);

\draw (3.0,0.5) -- (3.0,1.0);
\draw (3.0,1.5) -- (3.0,2.0);
\draw (4.0,0.5) -- (4.0,1.0);
\draw (4.0,1.5) -- (4.0,2.0);

\draw (5.0,0.0) -- (5.0,0.5);
\draw (5.0,1.0) -- (5.0,1.5);
\draw (5.0,2.0) -- (5.0,2.5);
\draw (6.0,0.0) -- (6.0,0.5);
\draw (6.0,1.0) -- (6.0,1.5);
\draw (6.0,2.0) -- (6.0,2.5);

\draw [fill=white] (0.0,2.5) circle [radius=0.05];
\draw [fill=white] (0.5,2.5) circle [radius=0.05];
\draw [fill=white] (1.0,2.5) circle [radius=0.05];
\draw [fill=white] (1.5,2.5) circle [radius=0.05];
\draw [fill=white] (2.0,2.5) circle [radius=0.05];
\draw [fill=white] (2.5,2.5) circle [radius=0.05];
\draw [fill=white] (3.0,2.5) circle [radius=0.05];
\draw [fill=white] (3.5,2.5) circle [radius=0.05];
\draw [fill=white] (4.0,2.5) circle [radius=0.05];
\draw [fill=white] (4.5,2.5) circle [radius=0.05];
\draw [fill=white] (5.0,2.5) circle [radius=0.05];
\draw [fill=white] (5.5,2.5) circle [radius=0.05];
\draw [fill=white] (6.0,2.5) circle [radius=0.05];

\draw [fill=white] (0.0,2.0) circle [radius=0.05];
\draw [fill=white] (0.5,2.0) circle [radius=0.05];
\draw [fill=white] (1.0,2.0) circle [radius=0.05];
\draw [fill=white] (1.5,2.0) circle [radius=0.05];
\draw [fill=white] (2.0,2.0) circle [radius=0.05];
\draw [fill=white] (2.5,2.0) circle [radius=0.05];
\draw [fill=white] (3.0,2.0) circle [radius=0.05];
\draw [fill=white] (3.5,2.0) circle [radius=0.05];
\draw [fill=white] (4.0,2.0) circle [radius=0.05];
\draw [fill=white] (4.5,2.0) circle [radius=0.05];
\draw [fill=white] (5.0,2.0) circle [radius=0.05];
\draw [fill=white] (5.5,2.0) circle [radius=0.05];
\draw [fill=white] (6.0,2.0) circle [radius=0.05];

\draw [fill=white] (0.0,1.5) circle [radius=0.05];
\draw [fill=white] (0.5,1.5) circle [radius=0.05];
\draw [fill=white] (1.0,1.5) circle [radius=0.05];
\draw [fill=white] (1.5,1.5) circle [radius=0.05];
\draw [fill=white] (2.0,1.5) circle [radius=0.05];
\draw [fill=white] (2.5,1.5) circle [radius=0.05];
\draw [fill=white] (3.0,1.5) circle [radius=0.05];
\draw [fill=white] (3.5,1.5) circle [radius=0.05];
\draw [fill=white] (4.0,1.5) circle [radius=0.05];
\draw [fill=white] (4.5,1.5) circle [radius=0.05];
\draw [fill=white] (5.0,1.5) circle [radius=0.05];
\draw [fill=white] (5.5,1.5) circle [radius=0.05];
\draw [fill=white] (6.0,1.5) circle [radius=0.05];

\draw [fill=white] (0.0,1.0) circle [radius=0.05];
\draw [fill=white] (0.5,1.0) circle [radius=0.05];
\draw [fill=white] (1.0,1.0) circle [radius=0.05];
\draw [fill=white] (1.5,1.0) circle [radius=0.05];
\draw [fill=white] (2.0,1.0) circle [radius=0.05];
\draw [fill=white] (2.5,1.0) circle [radius=0.05];
\draw [fill=white] (3.0,1.0) circle [radius=0.05];
\draw [fill=white] (3.5,1.0) circle [radius=0.05];
\draw [fill=white] (4.0,1.0) circle [radius=0.05];
\draw [fill=white] (4.5,1.0) circle [radius=0.05];
\draw [fill=white] (5.0,1.0) circle [radius=0.05];
\draw [fill=white] (5.5,1.0) circle [radius=0.05];
\draw [fill=white] (6.0,1.0) circle [radius=0.05];

\draw [fill=white] (0.0,0.5) circle [radius=0.05];
\draw [fill=white] (0.5,0.5) circle [radius=0.05];
\draw [fill=white] (1.0,0.5) circle [radius=0.05];
\draw [fill=white] (1.5,0.5) circle [radius=0.05];
\draw [fill=white] (2.0,0.5) circle [radius=0.05];
\draw [fill=white] (2.5,0.5) circle [radius=0.05];
\draw [fill=white] (3.0,0.5) circle [radius=0.05];
\draw [fill=white] (3.5,0.5) circle [radius=0.05];
\draw [fill=white] (4.0,0.5) circle [radius=0.05];
\draw [fill=white] (4.5,0.5) circle [radius=0.05];
\draw [fill=white] (5.0,0.5) circle [radius=0.05];
\draw [fill=white] (5.5,0.5) circle [radius=0.05];
\draw [fill=white] (6.0,0.5) circle [radius=0.05];

\draw [fill=white] (0.0,0.0) circle [radius=0.05];
\draw [fill=white] (0.5,0.0) circle [radius=0.05];
\draw [fill=white] (1.0,0.0) circle [radius=0.05];
\draw [fill=white] (1.5,0.0) circle [radius=0.05];
\draw [fill=white] (2.0,0.0) circle [radius=0.05];
\draw [fill=white] (2.5,0.0) circle [radius=0.05];
\draw [fill=white] (3.0,0.0) circle [radius=0.05];
\draw [fill=white] (3.5,0.0) circle [radius=0.05];
\draw [fill=white] (4.0,0.0) circle [radius=0.05];
\draw [fill=white] (4.5,0.0) circle [radius=0.05];
\draw [fill=white] (5.0,0.0) circle [radius=0.05];
\draw [fill=white] (5.5,0.0) circle [radius=0.05];
\draw [fill=white] (6.0,0.0) circle [radius=0.05];

%
\node at (2.0+0.2,1.5+0.25) {\footnotesize ($i,j$)};
%

\end{tikzpicture}
\caption{\small Example of BwS (see \cite{BFK09} for more details). The circles represent qubits, the edges represent $cZ$ gates and the green boxes represent the 10-qubit bricks. In the rest of the paper, we will label physical qubits with indices $i$ and $j$. We divide the BwS into ``{tapes}'' (four-column layers between dashed lines) and label them with index $y=1,..,w$. Any $n\times m$ BwS is composed by $w=(m-1)/4$ tapes.}
\label{fig:BSmain}
\end{figure}
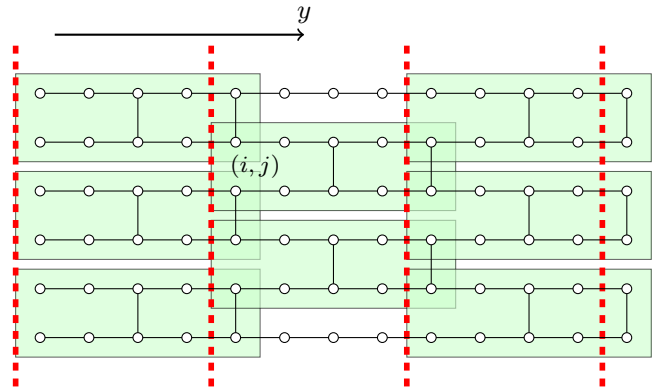
\noindent hand, the overhead of our protocols (i.e. the number of operations required) is linear, as well as the overhead of the most efficient existing protocols \cite{B15,KD17,KW17}. As far as we know, our protocols are the only ones achieving linear overhead while employing such a minimal set of resources.

The paper is structured as follows. In Section \ref{sec:background} we introduce measurement-based quantum computing (MBQC) and formally define cryptographic protocols.  In Section \ref{sec:comparison} we provide a detailed comparison between our protocols and the existing ones. In Section \ref{sec:verifSP} we illustrate our verification protocol and prove its validity when state preparation is ideal. In Section \ref{sec:verifM} we show how to adapt our verification protocol to the case of trusted measurements.

We use the following notation. We denote $XY$-plane rotations (respectively $ZY$-plane rotations) by angle $\phi$ as $R_Z(\phi)=\textup{diag}(1,e^{i\phi})$, and refer to them as ``$R_Z$-gate'' (respectively as $R_X(\phi)=HR_Z(\phi)H$, and refer to them as ``$R_X$-gate''). We denote the controlled-$Z$ gate as $cZ$ and the controlled-$X$ gate as CNOT.
\section{Background: MBQC and cryptographic protocols}~\label{sec:background}
In this section, we provide an introduction to MBQC and cryptographic games.

\subsection{Measurement-based quantum computing}~\label{sec:backgroundMBQC}
MBQC is a model for universal quantum computation equivalent to the circuit model \cite{RB01}. In MBQC, the computation is implemented throughout adaptive measurements of qubits belonging to a large entangled resource

\newpage
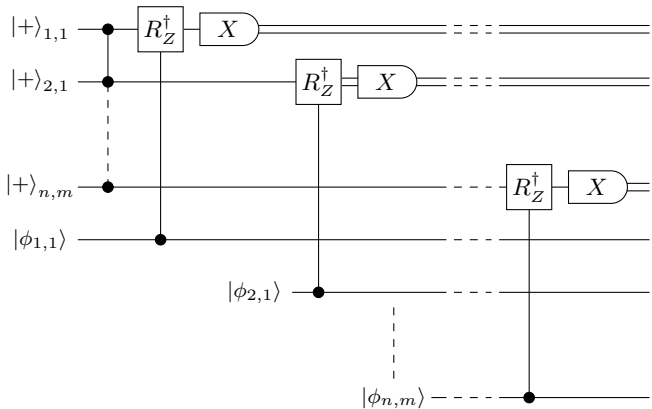
\begin{figure}[H]
	\begin{tikzpicture}
	\draw (0,4.2) -- (2.2,4.2);
	\draw (2.2,4.25) -- (4.9,4.25);
	\draw (2.2,4.15) -- (4.9,4.15);
	\draw [dashed] (5,4.25) -- (5.5,4.25);
	\draw [dashed] (5,4.15) -- (5.5,4.15);
	\draw (5.6,4.25) -- (7.6,4.25);
	\draw (5.6,4.15) -- (7.6,4.15);
	
	
	\draw (0,3.5) -- (3.5,3.5);
	\draw (3.5,3.55) -- (4.9,3.55);
	\draw (3.5,3.45) -- (4.9,3.45);
	\draw [dashed] (5,3.55) -- (5.5,3.55);
	\draw [dashed] (5,3.45) -- (5.5,3.45);
	\draw (5.6,3.55) -- (7.6,3.55);
	\draw (5.6,3.45) -- (7.6,3.45);
	
	
	\draw (0,2.1) -- (4.9,2.1);
	\draw [dashed] (5,2.1) -- (5.5,2.1);
	\draw (5.6,2.1) -- (7,2.1);
	\draw (7,2.15) -- (7.6,2.15);
	\draw (7,2.05) -- (7.6,2.05);
	
	\draw (0,1.4) -- (4.9,1.4);
	\draw [dashed] (5,1.4) -- (5.5,1.4);
	\draw (5.6,1.4) -- (7.6,1.4);
	
	\draw (2.85,0.7) -- (4.9,0.7);
	\draw [dashed] (5,0.7) -- (5.5,0.7);
	\draw (5.6,0.7) -- (7.6,0.7);
	
	\draw (4.7,-0.7) -- (4.9,-0.7);
	\draw [dashed] (5,-0.7) -- (5.5,-0.7);
	\draw (5.6,-0.7) -- (7.6,-0.7);
	
	\draw [dashed] (4.2,0.5) -- (4.2,-0.5);
	
	\draw (0.4,3.5) -- (0.4,4.2);
	\draw [dashed] (0.4,2.1) -- (0.4,3.5);
	\draw [fill] (0.4,4.2) circle [radius=0.7mm];
	\draw [fill] (0.4,3.5) circle [radius=0.7mm];
	\draw [fill] (0.4,2.1) circle [radius=0.7mm];
	
	\node at (-0.5,4.2) {\footnotesize $\ket{+}_{1,1}$};
	\node at (-0.5,3.5) {\footnotesize $\ket{+}_{2,1}$};
	\node at (-0.5,2.1) {\footnotesize $\ket{+}_{n,m}$};
	\node at (-0.5,1.4) {\footnotesize $\ket{\phi_{1,1}}$};
	\node at (2.35,0.7) {\footnotesize $\ket{\phi_{2,1}}$};
	\node at (4.2,-0.7) {\footnotesize $\ket{\phi_{n,m}}$};
	
	\draw (1.1,4.2) -- (1.1,1.4);
	\draw [fill] (1.1,1.4) circle [radius=0.7mm];
	\draw (3.2,3.5) -- (3.2,0.7);
	\draw [fill] (3.2,0.7) circle [radius=0.7mm];
	\draw (6,2.1) -- (6,-0.7);
	\draw [fill] (6,-0.7) circle [radius=0.7mm];
	
	\draw [fill=white] (0.8,3.9) rectangle (1.4,4.5);
	\node at (1.1,4.2) {\small $R^\dagger_{Z}$};
	\draw [fill=white] (2.9,3.2) rectangle (3.5,3.8);
	\node at (3.2,3.5) {\small $R^\dagger_{Z}$};
	\draw [fill=white] (5.7,1.8) rectangle (6.3,2.4);
	\node at (6.0,2.1) {\small $R^\dagger_Z$};
	
	
	\node [fill=white,draw,rounded rectangle,rounded rectangle left arc=none,minimum width=1cm] (name) at (2.0,4.2) {$X$};
	\node [fill=white,draw,rounded rectangle,rounded rectangle left arc=none,minimum width=1cm] (name) at (4.1,3.5) {$X$};
	\node [fill=white,draw,rounded rectangle,rounded rectangle left arc=none,minimum width=1cm] (name) at (6.9,2.1) {$X$};
	
	\end{tikzpicture}
	\caption{\small Representation of a computation on an $n\times m$ BwS (Figure \ref{fig:BSmain}) in the circuit model. First, the BwS is generated by applying a global entangling operation to $nm$ qubits in the state $\ket{+}$. Next, each qubit is rotated and subsequently measured in the Pauli-$X$ basis (this is equivalent to a measurement in one of the rotated bases $\{|\pm\rangle_\phi\langle\pm|\}$). Since the measurements are performed adaptively, the rotations are represented as controlled operations \cite{FK12}.}
	\label{fig:MBQC1circ1}
\end{figure}
\noindent  state. The qubits are initialized in the state $\ket{+}$ and entangled to their nearest-neighbors \textrm{via} a $cZ$ operation. When a qubit is measured, the rest of the resource state is modified in a way that can equivalently be described by a series of gates in the circuit model. 

An important resource state is the 10-qubit ``brick'', which can be used to generate the so-called ``{brickwork state}'' (BwS; see Figure \ref{fig:BSmain}). The BwS is universal for quantum computation, provided that the measurements - which are performed column-by-column from left to right in one of the bases $\{|\pm\rangle_{\phi}\langle\pm|\}=\{R_Z(\phi)|\pm\rangle\langle\pm|R^\dagger_Z(\phi)\}$, $\phi\in\{0,\pi/4,..,7\pi/4\}$ - are performed adaptively \citep{BFK09}. By ``{adaptive} measurement'', we mean that after the measurement of qubit $(i,j)$, the angles $\phi_{i^{\prime},j^{\prime}}$ of yet-to-be-measured qubits are recomputed as $(-1)^{s_X}\phi_{i^{\prime},j^{\prime}}+s_Z\pi$, where $s_X$ and $s_Z$ are computed on the basis of measurement outcomes of previous qubits (see \cite{DK05} for further details). This can be seen as a ``correction'', in the sense that the dependency of the computation on the measurement outcomes vanishes. Thus, adaptive measurements allow to implement computations in a deterministic way, regardless of the non-deterministic nature of quantum measurements.

In the rest of the paper, the computations on a $n\times m$ BwS will often be described through their corresponding logical circuit (Figure \ref{fig:MBQC1circ1}). In the circuit model representation, a measurement in the basis $\{|\pm\rangle_{\phi}\langle\pm|\}$ is expressed as the rotation $R^{\dagger}_Z(\phi)$ followed by a measurement in the Pauli-$X$ basis $\{|\pm\rangle\langle\pm|\}$. Since the angles are recomputed after every measurement, the rotations are expressed as controlled-$R_Z$ gates. 

To simplify the circuit in Figure \ref{fig:MBQC1circ1}, we notice that the measurement of each physical qubit $(i,j)$ (with $j=1,..,m-1$) breaks the entanglement between qubit $(i,j)$ and the rest of the BwS. At the same time, the state of qubit $(i,j)$ is teleported to qubit $(i,j+1)$ modulo the unitary $HR^\dagger_Z(\phi_{i,j})$ (the measurements are performed adaptively, hence the dependence of the unitary on the measurement outcomes can be omitted). Since every measured qubit is discarded, the outcome of the computation can equivalently be obtained by means of the logical circuit in Figure \ref{fig:BS}. 

Notice that the circuits represented in Figures \ref{fig:MBQC1circ1} and \ref{fig:BS} describe the same computation at two different levels. To highlight this, we will often distinguish between ``physical'' qubits (the qubits belonging to the BwS and processed along the circuit in Figure \ref{fig:MBQC1circ1}) and ``logical'' qubits (the qubits processed along the circuit in Figure \ref{fig:BS}).

\subsection{Cryptographic protocols}
We now introduce some definitions for cryptographic protocols. We define quantum states as belonging to the Hilbert space $\h_{ABC}=\h_A\otimes\h_B\otimes\h_C$, where $A$ and $B$ label Alice and Bob's private registers and $C$ a common register used to move qubits from $A$ to $B$ and {vice-versa}. We denote the trace distance between the states $\rho$ and $\rho'$ as $D(\rho,\rho')=\frac{1}{2}\mathrm{Tr}|\rho-\rho'|$. The symbol $\circ$ represents the ``normal ordered product'' {of maps} - as an example, considering the collection of $q$ maps $\{\E^{(p)}\}_{p=1}^q$ acting on a state $\rho$, we have $\circ_{p=1}^q\E^{(p)}(\rho)=\E^{(q)}\E^{(q-1)}..\textrm{ }\E^{(1)}(\rho)$.

We start by defining our notion of a protocol.
\begin{mydef}~\label{def:delegated}
\textsc{\textbf{[Protocol]}} 
We define a $q$-step protocol on input $\rho_{\mathrm{in}}\in\h_{ABC}$  as a series of maps $\{\E_{ABC}^{(p)}\}_{p=1}^q=\{\E^{(p)}_{AC}\otimes\E^{(p)}_{BC}\}_{p=1}^q$ acting on both Alice's and Bob's registers and on the common register, and such that the output is of the form $\rho_{\mathrm{out}}=\circ_{p=1}^q\E_{ABC}^{(p)}(\rho_{\mathrm{in}})$.\end{mydef}

A protocol is thus a sequence of instructions that define the actions that Alice and Bob need to take along the computation. A crucial requirement for any protocol is that whenever Bob follows Alice's instruction, the outcome obtained in Alice's register is correct. A protocol with this property is said to be ``correct''. We formally define correctness as follows:
\begin{mydef}~\label{def:corr}
\textsc{\textbf{[Correctness]}}
Suppose that Alice wants to apply a CPTP-map $\F_A$ to an input {$\rho_{\mathrm{in}}\in\h_{ABC}$}. A $q$-step protocol $\{\E_{ABC}^{(p)}\}_{p=1}^q=\{\E^{(p)}_{AC}\otimes\E^{(p)}_{BC}\}_{p=1}^q$ on input $\rho_{\mathrm{in}}$ is correct if
\begin{equation}
D\bigg(\mathrm{Tr}_{BC}\big[\F_A\big(\rho_{\mathrm{in}}\big)\big],\mathrm{Tr}_{BC}\big[{\rho}_{\mathrm{out}}\big]\bigg)=0\textrm{ ,}
\end{equation}
where ${\rho}_{\mathrm{out}}=\circ_{p=1}^{q}\big(\E^{(p)}_{AC}\otimes{\E}^{(p)}_{BC}\big)\big(\rho_{\textrm{in}}\big)$.
\end{mydef}
In cryptographic protocols, a natural requirement is that no information is leaked to Bob, so that privacy is guaranteed to Alice. Nevertheless, it must be noticed that Bob is not forced to follow Alice's instructions. In fact, he might deviate from them and try to fool Alice

\onecolumngrid
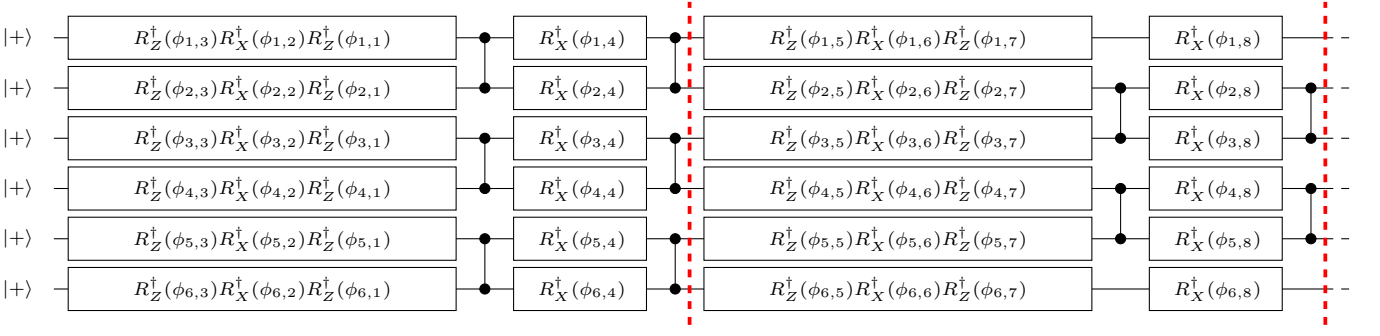
\begin{figure}[H]
\begin{tikzpicture}[scale=0.955, every node/.style={scale=1}]


\node at (-0.5,0.0) {\scriptsize $\ket{+}$};
\node at (-0.5,0.7) {\scriptsize $\ket{+}$};
\node at (-0.5,1.4) {\scriptsize $\ket{+}$};
\node at (-0.5,2.1) {\scriptsize $\ket{+}$};
\node at (-0.5,2.8) {\scriptsize $\ket{+}$};
\node at (-0.5,3.5) {\scriptsize $\ket{+}$};

\draw (0.0,0.0) -- (8.65+8.65+0.2+0.2,0.0);
\draw (0.0,0.7) -- (8.65+8.65+0.2+0.2,0.7);
\draw (0.0,1.4) -- (8.65+8.65+0.2+0.2,1.4);
\draw (0.0,2.1) -- (8.65+8.65+0.2+0.2,2.1);
\draw (0.0,2.8) -- (8.65+8.65+0.2+0.2,2.8);
\draw (0.0,3.5) -- (8.65+8.65+0.2+0.2,3.5);

\draw [dashed] (8.65+8.65+0.2+0.2,0.0) -- (8.65+8.65+0.2+0.6,0.0);
\draw [dashed] (8.65+8.65+0.2+0.2,0.7) -- (8.65+8.65+0.2+0.6,0.7);
\draw [dashed] (8.65+8.65+0.2+0.2,1.4) -- (8.65+8.65+0.2+0.6,1.4);
\draw [dashed] (8.65+8.65+0.2+0.2,2.1) -- (8.65+8.65+0.2+0.6,2.1);
\draw [dashed] (8.65+8.65+0.2+0.2,2.8) -- (8.65+8.65+0.2+0.6,2.8);
\draw [dashed] (8.65+8.65+0.2+0.2,3.5) -- (8.65+8.65+0.2+0.6,3.5);

\draw [fill=white] (0.2,-0.3) rectangle (5.6,0.3);
\draw [fill=white] (0.2,0.4) rectangle (5.6,1.0);
\draw [fill=white] (0.2,1.1) rectangle (5.6,1.7);
\draw [fill=white] (0.2,1.8) rectangle (5.6,2.4);
\draw [fill=white] (0.2,2.5) rectangle (5.6,3.1);
\draw [fill=white] (0.2,3.2) rectangle (5.6,3.8);

\node at (2.9,0.0) {\scriptsize $R^\dagger_Z{(\phi_{6,3})}R^\dagger_X(\phi_{6,2})R^\dagger_Z(\phi_{6,1})$};
\node at (2.9,0.7) {\scriptsize $R^\dagger_Z{(\phi_{5,3})}R^\dagger_X(\phi_{5,2})R^\dagger_Z(\phi_{5,1})$};
\node at (2.9,1.4) {\scriptsize $R^\dagger_Z{(\phi_{4,3})}R^\dagger_X(\phi_{4,2})R^\dagger_Z(\phi_{4,1})$};
\node at (2.9,2.1) {\scriptsize $R^\dagger_Z{(\phi_{3,3})}R^\dagger_X(\phi_{3,2})R^\dagger_Z(\phi_{3,1})$};
\node at (2.9,2.8) {\scriptsize $R^\dagger_Z{(\phi_{2,3})}R^\dagger_X(\phi_{2,2})R^\dagger_Z(\phi_{2,1})$};
\node at (2.9,3.5) {\scriptsize $R^\dagger_Z{(\phi_{1,3})}R^\dagger_X(\phi_{1,2})R^\dagger_Z(\phi_{1,1})$};

\draw [fill=black] (6,0.0) circle [radius=0.07cm];
\draw [fill=black] (6,0.7) circle [radius=0.07cm];
\draw [fill=black] (6,1.4) circle [radius=0.07cm];
\draw [fill=black] (6,2.1) circle [radius=0.07cm];
\draw [fill=black] (6,2.8) circle [radius=0.07cm];
\draw [fill=black] (6,3.5) circle [radius=0.07cm];

\draw (6,0.0) -- (6,0.7);
\draw (6,1.4) -- (6,2.1);
\draw (6,2.8) -- (6,3.5);

\draw [fill=white] (6.4,-0.3) rectangle (8.25,0.3);
\draw [fill=white] (6.4,0.4) rectangle (8.25,1.0);
\draw [fill=white] (6.4,1.1) rectangle (8.25,1.7);
\draw [fill=white] (6.4,1.8) rectangle (8.25,2.4);
\draw [fill=white] (6.4,2.5) rectangle (8.25,3.1);
\draw [fill=white] (6.4,3.2) rectangle (8.25,3.8);

\node at (7.35,0.0) {\scriptsize $R^\dagger_X{(\phi_{6,4})}$};
\node at (7.35,0.7) {\scriptsize $R^\dagger_X{(\phi_{5,4})}$};
\node at (7.35,1.4) {\scriptsize $R^\dagger_X{(\phi_{4,4})}$};
\node at (7.35,2.1) {\scriptsize $R^\dagger_X{(\phi_{3,4})}$};
\node at (7.35,2.8) {\scriptsize $R^\dagger_X{(\phi_{2,4})}$};
\node at (7.35,3.5) {\scriptsize $R^\dagger_X{(\phi_{1,4})}$};

\draw [fill=black] (8.65,0.0) circle [radius=0.07cm];
\draw [fill=black] (8.65,0.7) circle [radius=0.07cm];
\draw [fill=black] (8.65,1.4) circle [radius=0.07cm];
\draw [fill=black] (8.65,2.1) circle [radius=0.07cm];
\draw [fill=black] (8.65,2.8) circle [radius=0.07cm];
\draw [fill=black] (8.65,3.5) circle [radius=0.07cm];

\draw (8.65,0.0) -- (8.65,0.7);
\draw (8.65,1.4) -- (8.65,2.1);
\draw (8.65,2.8) -- (8.65,3.5);

\draw [fill=white] (0.4+8.65,-0.3) rectangle (5.6+8.65+0.2,0.3);

\draw [red, dashed, line width=0.5mm] (8.65+8.65+0.4,-0.5) -- (8.65+8.65+0.4,4.0);
\draw [red, dashed, line width=0.5mm] (8.65+0.2,-0.5) -- (8.65+0.2,4.0);
\draw [fill=white] (0.4+8.65,0.4) rectangle (5.6+8.65+0.2,1.0);
\draw [fill=white] (0.4+8.65,1.1) rectangle (5.6+8.65+0.2,1.7);
\draw [fill=white] (0.4+8.65,1.8) rectangle (5.6+8.65+0.2,2.4);
\draw [fill=white] (0.4+8.65,2.5) rectangle (5.6+8.65+0.2,3.1);
\draw [fill=white] (0.4+8.65,3.2) rectangle (5.6+8.65+0.2,3.8);

\node at (2.9+0.2+8.65,0.0) {\scriptsize $R^\dagger_Z{(\phi_{6,5})}R^\dagger_X(\phi_{6,6})R^\dagger_Z(\phi_{6,7})$};
\node at (2.9+8.65+0.2,0.7) {\scriptsize $R^\dagger_Z{(\phi_{5,5})}R^\dagger_X(\phi_{5,6})R^\dagger_Z(\phi_{5,7})$};
\node at (2.9+8.65+0.2,1.4) {\scriptsize $R^\dagger_Z{(\phi_{4,5})}R^\dagger_X(\phi_{4,6})R^\dagger_Z(\phi_{4,7})$};
\node at (2.9+8.65+0.2,2.1) {\scriptsize $R^\dagger_Z{(\phi_{3,5})}R^\dagger_X(\phi_{3,6})R^\dagger_Z(\phi_{3,7})$};
\node at (2.9+8.65+0.2,2.8) {\scriptsize $R^\dagger_Z{(\phi_{2,5})}R^\dagger_X(\phi_{2,6})R^\dagger_Z(\phi_{2,7})$};
\node at (2.9+8.65+0.2,3.5) {\scriptsize $R^\dagger_Z{(\phi_{1,5})}R^\dagger_X(\phi_{1,6})R^\dagger_Z(\phi_{1,7})$};

\draw [fill=black] (6+8.65+0.2,0.7) circle [radius=0.07cm];
\draw [fill=black] (6+8.65+0.2,1.4) circle [radius=0.07cm];
\draw [fill=black] (6+8.65+0.2,2.1) circle [radius=0.07cm];
\draw [fill=black] (6+8.65+0.2,2.8) circle [radius=0.07cm];

\draw (6+8.65+0.2,0.7) -- (6+8.65+0.2,1.4);
\draw (6+8.65+0.2,2.1) -- (6+8.65+0.2,2.8);

\draw [fill=white] (6+8.65+0.2+0.4,-0.3) rectangle (8.25+8.65+0.2,0.3);
\draw [fill=white] (6+8.65+0.2+0.4,0.4) rectangle (8.25+8.65+0.2,1.0);
\draw [fill=white] (6+8.65+0.2+0.4,1.1) rectangle (8.25+8.65+0.2,1.7);
\draw [fill=white] (6+8.65+0.2+0.4,1.8) rectangle (8.25+8.65+0.2,2.4);
\draw [fill=white] (6+8.65+0.2+0.4,2.5) rectangle (8.25+8.65+0.2,3.1);
\draw [fill=white] (6+8.65+0.2+0.4,3.2) rectangle (8.25+8.65+0.2,3.8);

\node at (7.35+8.65+0.2,0.0) {\scriptsize $R^\dagger_X{(\phi_{6,8})}$};
\node at (7.35+8.65+0.2,0.7) {\scriptsize $R^\dagger_X{(\phi_{5,8})}$};
\node at (7.35+8.65+0.2,1.4) {\scriptsize $R^\dagger_X{(\phi_{4,8})}$};
\node at (7.35+8.65+0.2,2.1) {\scriptsize $R^\dagger_X{(\phi_{3,8})}$};
\node at (7.35+8.65+0.2,2.8) {\scriptsize $R^\dagger_X{(\phi_{2,8})}$};
\node at (7.35+8.65+0.2,3.5) {\scriptsize $R^\dagger_X{(\phi_{1,8})}$};

\draw [fill=black] (8.65+8.65+0.2,0.7) circle [radius=0.07cm];
\draw [fill=black] (8.65+8.65+0.2,1.4) circle [radius=0.07cm];
\draw [fill=black] (8.65+8.65+0.2,2.1) circle [radius=0.07cm];
\draw [fill=black] (8.65+8.65+0.2,2.8) circle [radius=0.07cm];

\draw (8.65+8.65+0.2,0.7) -- (8.65+8.65+0.2,1.4);
\draw (8.65+8.65+0.2,2.1) -- (8.65+8.65+0.2,2.8);

\end{tikzpicture}
\caption{\small Logical circuit associated to a computation on a six-row BwS. Red dashed lines separate operations implemented within different tapes of the BwS.}
\label{fig:BS}
\end{figure}
\twocolumngrid

\noindent (in this case, we say that Bob is ``dishonest''). Thus, protocols must ensure that Bob can not increase his knowledge by cheating, where by cheating we mean that along the protocol run Bob applies some dishonest collection of maps $\{\widetilde{\E}^{(p)}_{BC}\}_{p=1}^q$ instead of the ``honest'' collection $\{{\E}^{(p)}_{BC}\}_{p=1}^q$. This property is called ``blindness'' and is defined as follows:
\begin{mydef}~\label{def:blind}
\textsc{\textbf{[Blindness]}}
Suppose that Alice and Bob jointly run a $q$-step protocol $\{\E_{ABC}^{(p)}\}_{p=1}^q=\{\E^{(p)}_{AC}\otimes\E^{(p)}_{BC}\}_{p=1}^q$ on input $\rho_{\mathrm{in}}\in\h_{ABC}$. The protocol is blind if, for any set of maps $\{\widetilde{\E}^{(p)}_{BC}\}_{p=1}^q$ acting on Bob's register $B$ and on the common register $C$, the state $\mathrm{Tr}_{\mathrm{AC}}[\circ_{p=1}^q\{\E^{(p)}_{AC}\otimes\widetilde\E^{(p)}_{BC}\}(\rho_{\mathrm{in}})]$ leaks at most a constant function of the input.
\end{mydef}

\noindent A typical example of constant function of the input leaked by a protocol is the size of the computation (the number of qubits and gates used). This does not depend on the information that Alice is interested to hide (the state of the input and of the output and the gates used) and can be leaked. Thus, at the end of a dishonest run of a blind protocol, Bob obtains as much information about the computation as after an honest run.

Another important property for a protocol is verifiability, namely the possibility of verifying whether the output of the computation is correct or wrong. In our Protocols, Alice verifies the computation by checking the outcome of various deterministic quantum computations (the ``traps''). If at the end of the computation the traps are found in a specific state (here denoted by $|\textup{acc}\rangle$), then Alice accepts, otherwise she rejects. With this in mind, denoting the input state $\rho_{\textrm{in}}$ as a tensor product between the input state $\rho_{\textup{in}}^{\textup{comp}}$ of the actual computation and the input state $|\textrm{trap}\rangle$ of the traps, we define verifiability as follows  \cite{DFPR14,GKK17}:

\begin{mydef}~\label{def:ver}
{\textsc{\textbf{[Verifiability]}}
Suppose that Alice and Bob jointly run a $q$-step protocol $\{\E_{ABC}^{(p)}\}_{p=1}^q=\{\E^{(p)}_{AC}\otimes\E^{(p)}_{BC}\}_{p=1}^q$ on input $\rho_{\textup{in}}=\rho_{\textup{in}}^{\textup{comp}}\otimes|\textup{trap}\rangle\langle\textup{trap}|\in\h_{ABC}$. The protocol is ``$\delta$-complete'' if
\begin{align*}
D\bigg(\mathrm{Tr}_{BC}\big[{\rho}_{\mathrm{out}}\big]\textup{, }\mathrm{Tr}_{BC}\big[&\textrm{ }{\rho}_{\mathrm{out}}^{\mathrm{comp}}\otimes|\textup{acc}\rangle\langle\textup{acc}|\textrm{ }\big]\bigg)\leq1-\delta\textrm{ ,}
\end{align*}
\noindent where $0\leq\delta\leq1$, ${\rho}_{\mathrm{out}}=\circ_{p=1}^q\E_{ABC}^{(p)}({\rho}_{\mathrm{in}})$, ${\rho}_{\mathrm{out}}^{\mathrm{comp}}=\textup{Tr}_{\textup{trap}}({\rho}_{\mathrm{out}})$ is the honest outcome of the actual computation and $|\textup{acc}\rangle\langle\textup{acc}|=\textup{Tr}_{\textup{comp}}({\rho}_{\mathrm{out}})$ is a fixed state. If $\delta=1$, then we say that the protocol is ``complete''.

The protocol is ``$\varepsilon$-sound'' if, for any set of maps $\{\widetilde{\E}^{(p)}_{BC}\}_{p=1}^q$ acting on Bob's register $B$ and on the common register $C$, the output $\widetilde{\rho}_{\textup{out}}=\circ_p\big(\E^{(p)}_{AC}\otimes\widetilde{\E}^{(p)}_{BC}\big)\big(\rho_{\mathrm{in}}\big)$ is such that}
\begin{align*}
D\bigg(\mathrm{Tr}_{BC}\big[\widetilde{\rho}_{\mathrm{out}}\big]\textup{, }\mathrm{Tr}_{BC}\big[&r\textrm{ }{\rho}_{\mathrm{out}}^{\mathrm{comp}}\otimes|\textup{acc}\rangle\langle\textup{acc}|\textup{ +}\cr
&(1-r)\textrm{ }\widetilde{\rho}_{\mathrm{out}}^{\textrm{ }\mathrm{comp}}\otimes|\textup{rej}\rangle\langle\textup{rej}|\big]\bigg)\leq\varepsilon\textrm{ ,}
\end{align*}
\noindent where $0\leq\varepsilon\leq1$ is called ``soundness'', $0\leq r \leq1$, $\widetilde{\rho}_{\mathrm{out}}^{\textrm{ }\mathrm{comp}}$ is an arbitrary state and $|\textup{rej}\rangle$ is orthogonal to $|\textup{acc}\rangle$.  If the protocol is both $\varepsilon$-sound and $\delta$-complete, then we say that it is ``$(\varepsilon,\delta)$-verifiable''. If a protocol is $(\varepsilon,\delta)$-verifiable with $\delta=1$, we say that the protocol is ``$\varepsilon$-verifiable''.  
\end{mydef}
Thus, a protocol is verifiable if with high probability, independently of Bob's behaviour, either the computation is correct and Alice accepts or the computation is rejected. As we will see, both Protocol \hyperlink{pr:pr1}{1} and \hyperlink{pr:pr2}{2} are complete with $\delta=1$.

\section{Related Works}~\label{sec:comparison}

Here, we provide a comparison between our schemes and the existing protocols. We refer the reader to \cite{GKK17} for a recent detailed review of quantum verification.

Protocol \hyperlink{pr:pr1}{1} (Section \ref{sec:verifSP}) belongs to the class of ``prepare-and-send'' protocols, such as \cite{FK12,B15,KD17,KW17}. In these schemes, Alice prepares single qubits from a finite set of states and sends them to Bob, who blindly performs the rest of the computation.
Compared to our protocol, the other schemes in this class are more expensive in terms of resources while achieving (in the best case) the same overhead. As an example, in Fitzsimons and Kashefi's protocol \cite{FK12} Alice needs to prepare single qubits in the state $|+\rangle_{\theta}$, $\theta\in\{0,\pi/4,..,7\pi/4\}$, as well as in the ``dummy'' states $\ket{0}$ and $\ket{1}$. The overhead of Fitzsimons and Kashefi's protocol is quadratic in the size of the computation, although it was subsequently made linear \cite{KD17,KW17}. Compared to our Protocol \hyperlink{pr:pr1}1, Fitzsimons and Kashefi's protocol requires trusted state preparation of more types  of states (ten instead of eight) while achieving the same overhead. Similarly, in Broadbent's protocol \cite{B15}, Alice needs to be able to generate qubits in the states $\ket{0}$ and $\ket{+}$ and tp apply the gates $X,Z,S$ and $T$. Overall, Broadbent's protocol requires the same amount of resources as the Fitzsimons and Kashefi's one and its overhead is linear in the input. Other protocols in the ``prepare-and-send'' class are that by Aharonov et al. \cite{ABE08}. In terms of resources, Aharonov's schemes are more demanding. Alice holds a multi-qubit register and in one of these protocols, she needs to apply gates from the Clifford group and subsequently make measurements, while in the other one she need to apply to her quantum inputs a sophisticated encoding inspired to a polynomial Calderbank-Shor-Steane quantum error correcting codes \cite{AB99}.

Protocol \hyperlink{pr:pr2}{2} (Section \ref{sec:verifM}) belongs to the class of ``receive-and-measure'' protocols, such as the ``measurement-only'' scheme \cite{HM15} and the ``post-hoc'' verification techniques \cite{MF16,HKSE17}. In the measurement-only protocol, Alice needs to measure the observables $X\textrm{, }Y\textrm{, }Z\textrm{, }(X\pm Y)/\sqrt{2}$. Compared to our Protocol \hyperlink{pr:pr2}{2}, this protocol requires noise-free measurement of more observables (five instead of four). Also, its soundness $\varepsilon$ is $O(v+1)$, while the soundness of the protocol presented in \cite{HM15} goes as $1/\sqrt{v+1}$. However, it has to be mentioned that in Ref. \cite{HM15}, the physical qubits are discarded after being measured, while Protocol \hyperlink{pr:pr2}{2} relies on the assumption that qubits can be reused after the measurement has been done. We leave as an open question the possibility of adapting Protocol \hyperlink{pr:pr2}{2} to the more general scenario where the qubits can not be reused after being measured.

Differently from the protocols mentioned so far, the post-hoc protocols are schemes with a single round of communication between the verifier and the prover where verification is performed after the computation has been carried out. In these protocols, Alice is solely required to make measurements in the Pauli-$Z$ and Pauli-$X$ bases. However, post-hoc protocols are not blind\cite{GKK17}. Also, their overhead is quadratic in the input.

Finally, a third class of protocols is the class of ``entanglement-based'' schemes. In these works, the computation is carried out on two entangled and spatially-separated servers (the provers). The verifier is ``classical'', in the sense that does not require any resource. Both provers are untrusted, but their spatial separation prevents the provers from communicating with each other and agreeing on a specific cheating strategy after the protocol is started. The computation is verified by means of CHSH games \cite{RUV12,GKW15,HPF15}, self-testing techniques \cite{M16,HH16} or post-hoc protocols \cite{NV16}. {The overhead of entanglement-based protocols is higher than that of the above mentioned schemes \cite{GKK17}}, although the communication is only classical. Notice the different perspective of the previous classes of protocols and the entanglement-based one: in the first case, the verifier is convinced that she can trust some quantum device, namely a state generator or a measurement device; in the second case, instead, the verifier does not trust any device, but presumes that the provers do not communicate.

\section{Verification for trusted state preparation}~\label{sec:verifSP}

In this section, we present a protocol to verify the correctness of the outcome of a universal computation (the ``target'') run on a $n\times m$ BwS, modulo a trust assumption on the preparation of qubits in the set of states $\ket{+}_\theta=(\ket{0}+e^{i\theta}\ket{1})/\sqrt{2}$, $\theta\in\{0,\pi/4,..,7\pi/4\}$. We first describe the main ideas behind the Protocol (Subsection \ref{subsec:desc1}) and analyse its overhead (Subsection \ref{subsec:comp1}). Finally, we show that the protocol is correct, blind and verifiable (Subsection \ref{subsec:proo1}).

\subsection{Description of the Protocol}~\label{subsec:desc1}
In our protocol, the correctness of the target computation is verified by checking the outcome of several other computations that can be efficiently simulated classically, the ``traps''. The trap computations are chosen so that their outcome is deterministic and can thus be used as {witnesses}: if all of their outcomes correspond to the expected ones, one assumes that they have not been affected by errors, and concludes that the target computation itself has been carried out correctly. Otherwise, there is no guarantee that the target computation is correct, hence it has to be rejected. 

The specific computation implemented within each trap is chosen at random from two classes, respectively denoted as ``R-traps'' (or ``rotation traps'') and ``C-traps'' (or ``CNOT traps''). The physical qubits of the BwS implementing R-traps are assigned measurement angles as described in Sub-protocol \hyperlink{pr:spr1}{1.1}. In particular, for any tape $y\in(1,..,w)$ (where by ``tape'' we mean vertical layers of the BwS composed of four columns of qubits, see Figure \ref{fig:BSmain}) and for any row $i\in(1,..,n)$, a coin is flipped. As illustrated in Figure \ref{fig:rtraps}, if the coin outputs 0 (respectively 1), the physical qubits belonging to row $i$ and tape $y$ are measured so that the logical qubit corresponding to the $i$th row undergoes a Hadamard (respectively a rotation). Logical rotations are performed either on the $XY$-plane or on the $ZY$-plane of the Bloch sphere. Sub-protocol \hyperlink{pr:spr1}{1.1} combines Hadamards and rotations so that any logi-

\newpage
\begin{small}
\noindent\makebox[\linewidth]{\rule{8.8cm}{0.4pt}}
\textbf{Sub-protocol \hypertarget{pr:spr1}{1.1}} R-trap.\\
\noindent\makebox[\linewidth]{\rule{8.8cm}{0.4pt}}
\textbf{Input}: The size of the computation: $n\times m$.\vspace{0.4cm} 

\noindent\textbf{0. Preliminary operations.}\\
Define the $n \times m$ $\{\phi_{i,j}\}$, where $\phi_{i,j}=0\textrm{ }\forall\textrm{ }i,j$.\\

\noindent\textbf{1. Assigning measurement angles.}\\
For every row $i=1,..,n$, set counter $count=0$. Next, 
\begin{itemize} 
			\item[1.1] For each tape $y=1,..,w-1$: \\flip a coin and obtain outcome $c_y=\{0,1\}$. 
	\begin{itemize}

	\item If $c_y=0$, set $count=count\oplus1$ and
	\begin{eqnarray*}
		\begin{tabular}{lllllllllllllll}
			$\phi_{i,4y-3}$=$\pi/2$&,&
			$\phi_{i,4y-2}$=$\pi/2$&,&
			$\phi_{i,4y-1}$=$\pi/2$
		\end{tabular}
	\end{eqnarray*} 
	
	\item If $c_y=1$ and $count=0$, set
	\begin{eqnarray*}
		\begin{tabular}{llllllllll}
			$\phi_{i,4y-3}$=$k_{i,4y-3}\pi/4$&,&
			$\phi_{i,4y-1}$=$k_{i,4y-1}\pi/4$
		\end{tabular}
	\end{eqnarray*}
	where any $k_{i,j}$ is chosen at random in $\{0,1,..,7\}$. Next, set $$\phi_{i,m}=mod(\phi_{i,m}+\phi_{i,4y-3}+\phi_{i,4y-1},2\pi)$$

\item If $c_y=1$ and $count=1$, set
\begin{eqnarray*}
\begin{tabular}{llllllllll}
$\phi_{i,4y-2}$=$k_{i,4y-2}\pi/4$
\end{tabular}
\end{eqnarray*}
where any $k_{i,j}$ is chosen at random in $\{0,1,..,7\}$. Next, set $$\phi_{i,m}=mod(\phi_{i,m}+\phi_{i,4y-2},2\pi)$$

\end{itemize}

\item[1.2] For vertical tape $y=w$:
\begin{itemize}
	\item if $count=0$, set
	\begin{eqnarray*}
		\begin{tabular}{llllllllll}
			$\phi_{i,4y-3}$=$k_{i,4y-3}\pi/4$&,&
			$\phi_{i,4y-1}$=$k_{i,4y-1}\pi/4$
		\end{tabular}
	\end{eqnarray*}
	where any $k_{i,j}$ is chosen at random in $\{0,1,..,7\}$. Next, set $$\phi_{i,m}=mod(\phi_{i,m}+\phi_{i,4y-3},+\phi_{i,4y-1},2\pi)$$
	
	\item if $count=1$, set
	\begin{eqnarray*}
		\begin{tabular}{lllllllllllllll}
			$\phi_{i,4y-3}$=$\pi/2$&,&
			$\phi_{i,4y-2}$=$\pi/2$&,&
			$\phi_{i,4y-1}$=$\pi/2$
		\end{tabular}
	\end{eqnarray*}
\end{itemize}

\end{itemize} 

\noindent\textbf{Output:} The set of measurement angles $\{\phi_{i,j}\}$.\\
\noindent\makebox[\linewidth]{\rule{8.8cm}{0.4pt}}
\\
\end{small}

\vspace{0.3cm}

\begin{figure}[H]
\centering
\begin{tikzpicture}
\node at (0.4,4.1) {\small Coin outputs 0};
\node at (0.4,2.55) {\small Coin outputs 1};

\node at (7.5,4.1) {$H$};
\node at (7.5,3) {$R_Z$};
\node at (7.5,2.45) {\small or};
\node at (7.5,1.9) {$R_X$};

\draw [dashed, red, line width=0.5mm] (6,4.4) -- (6,1.5);
\draw [dashed] (7,4.4) -- (7,1.5);

\draw (2.0,4.1) -- (6.5,4.1);
\draw [fill=white] (2.5,4.1) circle [radius=2mm];
\draw [fill=white] (3.5,4.1) circle [radius=2mm];
\draw [fill=white] (4.5,4.1) circle [radius=2mm];
\draw [fill=white] (5.5,4.1) circle [radius=2mm];
\draw [fill=white] (6.5,4.1) circle [radius=2mm];
\node at (2.7,4.45) {\normalsize $\frac{\pi}{2}$};
\node at (3.7,4.45) {\normalsize $\frac{\pi}{2}$};
\node at (4.7,4.45) {\normalsize $\frac{\pi}{2}$};
\node at (5.7,4.45) {\small $0$};

\draw (2.0,3) -- (6.5,3);
\draw [fill=white] (2.5,3) circle [radius=2mm];
\draw [fill=white] (3.5,3) circle [radius=2mm];
\draw [fill=white] (4.5,3) circle [radius=2mm];
\draw [fill=white] (5.5,3) circle [radius=2mm];
\draw [fill=white] (6.5,3) circle [radius=2mm];
\node at (2.7,3.35) { $\phi_{i,4y-3}$};
\node at (3.7,3.35) { $0$};
\node at (4.7,3.35) { $\phi_{i,4y-1}$};
\node at (5.7,3.35) { $0$};

\draw (2.0,1.9) -- (6.5,1.9);
\draw [fill=white] (2.5,1.9) circle [radius=2mm];
\draw [fill=white] (3.5,1.9) circle [radius=2mm];
\draw [fill=white] (4.5,1.9) circle [radius=2mm];
\draw [fill=white] (5.5,1.9) circle [radius=2mm];
\draw [fill=white] (6.5,1.9) circle [radius=2mm];
\node at (2.7,2.25) { $0$};
\node at (3.7,2.25) {$\phi_{i,4y-2}$};
\node at (4.7,2.25) { $0$};
\node at (5.7,2.25) { $0$};
\draw [decorate,decoration={brace,amplitude=10pt},xshift=-0pt,yshift=0pt] (2.0,1.7) -- (2.0,3.4) node [black,midway,xshift=-0.6cm] {};

\end{tikzpicture}
\caption{\small Sub-protocol 1.1. For any tape $y$ and for any row $i$, a coin is flipped. If it outputs 0 - respectively 1 -, qubits $(i,4y-3)$, $(i,4y-2)$ $(i,4y-1)$ and $(i,4y)$ are measured so that logical qubit $i$ undergoes a Hadamard (circuit on the top) - respectively a rotation (circuits on the bottom; angles $\phi_{i,j}\in\{0,\pi/4,..,7\pi/4\}$ are chosen at random). Hadamards and rotations are combined so that overall, qubit $i$ undergoes a $R_Z$-gate by random angle from the set $\{0,\pi/4,..,7\pi/4\}$.}
\label{fig:rtraps}
\end{figure}
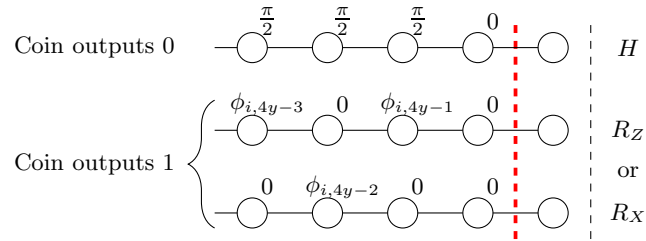

\newpage
\begin{small}

\noindent\makebox[\linewidth]{\rule{8.8cm}{0.4pt}}
\textbf{Sub-protocol \hypertarget{pr:spr2}{1.2}} C-trap.\\
\noindent\makebox[\linewidth]{\rule{8.8cm}{0.4pt}}
\textbf{Input}: Size of the computation: $n\times m$.\vspace{0.4cm}

\noindent\textbf{0. Preliminary operations.}\\
Define the $n \times m$ $\{\phi_{i,j}\}$, where $\phi_{i,j}=0\textrm{ }\forall\textrm{ }i,j$.\\

\noindent\textbf{1. Assigning measurement angles.}\\
For each row $i=1,..,n$:
\begin{itemize}
\item[1.1] Flip a coin and obtain $c_i\in\{0,1\}$. If $c_i=0$, set $\phi_{i,1}=\pi/2$, otherwise do nothing. 
\item[1.2] For each tape $y=1,..,w$:\\
If $mod(i+y,2)=0$, flip a coin and obtain $c_{i,y}\in\{0,1\}$. Then,
\begin{itemize}
\item[$-$]If $c_{i,y}=0$, set
\begin{eqnarray*}
\begin{tabular}{ccccccccc}
$\phi_{2i-1,4y-1}$=$\pi/2$&,&
$\phi_{2i,4y-2}$=$\pi/2$&,&
$\phi_{2i,4y}$=$-\pi/2$\cr
\end{tabular}
\end{eqnarray*}
and $\phi_{i,m}=mod(\phi_{i,m}+\phi_{i+1,m},2\pi)$.
\item[$-$]If $c_{i,y}=1$, set
\begin{eqnarray*}
\begin{tabular}{llrlllllll}
$\phi_{2i+1,4y-1}$=$\pi/2$&,&
$\phi_{2i-1,4y-2}$=$\pi/2$&,&
$\phi_{2i-1,4y}$=$-\pi/2$
\end{tabular}
\end{eqnarray*}
and $\phi_{i+1,m}=mod(\phi_{i+1,m}+\phi_{i,m},2\pi)$.
\end{itemize}

\end{itemize}

\noindent\textbf{Output:} The set of measurement angles $\{\phi_{i,j}\}$.\\
\noindent\makebox[\linewidth]{\rule{8.8cm}{0.4pt}}

\end{small}

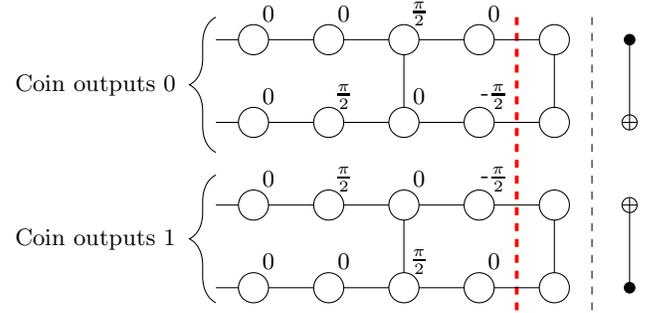
\begin{figure}[H]
	\centering
	\begin{tikzpicture}
	\node at (0.4,4.6) {\small Coin outputs 0};
	\node at (0.4,2.55) {\small Coin outputs 1};
	
	\draw (7.5,5.2) -- (7.5,4.0);
	\draw (7.4,4.1) -- (7.6,4.1);
	\draw [] (7.5,4.1) circle [radius=0.1cm];
	\draw [fill=black] (7.5,5.2) circle [radius=0.07cm];
	
	\draw (7.5,1.9) -- (7.5,3.1);
	\draw (7.4,3) -- (7.6,3);
	\draw [] (7.5,3) circle [radius=0.1cm];
	\draw [fill=black] (7.5,1.9) circle [radius=0.07cm];
	
	\draw [dashed, red, line width=0.5mm] (6,5.5) -- (6,1.6);
	\draw [dashed] (7,5.5) -- (7,1.6);
	
	\draw (2.0,5.2) -- (6.5,5.2);
	\draw (4.5,5.2) -- (4.5,4.1);
	\draw (6.5,5.2) -- (6.5,4.1);
    \draw [fill=white] (2.5,5.2) circle [radius=2mm];
	\draw [fill=white] (3.5,5.2) circle [radius=2mm];
	\draw [fill=white] (4.5,5.2) circle [radius=2mm];
	\draw [fill=white] (5.5,5.2) circle [radius=2mm];
	\draw [fill=white] (6.5,5.2) circle [radius=2mm];
	\node at (2.7,5.55) {\small $0$};
	\node at (3.7,5.55) {\small $0$};
	\node at (4.7,5.55) {\small $\frac{\pi}{2}$};
	\node at (5.7,5.55) {\small $0$};

	\draw (2.0,4.1) -- (6.5,4.1);
	\draw [fill=white] (2.5,4.1) circle [radius=2mm];
	\draw [fill=white] (3.5,4.1) circle [radius=2mm];
	\draw [fill=white] (4.5,4.1) circle [radius=2mm];
	\draw [fill=white] (5.5,4.1) circle [radius=2mm];
	\draw [fill=white] (6.5,4.1) circle [radius=2mm];
	\node at (2.7,4.45) {\small $0$};
	\node at (3.7,4.45) {\small $\frac{\pi}{2}$};
	\node at (4.7,4.45) {\small $0$};
	\node at (5.7,4.45) {\small -$\frac{\pi}{2}$};
	
	\draw (2.0,3) -- (6.5,3);
	\draw (4.5,3) -- (4.5,1.9);
	\draw (6.5,3) -- (6.5,1.9);
	\draw [fill=white] (2.5,3) circle [radius=2mm];
	\draw [fill=white] (3.5,3) circle [radius=2mm];
	\draw [fill=white] (4.5,3) circle [radius=2mm];
	\draw [fill=white] (5.5,3) circle [radius=2mm];
	\draw [fill=white] (6.5,3) circle [radius=2mm];
	\node at (2.7,3.35) {\small $0$};
	\node at (3.7,3.35) {\small $\frac{\pi}{2}$};
	\node at (4.7,3.35) {\small $0$};
	\node at (5.7,3.35) {\small -$\frac{\pi}{2}$};
	
	\draw (2.0,1.9) -- (6.5,1.9);
	\draw [fill=white] (2.5,1.9) circle [radius=2mm];
	\draw [fill=white] (3.5,1.9) circle [radius=2mm];
	\draw [fill=white] (4.5,1.9) circle [radius=2mm];
	\draw [fill=white] (5.5,1.9) circle [radius=2mm];
	\draw [fill=white] (6.5,1.9) circle [radius=2mm];
	\node at (2.7,2.25) {\small $0$};
	\node at (3.7,2.25) {\small $0$};
	\node at (4.7,2.25) {\small $\frac{\pi}{2}$};
	\node at (5.7,2.25) {\small $0$};
	\draw [decorate,decoration={brace,amplitude=10pt},xshift=-0pt,yshift=0pt] (2.0,1.7) -- (2.0,3.4) node [black,midway,xshift=-0.6cm] {};
	\draw [decorate,decoration={brace,amplitude=10pt},xshift=-0pt,yshift=0pt] (2.0,3.7) -- (2.0,5.5) node [black,midway,xshift=-0.6cm] {};
	
	\end{tikzpicture}
	\caption{\small Sub-protocol 1.2. For any 10-qubit brick involving rows $i$ and $i+1$, a coin is flipped. If it outputs 0 (respectively 1), physical qubits composing the brick are measured so that logical qubits $i$ and $i+1$ undergo a CNOT, qubit $i$ being the control and qubit $i+1$ being the target (respectively qubit $i$ being the target and qubit $i+1$ being the control).}
	\label{fig:ctraps}
\end{figure}
\vspace{0.3cm}

\noindent cal qubit $i$ is subject to an overall $XY$-plane rotation $R_Z(\Phi_i)$, where $\Phi_i$ is a random angle belonging to the set $\{0,\pi/4,..,7\pi/4\}$. (In the rest of the paper, the angles are labelled with the upper-case letter $\Phi$ if they refer to operations at the logical level, and with the lower-case letters $\phi$ and $\theta$ if they refer to operations at the physical level). The angles $\Phi_i$ can be easily computed classically (we show this later when we provide the proof of Theorem \ref{theorem:corr1}) and are assigned to qubits in the last column of the BwS. 

The C-traps are assigned measurement angles by Sub-protocol \hyperlink{pr:spr2}{1.2} in such a way that any 10-qubit brick in the logical circuit is used to implement a CNOT (Figure \ref{fig:ctraps}). For any CNOT, the target and control qubits are chosen at random between the two logical qubits that correspond to this brick. Also, a randomly chosen subset of the first-column physical qubits are assigned angle $\pi$. This corresponds to a Pauli-$Z$ gate acting on the corresponding logical qubits at the beginning of the logical circuit.

The traps are ``sensitive'' to Bob's deviations: if Bob does not follow Alice's instructions, it is likely that some traps will output an outcome that does not match with the expected one. To show this, we now give a more formal description of our protocol in the language of cryptographic protocols.

Protocol \hyperlink{pr:pr1}{1} defines the roles of Alice and Bob in the interactive game, under the assumption that Alice can prepare qubits in the discrete set of states $\ket{+}_{\theta}$, $\theta\in\{0,\pi/4,..,7\pi/4\}$. {We assume that Alice wants to perform some quantum computation on a $n\times m$ BwS defined by the set of measurement angles $\{\phi_{i,j}\}$. As a first step, Alice decides the number $v$ of traps that she wants to use. She also decides at random which graph $v_t\in(1,..,v+1)$ will be used to implement the target computation. For any other graph $k\neq v_t$, she randomly chooses whether it will be used to implement a R-trap or a C-trap and subsequently runs the corresponding Sub-protocol to obtain a valid set of measurement angles $\{\phi_{i,j}^{(k)}\}$ for the BwS.} Also, Alice defines two sets of variables $\{r_{i,j}^{(k)}\}$ and $\{r_{i,j}^{\prime(k)}\}$ and a set of angles $\{\theta_{i,j}^{(k)}\}$ for any computation $k=1,..,v+1$, where any $r_{i,j}^{(k)}$ and ${r_{i,j}^{\prime(k)}}$ is chosen at random in $\{0,1\}$ and any angle $\theta^{(k)}_{i,j}$ is chosen at random in $\{0,\pi/4,..,7\pi/4\}$. Next, for any computation $k\in(1,..,v+1)$, Alice and Bob interact as follows:

\begin{itemize}
\item[]\textbf{State preparation}:  Alice sends Bob $nm$ qubits in the state $R_Z(\theta^{(k)}_{i,j}+\pi\sum_{(i',j')\sim(i,j)}^{(k)}r_{i^{\prime},j^\prime}^{\prime(k)})\ket{+}$, where the summation runs over all qubits neighbouring with qubit $(i,j)$. Bob stores the qubits in his register and creates the BwS by entangling them with $cZ$ gates.
\item[]\textbf{Blind computation}: Alice asks Bob to measure each qubit by angle $\delta^{(k)}_{i,j}=(-1)^{r^{\prime(k)}_{i,j}}\phi^{(k)}_{i,j}+\theta^{(k)}_{i,j}+r^{(k)}_{i,j}\pi$. Bob measures qubit $(i,j)$ and reveals the outcome $s_{i,j}^{(k)}$ to Alice. The measurements are performed adaptively: any angle $\delta^{(k)}_{i,j}$ is modified on the basis of measurement outcomes and of the parameters $r^{(k)}_{i,j}$ of previous qubits, so that the overall computation does not depend on the random variables.
\item[]\textbf{Verification}: If the $k$th computation is a trap, Alice checks the outcomes of the measurements of the last-column qubits. If they are all 0, she proceeds with the computation $k+1$, otherwise she rejects the whole run.
\end{itemize} 

\begin{small}
	\newpage
	
	\noindent\makebox[\linewidth]{\rule{8.8cm}{0.4pt}}
	\textbf{Protocol \hypertarget{pr:pr1}{1}.}\\
	\noindent\makebox[\linewidth]{\rule{8.8cm}{0.4pt}}
	\textbf{Hypothesis:} 
	\begin{itemize}
	\item[] Alice can prepare single qubits in the state $\ket{+}_{\theta^{(k)}_{i,j}}$, where $\theta^{(k)}_{i,j}\in\{0,\pi/4,..,7\pi/4\}$.
	\end{itemize}
	\textbf{Input}: 
	\begin{itemize}
		\item[(i) ] the number of computations $v$.
		\item[(ii)] the set of measurement angles $\{\phi_{i,j}\}$ for the target computation.
		\item[(iii)] the sets of random variables $\{r_{i,j}^{(k)}=0,1\}$, $\{r_{i,j}^{\prime(k)}=0,1\}$ and the set of random angles $\{\theta_{i,j}^{(k)}=0,\pi/4,..,7\pi/4\}$ for any computation $k=1,..,v+1$.
	\end{itemize}
	\noindent\textbf{0. Preliminary operation.}\\
	Alice randomly chooses $v_t\in(1,2,\ldots,v+1)$ and sets $\{\phi^{(v_t)}_{i,j}\}=\{\phi_{i,j}\}$.\\
	
	\noindent For $k=1,..,v+1$:
	\begin{itemize}
		\item[]\textbf{1. Assigning measurement angles.}\\
		{If $k\neq v_t$}, Alice randomly runs Sub-protocol \hyperlink{pr:spr2}{1.1} or Sub-protocol \hyperlink{pr:spr2}{1.2} on input $n\times m$ and obtains the set $\{\phi^{(k)}_{i,j}\}$.
		
		\item[]\textbf{2. State preparation.}		
		{For $i=1,..,n$ and for $j=1,..,m$, Alice sends to Bob a qubit in the state $R_Z(\theta^{(k)}_{i,j}+\pi\sum_{(i',j')\sim(i,j)}^{(k)}r_{i',j'}^{\prime(k)})\ket{+}$, where the summation runs over all qubits $(i',j')$ that are nearest neighbours of qubit $(i,j)$ in the $k$th computation.}
		
		\item[]\textbf{3. Blind Computation.}
		\begin{itemize}
			\item[3.1] Bob entangles the qubits in his register and creates the BwS.
			\item[3.2] For $j=1,..,m$ and for $i=1,..,n$,
			\begin{itemize}
				\item[-] Alice computes the angle $\delta^{(k)}_{i,j}=(-1)^{r^{\prime(k)}_{i,j}}\phi^{(k)}_{i,j}+\theta^{(k)}_{i,j}+r^{(k)}_{i,j}\pi$ and reveals it to Bob.
				
				\item[-] Bob measures qubit $(i,j)$ in the basis $|\pm\rangle_{\delta^{(k)}_{i,j}}\langle\pm|$ and reveals the outcome $s^{(k)}_{i,j}$ to Alice.
				\item[-]Alice recomputes the measurement outcome $s^{(k)}_{i,j}$ as $s^{(k)}_{i,j}\oplus r^{(k)}_{i,j}$. Next, she recomputes measurement angles of yet-to-be-measured qubits as $\{(-1)^{s_X}\phi^{(k)}_{i,j}+s_Z\pi\}$.
			\end{itemize}
		\end{itemize}
		
		\item[]\textbf{4. Verification.}\\
		If $k\neq v_t$ and $\{s^{(k)}_{i,m}\}_{i=1}^n\neq(0,0,..,0)$, Alice rejects the whole computation.
		
	\end{itemize}

	\noindent\textbf{Output:} Outcomes $\{s^{(v_t)}_{i,m}\}$ of measurements of last-column qubits of the target computation.\\
	\noindent\makebox[\linewidth]{\rule{8.8cm}{0.4pt}}\\
\end{small}
\newpage
\subsection{Overheads}~\label{subsec:comp1}
The amount of classical bits and qubits sent by Alice to Bob is $N^{\textup{A}}_{\textup{bit}}=3(v+1)nm$ and $N_{\textup{qubit}}^{\textup{A}}=(v+1)nm$ respectively. Notice that the qubits can be sent to Bob through an ``off-line'' interaction {(i.e. in a single round and before the input is given)}, while the exchange of bits is done via an ``on-line'' interaction (i.e. through separate uses of the channel and after the computation is started; recall that Alice needs to adapt the angles after every measurement). On the other hand, the number of qubits sent by Bob to Alice is $N^{\textup{B}}_{\textup{qubit}}=0$, while the amount of bits is equal to $N^{\textup{B}}_{\textup{bit}}=(v+1)nm$. Bob sends to Alice the classical bits of information through an on-line interaction. Therefore we conclude that the overhead is linear on the size of the target computation.

\subsection{Correctness, Blindness and Verifiability}~\label{subsec:proo1}
\noindent We now show the results regarding Protocol \hyperlink{pr:pr1}{1}. 

\begin{theorem}~\label{theorem:corr1}
$\textup{[\textbf{Correctness]}}$ Protocol \hyperlink{pr:pr1}{1} is correct.
\end{theorem}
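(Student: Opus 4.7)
The goal is to show that in an honest execution Alice's output on the target index $v_t$ has the same statistics as applying $\F_A$ to her input, and that no trap causes a rejection. I would therefore split the proof into two claims: (a) the target computation on index $v_t$ is decoded correctly, and (b) every trap $k\neq v_t$ produces the last-column string $(s_{i,m}^{(k)})_{i=1}^n=(0,\ldots,0)$ deterministically, so the Verification step never triggers.

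For (a), the plan is to run the standard one-time-pad argument of \cite{FK12,BFK09} adapted to the present encoding. The triple $(\theta_{i,j}^{(k)}, r_{i,j}^{(k)}, r^{\prime(k)}_{i,j})$ acts as a quantum one-time pad on the BwS. The initial rotation $R_Z(\theta_{i,j}^{(k)})$ applied in state preparation is cancelled by the $+\theta_{i,j}^{(k)}$ summand in Bob's measurement angle $\delta_{i,j}^{(k)}$; the $\pi$-shifts $\pi\sum_{(i',j')\sim(i,j)} r^{\prime(k)}_{i',j'}$ injected at preparation encode $Z$ corrections that commute through the $cZ$ entanglers, and together with the sign flip $(-1)^{r^{\prime(k)}_{i,j}}$ on $\phi_{i,j}^{(k)}$ in $\delta_{i,j}^{(k)}$ they cancel exactly on each physical qubit; the extra $r_{i,j}^{(k)}\pi$ flips the raw outcome, which Alice undoes by the XOR $s_{i,j}^{(k)}\mapsto s_{i,j}^{(k)}\oplus r_{i,j}^{(k)}$. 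What remains, after Alice's adaptive update $\phi\mapsto(-1)^{s_X}\phi+s_Z\pi$, is a standard adaptive MBQC on a BwS with angles $\{\phi_{i,j}\}$, which by \cite{RB01,BFK09} implements $\F_A$ on $\rho_{\mathrm{in}}^{\mathrm{comp}}$.

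For (b), I would analyse the two trap families separately using the logical-circuit reduction of Figure \ref{fig:BS}. In an R-trap, Sub-protocol \hyperlink{pr:spr1}{1.1} assigns angles so that on each tape $y$ and each row $i$ the four physical qubits either implement a logical Hadamard (coin outcome $0$) or a logical $R_Z$ (respectively $R_X$) rotation by a random multiple of $\pi/4$ (coin outcome $1$). The counter $count$ in Sub-protocol \hyperlink{pr:spr1}{1.1} ensures Hadamards are paired so that their total number on each row is even, so the composition on the input $\ket{+}$ is an $XY$-plane rotation $R_Z(\Phi_i)\ket{+}=\ket{+}_{\Phi_i}$, and the protocol records $\Phi_i$ in $\phi_{i,m}$; the final-column measurement in the basis $\{\ket{\pm}_{\Phi_i}\}$ therefore yields $s_{i,m}^{(k)}=0$ with certainty. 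In a C-trap, Sub-protocol \hyperlink{pr:spr2}{1.2} assigns the angles of each ten-qubit brick so that it implements a logical CNOT, with randomly chosen control/target, and the occasional first-column $\pi/2$ angle together with the update $\phi_{i,m}\mapsto\phi_{i,m}+\phi_{i+1,m}$ reproduces the $X$-eigenstate that would result from applying these CNOTs to $\ket{+}^{\otimes n}$ (since $\textup{CNOT}\ket{+}\ket{+}=\ket{+}\ket{+}$). Hence again every $s_{i,m}^{(k)}=0$.

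The main obstacle is the bookkeeping in step (b): one must carefully verify, tape by tape and brick by brick, that the specific angle patterns in Sub-protocols \hyperlink{pr:spr1}{1.1} and \hyperlink{pr:spr2}{1.2} really implement the claimed logical operations up to Pauli corrections, and that the running updates of $\phi_{i,m}$ absorb those corrections exactly, using the MBQC byproduct rules of \cite{DK05}. Once this is checked, (a) follows directly by the Fitzsimons--Kashefi one-time-pad cancellation and the required equality $D\bigl(\mathrm{Tr}_{BC}[\F_A(\rho_{\mathrm{in}})],\mathrm{Tr}_{BC}[\rho_{\mathrm{out}}]\bigr)=0$ is obtained by combining (a) and (b) over the $v+1$ graphs.
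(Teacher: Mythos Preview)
Your part (a) is exactly the paper's argument: the one-time-pad cancellation of $\theta_{i,j}^{(k)}$, $r_{i,j}^{(k)}$ and $r_{i,j}^{\prime(k)}$ reduces each run $k$ to a standard adaptive MBQC on the BwS with angles $\{\phi_{i,j}^{(k)}\}$, after which correctness of the target follows from the correctness of MBQC. The paper's proof of Theorem~\ref{theorem:corr1} stops there and refers to Appendix~\ref{app:corr} for the explicit commutation/cancellation steps.

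Your part (b), the deterministic-zero analysis of the traps, is not part of the paper's correctness proof at all; it is precisely the completeness argument ($\delta=1$) that the paper places inside the proof of Theorem~\ref{theorem:ver1} (Verifiability). In the paper's organisation, Definition~\ref{def:corr} concerns only the match between $\F_A(\rho_{\mathrm{in}})$ and the protocol output, while the accept/reject behaviour of the traps is governed by Definition~\ref{def:ver}. So including (b) under correctness is an organisational choice rather than a mathematical difference, and the content you sketch there (counter parity forcing an even number of Hadamards in R-traps, bricks implementing CNOTs in C-traps) is the same as what the paper argues in its completeness step. One small imprecision worth fixing: for C-traps, the justification ``since $\textup{CNOT}\ket{+}\ket{+}=\ket{+}\ket{+}$'' is not sufficient, because Sub-protocol~1.2 also applies first-column phase rotations that send some logical inputs away from $\ket{+}$; the paper handles this via the expansion $\textup{CNOT}_{c,t}=\mathbb{1}_c\otimes\ket{+}_t\!\bra{+}+Z_c\otimes\ket{-}_t\!\bra{-}$, which is exactly what makes the running update of $\phi_{i,m}$ track the final $X$-eigenstate.
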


\begin{proof}
The correctness of Protocol \hyperlink{pr:pr1}{1} can be proven using similar arguments as those in \citep{BFK09}. Since $cZ$ commutes with $R_Z$-rotations, the rotation characterizing any given physical qubit $(i,j)$ cancels at the same time the rotation by angle $\theta_{i,j}$ in $\delta_{i,j}$ and the negative sign in front of angles $\phi_{i,j}^{(k)}$. Second, the effect of the random variables $r_{i,j}$ is cancelled out by Alice when she recomputes the measurement outcome $s_{i,j}^{(k)}$ as $s_{i,j}^{(k)}\oplus r_{i,j}^{(k)}$ (step 3.2 of Protocol \hyperlink{pr:pr1}{1}). Thus, any computation $k=1,..,v+1$ correctly reproduces a computation on a BwS initially in the state $\ket{+}^{\otimes nm}$ and determined by the set of measurement angles $\{\phi_{i,j}^{(k)}\}$. We refer to Appendix \ref{app:corr} for more details. 
\end{proof}

\begin{theorem}
\textup{\textbf{[Blindness]}}~\label{th:bl1} Protocol \hyperlink{pr:pr1}{1} is blind.
\end{theorem}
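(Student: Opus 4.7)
The plan is to follow the structure of the blindness proof for UBQC in \cite{BFK09}, adapted to our multi-graph setting where the index $v_t$ of the target computation must also be hidden. I will show that, for every fixed strategy of Bob and every choice of Alice's secret data $(v_t,\{\phi^{(k)}_{i,j}\})$, the reduced state on Bob's side $B\cup C$ after the protocol is identical, up to what is declared public (the dimensions $n\times m$ and the number of computations $v+1$, which are constant functions of the input in the sense of Definition \ref{def:blind}).

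First I would analyse, one computation $k$ at a time, the information Bob sees. It consists of the $nm$ qubits Alice sends him in step 2, the classical angles $\delta^{(k)}_{i,j}$ he receives in step 3.2, and whatever he himself generates (entangling operations, his deviated maps $\widetilde\E^{(p)}_{BC}$, and the outcomes $s^{(k)}_{i,j}$ he produces). The key ingredient is the one-time pad provided by the random angles $\theta^{(k)}_{i,j}\in\{0,\pi/4,\dots,7\pi/4\}$. For each qubit $(i,j)$, averaging the state
\[
R_Z\!\Bigl(\theta^{(k)}_{i,j}+\pi\!\!\!\sum_{(i',j')\sim(i,j)}\!\!r^{\prime(k)}_{i',j'}\Bigr)|+\rangle\langle+|R_Z^\dagger(\cdot)
\]
over the uniform distribution of $\theta^{(k)}_{i,j}$ yields $\mathbb{I}/2$, because the eight angles $\{\theta^{(k)}_{i,j}\}$ form an equiangular set on the equator and so their $R_Z$ rotations average the $XY$-plane coherences to zero. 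Hence, tracing out Alice's side, the tuple of qubits Bob holds before the measurement instructions is the maximally mixed state $(\mathbb{I}/2)^{\otimes nm}$, independently of $\phi^{(k)}_{i,j}$, $r^{(k)}_{i,j}$, $r^{\prime(k)}_{i,j}$ and of whether the $k$th graph is the target or a trap.

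Next I would argue that conditioning on the classical angle $\delta^{(k)}_{i,j}=(-1)^{r^{\prime(k)}_{i,j}}\phi^{(k)}_{i,j}+\theta^{(k)}_{i,j}+r^{(k)}_{i,j}\pi$ does not break this uniformity. Since $\theta^{(k)}_{i,j}$ is uniform on $\{0,\pi/4,\dots,7\pi/4\}$ and independent of everything Bob knows, the marginal distribution of $\delta^{(k)}_{i,j}$ is uniform on the same set shifted by $r^{(k)}_{i,j}\pi$, which is independent of $\phi^{(k)}_{i,j}$; and the joint distribution of the prepared state and $\delta^{(k)}_{i,j}$ is, by the standard change-of-variable used in UBQC, independent of $\phi^{(k)}_{i,j}$ as well. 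The additional pad $r^{(k)}_{i,j}\pi$, together with the post-processing $s^{(k)}_{i,j}\mapsto s^{(k)}_{i,j}\oplus r^{(k)}_{i,j}$ performed only by Alice, guarantees that any measurement outcome Bob reports has a distribution that is likewise independent of $\phi^{(k)}_{i,j}$, so that the adaptive angles revealed later in the same run leak no extra information about Alice's secrets.

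Finally I would assemble these per-computation statements into a blindness argument for the whole protocol. Since the quantum state Bob receives in step 2 and the classical angles he receives in step 3.2 have joint distribution independent of $\{\phi^{(k)}_{i,j}\}$, of the choice R-trap versus C-trap, and of the target index $v_t$, the output of any deviating strategy $\{\widetilde\E^{(p)}_{BC}\}_{p=1}^q$ on the $B\cup C$ register is a fixed CPTP image of a state that depends only on $(n,m,v)$. Hence $\mathrm{Tr}_{AC}\!\bigl[\circ_{p}(\E^{(p)}_{AC}\otimes\widetilde\E^{(p)}_{BC})(\rho_{\mathrm{in}})\bigr]$ is a constant function of the input in the sense of Definition \ref{def:blind}, which is exactly blindness. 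The main obstacle, as in \cite{FK12}, will be handling the adaptive nature of the angles: one must check that even though later $\delta^{(k)}_{i',j'}$ are recomputed from earlier $s^{(k)}_{i,j}$, the one-time-pad structure propagates correctly, which I would do by an induction on the measurement order using that the fresh $\theta^{(k)}_{i',j'}$ at each step is independent of the past.
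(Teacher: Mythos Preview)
Your proposal is correct and follows essentially the same approach as the paper, which itself defers to the UBQC blindness argument of \cite{BFK09,D12}: show that the joint quantum--classical state Bob holds (qubits together with the revealed angles $\delta^{(k)}_{i,j}$) is, after averaging over Alice's private randomness $\theta^{(k)}_{i,j}$, $r^{(k)}_{i,j}$, $r^{\prime(k)}_{i,j}$, completely independent of the computational angles $\{\phi^{(k)}_{i,j}\}$ and hence of $v_t$. The paper's Appendix~\ref{app:oldproofs} implements this via the controlled-unitary $CCU$ trick followed by summing over $\overline{\theta}$ and then right-to-left over $\overline{r}$, but this is just a concrete packaging of the same change-of-variable and one-time-pad reasoning you outline; your explicit mention of handling adaptivity by induction on the measurement order is, if anything, a touch more careful than the appendix, which treats all angles as if revealed at once and only implicitly addresses adaptivity through the column-by-column summation.
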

\begin{proof}
The theorem can be proven with similar arguments as those in \citep{BFK09} (see also \cite{D12} for a more formal proof). We refer to Appendix \ref{app:oldproofs} for a proof of the theorem.
\end{proof}

\begin{theorem}~\label{theorem:ver1}
\textup{\textbf{[Verifiability] }}For any $v\geq7$, Protocol \hyperlink{pr:pr1}{1} is $\varepsilon$-verifiable with soundness
\begin{equation}~\label{eq:oneround}
\varepsilon=\frac{7}{v+1}\bigg(\frac{7}{8}\bigg)^{6}\cong\frac{3.14}{v+1}\textrm{ ,}
\end{equation}
where $v$ represents the number of trap computations.
\end{theorem}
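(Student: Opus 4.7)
The plan is to follow the standard structure for soundness proofs of prepare-and-send verification protocols, adapted to the two trap families of Sub-protocols~\hyperlink{pr:spr1}{1.1} and~\hyperlink{pr:spr2}{1.2}. The first step is to invoke Theorem~\ref{th:bl1} (blindness), which ensures that Bob's CPTP deviation is statistically independent of the identity $v_t$ of the target run. The random angles $\theta^{(k)}_{i,j}\in\{0,\pi/4,\ldots,7\pi/4\}$ together with the random bits $r^{(k)}_{i,j}$ and $r^{\prime(k)}_{i,j}$ implement a quantum one-time pad that twirls Bob's deviation on each physical qubit into a convex combination of Pauli operators acting just before the measurements. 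It therefore suffices to bound, for a single Pauli deviation $E$, the joint probability that $E$ corrupts the target output while every trap still outputs $(0,\ldots,0)$ on its last column, and then to combine by linearity.

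Next I would analyse the per-trap detection probability for a fixed Pauli $E$. For an R-trap, each row of the BwS implements either a Hadamard or a logical rotation $R_Z(\Phi)$, with $\Phi$ built from angles drawn uniformly from $\{0,\pi/4,\ldots,7\pi/4\}$. A direct computation of the overlaps $\langle +|\,R_Z(\Phi)^{\dagger}\,P\,R_Z(\Phi)\,|+\rangle$ for $P\in\{X,Y,Z\}$ shows that logical $X$ and $Y$ errors evade the final Pauli-$X$ measurement only for isolated values of $\Phi$, while a logical $Z$ always flips the outcome. For a C-trap, each 10-qubit brick implements a CNOT with randomly assigned control and target, and the first-column $\pi$-rotations randomise initial logical $Z$'s; tracking $X$ and $Z$ propagation through CNOT then shows that any nontrivial logical Pauli at the input of a brick is detected with constant probability.

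The third step is to combine the two trap families with the factor $1/(v+1)$ coming from the uniformly random target position $v_t$. For any nontrivial logical deviation $E$ on the target, I would argue that the R-trap and C-trap analyses together yield a per-trap escape probability of at most $(7/8)^{6}$, obtained in the worst case by a deviation that must remain compatible with six independent uniform random angles contributing to the logical rotation of a single R-trap row across the affected tapes. Averaging over the uniform choice of $v_t$, and noting that at most seven distinct nontrivial Pauli patterns are relevant in the worst case, gives
\begin{equation*}
\varepsilon \;\le\; \frac{7}{v+1}\left(\frac{7}{8}\right)^{6},
\end{equation*}
as claimed in Equation~\eqref{eq:oneround}. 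The assumption $v\ge 7$ ensures that enough traps of each family are present with constant probability for the averaging step to be valid.

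The main obstacle is the case analysis underlying the second and third steps: one must enumerate how each Pauli deviation propagates through the logical circuits of Figures~\ref{fig:rtraps} and~\ref{fig:ctraps}, reconcile deviations that look harmless to one trap family with their effect on the other, and count the independent random angles and CNOT-orientation coins sharply enough to reproduce the specific constant $(7/8)^{6}$ rather than a weaker bound. Once these combinatorial bookkeeping tasks are settled, the remaining algebra is a routine one-time-pad computation in the spirit of the UBQC analysis of Ref.~\cite{BFK09}.
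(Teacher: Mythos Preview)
Your high-level structure (blindness $\Rightarrow$ Pauli twirl $\Rightarrow$ per-trap detection bound $\Rightarrow$ combine) matches the paper's, but your derivation of the constant $\dfrac{7}{v+1}\big(\tfrac{7}{8}\big)^{6}$ is wrong in a way that indicates a missing idea, not just bookkeeping.

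In the paper, neither the $7$ nor the exponent $6$ has anything to do with ``seven Pauli patterns'' or ``six independent random angles in a single R-trap row.'' The argument is the following. After the twirl, fix a Pauli attack and let $\widetilde v$ be the number of computations on which it acts nontrivially. The target is among these with probability at most $\widetilde v/(v+1)$. Conditioned on that, the remaining $\widetilde v-1$ affected computations are traps, each chosen independently to be an R-trap or a C-trap with probability $1/2$. Lemmas~\ref{lem:r} and~\ref{lem:c} give: an R-trap detects any nontrivial attack with probability at least $1/4$ except for two special families (Type-I and Type-II), while a C-trap detects precisely those families with probability at least $1/2$. The worst case for escape is therefore an attack that is invisible to C-traps but not to R-traps, giving a per-trap escape probability of at most $\tfrac{1}{2}\cdot\tfrac{3}{4}+\tfrac{1}{2}\cdot 1=\tfrac{7}{8}$. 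Hence
\[
p(E_1\wedge E_2\mid \widetilde v)\;\le\;\frac{\widetilde v}{v+1}\Big(\frac{7}{8}\Big)^{\widetilde v-1},
\]
and one then \emph{maximises over the integer $\widetilde v$}, obtaining the maximum at $\widetilde v=7$. That is where both the prefactor $7$ and the exponent $6=\widetilde v-1$ come from, and the hypothesis $v\ge 7$ simply ensures that $\widetilde v=7$ is attainable; it is not about ``enough traps of each family being present.''

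Your proposal omits this optimisation over $\widetilde v$ entirely and instead tries to read off $(7/8)^6$ from the internal randomness of a single R-trap. That cannot work: the per-trap escape bound the paper proves is only $7/8$, not $(7/8)^6$, and the R-trap analysis in fact yields a detection floor of $1/4$ (via a fairly delicate case analysis, Table~1, of how phase-flips on physical qubits propagate through the $H/R_Z/R_X$ pattern of Sub-protocol~\hyperlink{pr:spr1}{1.1}), not a product over six angles. You also need the complementary role of the two trap families: R-traps miss exactly the Type-I/Type-II phase-flip patterns, and C-traps are designed to catch those; without that split you cannot justify the $\tfrac{1}{2}\cdot\tfrac{3}{4}+\tfrac{1}{2}\cdot 1$ computation that produces $7/8$.
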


The proof of Theorem \ref{theorem:ver1} relies on the following Lemmas:

\begin{lemma}~\label{lem:tw}
	 Let $\rho$ be a $2^N\times2^N$ density matrix and let $P,P'$ be two n-fold tensor products of the set of operators $\{\mathbb{1},Z,X,Y\}$. Denoting with $\{Q_{r}\}$ the set of all n-fold
\end{lemma}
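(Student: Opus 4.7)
The plan is to reduce the twirl to a scalar character sum by exploiting the standard fact that any two Pauli operators either commute or anticommute. For each Pauli $Q_r$ and any Pauli string $M$, one can write $Q_r M Q_r^\dagger = (-1)^{c(Q_r, M)} M$ with an indicator $c\in\{0,1\}$; moreover $c$ is bilinear modulo $2$ in its second argument, so $c(Q_r, PP'^\dagger) = c(Q_r, P) + c(Q_r, P'^\dagger) \pmod 2$.

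Plugging this into whatever product form the lemma states --- most plausibly something along the lines of $\frac{1}{4^n}\sum_r Q_r P Q_r^\dagger\, \rho\, Q_r P'^\dagger Q_r^\dagger$ --- lets me pull the scalar signs out of $\rho$ and rewrite the sum as $\bigl[\frac{1}{4^n}\sum_r (-1)^{c(Q_r,\,PP'^\dagger)}\bigr]\, P\rho P'^\dagger$. The whole proof then reduces to evaluating the prefactor: I need to show it equals $1$ when $PP'^\dagger = \pm\mathbb{1}$ and $0$ otherwise, which yields the expected $\delta_{P,P'}$ collapse of the twirl to $P\rho P^\dagger$.

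To evaluate the prefactor I would argue tensor factor by tensor factor. At every position where $PP'^\dagger$ acts as $\mathbb{1}$ all four single-qubit Paulis commute and contribute a sign of $+1$; at every position where $PP'^\dagger$ acts as $X$, $Y$, or $Z$, exactly two of the four single-qubit Paulis commute and two anticommute, so that the factorwise sum over the four choices of single-qubit Pauli is $+1+1-1-1=0$. Since the overall sign of $Q_r$ factors as a product of these single-qubit signs, the presence of even a single nontrivial tensor factor in $PP'^\dagger$ annihilates the sum. If $PP'^\dagger = \pm\mathbb{1}$, no such position exists, every term in the sum is $+1$, and the sum totals $4^n$, exactly cancelling the normalization.

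No step presents a real obstacle: the statement is essentially a textbook Pauli-twirling identity, and each step is either standard bookkeeping of Pauli commutation relations or an elementary parity count. The only point to be careful about is the bilinearity of the commutation indicator $c$ together with the observation that $P'^\dagger$ is itself a Pauli operator (up to a scalar), so that $PP'^\dagger$ is a well-defined input to $c$.
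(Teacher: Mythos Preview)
Your argument is correct and is the standard proof of the Pauli-twirl identity. Note that the paper does not actually supply a proof of this lemma: it simply writes ``We refer to \cite{DCEL09} for a proof of the Lemma.'' Your sketch therefore gives strictly more than the paper does, and matches the usual derivation found in that reference. Two cosmetic remarks: the paper states the lemma without the $1/4^n$ normalization and without daggers (Pauli strings being self-adjoint, $Q_r^\dagger=Q_r$ and $P'^\dagger=P'$), and it only asserts the $P\neq P'$ case vanishes rather than the full $\delta_{P,P'}$ statement; your computation handles both cases and is consistent with the paper's formulation once these superficial differences are accounted for.
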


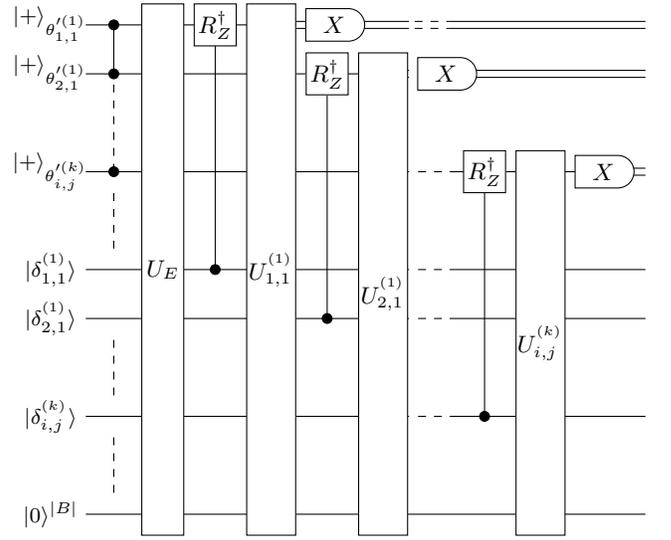
\begin{figure}[H]
	\begin{tikzpicture}[scale=0.93, every node/.style={scale=1}]
	
	\draw (0,4.2) -- (2.7,4.2);
	\draw (2.7,4.25) -- (4.62,4.25);
	\draw (2.7,4.15) -- (4.62,4.15);
	\draw [dashed] (4.7,4.25) -- (5.2,4.25);
	\draw [dashed] (4.7,4.15) -- (5.2,4.15);
	\draw (5.2,4.25) -- (8,4.25);
	\draw (5.2,4.15) -- (8,4.15);
	
	\draw (0,3.5) -- (4.3,3.5);
	\draw (4.3,3.55) -- (4.62,3.55);
	\draw (4.3,3.45) -- (4.62,3.45);
	\draw [dashed] (4.75,3.55) -- (5.2,3.55);
	\draw [dashed] (4.75,3.45) -- (5.2,3.45);
	\draw (5.2,3.55) -- (8,3.55);
	\draw (5.2,3.45) -- (8,3.45);
	
	\draw (0,2.1) -- (4.62,2.1);
	\draw [dashed] (4.75,2.1) -- (5.2,2.1);
	\draw (5.2,2.1) -- (7.8,2.1);
	\draw (7.8,2.15) -- (8,2.15);
	\draw (7.8,2.05) -- (8,2.05);
	
	\draw (0.,0.7) -- (4.62,0.7);
	\draw [dashed] (4.75,1.4-0.7) -- (5.2,1.4-0.7);
	\draw (5.2,1.4-0.7) -- (8,1.4-0.7);
	
	\draw (0,0.7-0.7) -- (4.62,0.7-0.7);
	\draw [dashed] (4.75,0.7-0.7) -- (5.2,0.7-0.7);
	\draw (5.2,0.7-0.7) -- (8,0.7-0.7);
	
	\draw (0,-0.7-0.7) -- (4.62,-0.7-0.7);
	\draw [dashed] (4.75,-0.7-0.7) -- (5.2,-0.7-0.7);
	\draw (5.2,-0.7-0.7) -- (8,-0.7-0.7);
	
	\draw (0.4,3.5) -- (0.4,4.2);
	\draw [dashed] (0.4,2.1) -- (0.4,3.5);
	\draw [fill] (0.4,4.2) circle [radius=0.7mm];
	\draw [fill] (0.4,3.5) circle [radius=0.7mm];
	\draw [fill] (0.4,2.1) circle [radius=0.7mm];
	
	\node at (-0.5,4.2) {\footnotesize $\ket{+}_{\theta^{^{\prime(1)}}_{1,1}}$};
	\node at (-0.5,3.5) {\footnotesize $\ket{+}_{\theta^{^{\prime(1)}}_{2,1}}$};
	\node at (-0.5,2.1) {\footnotesize $\ket{+}_{\theta^{^{\prime(k)}}_{i,j}}$};
	\node at (-0.5,0.7) {\footnotesize $\ket{\delta^{^{(1)}}_{1,1}}$};
	\node at (-0.5,0.0) {\footnotesize $\ket{\delta^{^{(1)}}_{2,1}}$};
	\node at (-0.5,-1.4) {\footnotesize $\ket{\delta^{^{(k)}}_{i,j}}$};
	
	\draw (0.0,-2.8) -- (8,-2.8);
	\node at (-0.5,-2.8) {\footnotesize $\ket{0}^{|B|}$};
	
	\draw (1.1+0.75,4.2) -- (1.1+0.75,0.7);
	\draw [fill] (1.1+0.75,0.7) circle [radius=0.7mm];
	\draw (2.7+0.75,3.5) -- (2.7+0.75,0.0);
	\draw [fill] (2.7+0.75,0.0) circle [radius=0.7mm];
	\draw (5.7+0.,2.1) -- (5.7+0.,-1.4);
	\draw [fill] (5.7+0.,-1.4) circle [radius=0.7mm];
	
	\draw [fill=white] (0.8+0.75,3.9) rectangle (1.4+0.75,4.5);
	\node at (1.1+0.75,4.2) {\small $R^{\dagger}_{Z}$};
	\draw [fill=white] (2.4+0.75,3.2) rectangle (3.0+0.75,3.8);
	\node at (2.7+0.75,3.5) {\small $R^{\dagger}_{Z}$};
	\draw [fill=white] (5.4+0.,1.8) rectangle (6.0+0.,2.4);
	\node at (5.7+0.,2.1) {\small $R^{\dagger}_Z$};
	
	\draw [dashed] (0.4,1.0) -- (0.4,2.5-0.7);
	\draw [dashed] (0.4,-1.1) -- (0.4,-0.3);
	\draw [dashed] (0.4,-1.7) -- (0.4,-2.5);
	
	\draw [fill=white] (0.8,4.5) rectangle (1.4,-3.1);
	\node at (1.1,-3.1+3.8) {\small $U_E$};
	\draw [fill=white] (1.55+0.75,4.5) rectangle (2.25+0.75,-3.1);
	\node at (1.9+0.75,-3.1+3.8) {\small $U^{^{(1)}}_{1,1}$};
	\draw [fill=white] (3.15+0.75,3.8) rectangle (3.85+0.75,-3.1);
	\node at (3.5+0.75,-3.1+3.8-0.35) {\small $U^{^{(1)}}_{2,1}$};
	\draw [fill=white] (6.15,2.4) rectangle (6.85,-3.1);
	\node at (6.5,-3.1+3.8-0.35-0.7) {\small $U^{^{(k)}}_{i,j}$};
	
	\node [fill=white,draw,rounded rectangle,rounded rectangle left arc=none,minimum width=1cm] (name) at (1.55+0.75+1.25,4.2) {$X$};
	\node [fill=white,draw,rounded rectangle,rounded rectangle left arc=none,minimum width=1cm] (name) at (3.15+0.75+1.25,3.5) {$X$};
	\node [fill=white,draw,rounded rectangle,rounded rectangle left arc=none,minimum width=1cm] (name) at (7.4,2.1) {$X$};
	
	\end{tikzpicture}
	\caption{\small Circuit diagram of a computation on a BwS. $U_E$ represents Bob's deviations during the entangling operation, while each $U^{^{(k)}}_{i,j}$ represents Bob's deviations before the measurement of physical qubit $(i,j)$ along computation $k$. Bob's private register is initialized in the state $\ket{0}^{|B|}$. For simplicity, in the above picture we have rewritten each angle $\theta_{i,j}^{(k)}+\pi\sum_{(i',j')\sim(i,j)}{r}_{i',j'}^{\prime(k)}$ as $\theta_{i,j}^{\prime(k)}$.}
	\label{fig:MBQC1circ6}
\end{figure}

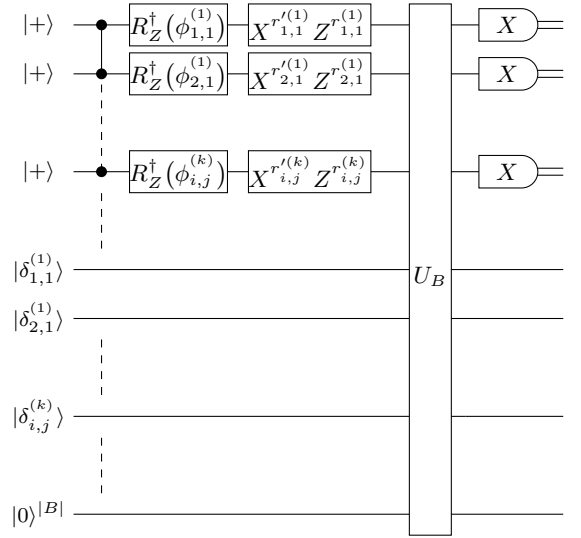
\begin{figure}[H]
	\centering
	\begin{tikzpicture}[scale=0.93, every node/.style={scale=1}]
	\draw (0.4,3.5) -- (0.4,4.2);
	\draw [dashed] (0.4,2.1) -- (0.4,3.5);
	\draw [fill] (0.4,4.2) circle [radius=0.7mm];
	\draw [fill] (0.4,3.5) circle [radius=0.7mm];
	\draw [fill] (0.4,2.1) circle [radius=0.7mm];
	
	\draw (0,4.2) -- (6,4.2);
	\draw (6.4,4.25) -- (7,4.25);
	\draw (6,4.15) -- (7,4.15);
	\draw (0,3.5) -- (6,3.5);
	\draw (6.4,3.55) -- (7,3.55);
	\draw (6.4,3.45) -- (7,3.45);
	\draw (0,2.1) -- (6,2.1);
	\draw (6.4,2.15) -- (7,2.15);
	\draw (6.4,2.05) -- (7,2.05);
	\draw (0,0.7) -- (7,0.7);
	\draw (0,0.0) -- (7,0.0);
	\draw (0,-1.4) -- (5.6,-1.4);
	\draw (5.6,-1.4) -- (7,-1.4);
	\draw [dashed] (0.4,-0.3) -- (0.4,-1.1);
	\draw (0,-2.8) -- (7,-2.8);
	\draw [dashed] (0.4,-2.5) -- (0.4,-1.7);
	\draw [dashed] (0.4,1.0) -- (0.4,1.8);
	
	\node at (-0.5,4.2) {\footnotesize $\ket{+}$};
	\node at (-0.5,3.5) {\footnotesize $\ket{+}$};
	\node at (-0.5,2.1) {\footnotesize $\ket{+}$};
	\node at (-0.5,0.7) {\footnotesize $\ket{\delta^{^{(1)}}_{1,1}}$};
	\node at (-0.5,0.0) {\footnotesize $\ket{\delta^{^{(1)}}_{2,1}}$};
	\node at (-0.5,-1.4) {\footnotesize $\ket{\delta^{^{(k)}}_{i,j}}$};
	
	\draw [fill=white] (0.8,3.9) rectangle (2.2,4.5);
	\node at (1.5,4.2) {\small $R^{\dagger}_{Z}\big(\phi^{^{(1)}}_{1,1}\big)$};
	\draw [fill=white] (0.8,3.2) rectangle (2.2,3.8);
	\node at (1.5,3.5) {\small $R^{\dagger}_{Z}\big(\phi^{^{(1)}}_{2,1}\big)$};
	\draw [fill=white] (0.8,1.8) rectangle (2.2,2.4);
	\node at (1.5,2.1) {\small $R^{\dagger}_{Z}\big(\phi^{^{(k)}}_{i,j}\big)$};	
	
	\draw [fill=white] (3.3-0.8,4.5) rectangle (3.35+0.9,3.9);
	\node at (3.355,4.2) {\small $X^{r^{{\prime(1)}}_{_{1,1}}}Z^{{r}^{{(1)}}_{_{1,1}}}$};
	\draw [fill=white] (3.3-0.8,3.2) rectangle (3.35+0.9,3.8);
	\node at (3.355,3.5) {\small $X^{r^{{\prime(1)}}_{_{2,1}}}Z^{{r}^{{(1)}}_{_{2,1}}}$};
	\draw [fill=white] (3.3-0.8,1.8) rectangle (3.35+0.9,2.4);
	\node at (3.355,2.1) {\small $X^{r^{{\prime(k)}}_{_{i,j}}}Z^{{r}^{{(k)}}_{_{i,j}}}$};
	
	\node at (-0.5,-2.8) {\footnotesize $\ket{0}^{|B|}$};
	\draw [fill=white] (4.8,4.5) rectangle (5.4,-3.1);
	\node at (5.1,0.6) {\small $U_B$};
	
	\node [fill=white,draw,rounded rectangle,rounded rectangle left arc=none,minimum width=1cm] (name) at (6.2,4.2) {$X$};
	\node [fill=white,draw,rounded rectangle,rounded rectangle left arc=none,minimum width=1cm] (name) at (6.2,3.5) {$X$};
	\node [fill=white,draw,rounded rectangle,rounded rectangle left arc=none,minimum width=1cm] (name) at (6.2,2.1) {$X$};

	\end{tikzpicture}
	\caption{\small Simplification of circuit in Figure \ref{fig:MBQC1circ6} where (i) deviations are moved toward the end of the circuit and {merged} into $U_B$, (ii) the angles $\theta^{(k)}_{i,j}$ contained in rotations are cancelled out with rotations of the state of the physical qubits, (iii) a summation is made over the angles $\theta_{i,j}^{(k)}$ and (iv) the controlled-$R_Z$ gates are rewritten as un-controlled unitaries.}
	\label{fig:MBQC1circ7}
\end{figure}

\noindent\textit{tensor products of the set of operators $\{\mathbb{1},Z,X,Y\}$, the following equation holds:}
\begin{equation}
\sum_{r=1}^{4^N}Q_{r}PQ_{r}\rho Q_{r}P'Q_{r}=0\textrm{ }\forall\textrm{ }P\neq P' 
\end{equation}
\begin{lemma}~\label{lem:r}
Consider a computation implementing an R-trap. Suppose that some of the physical qubits composing the BwS are phase-flipped. Then, for any combination of phase-flips, apart from two well-defined sets of combinations denoted as ``Type-I'' and ``Type-II'', the average probability of obtaining the string $(0,0,..,0)$ as outcome of last-column measurements is upper bounded by 3/4, i.e. phase-flips are detected with probability at least 1/4. 
\end{lemma}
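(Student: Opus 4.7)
The plan is to push all physical phase-flip errors through to the logical level of the circuit in Figure~\ref{fig:BS}, and then to exploit the two sources of randomisation inside Sub-protocol~\hyperlink{pr:spr1}{1.1}, namely the per-tape coin flips $\{c_y\}$ and the uniform random angles $\phi_{i,j}$, in order to lower-bound the probability that the induced logical error is revealed by a last-column measurement.

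First I would use the standard teleportation rewriting already invoked in the discussion of Figure~\ref{fig:MBQC1circ1}: a $Z$ error on a physical qubit just before its $\{|\pm\rangle_\phi\}$ measurement flips the outcome of that measurement and therefore appears as an $X$ byproduct on the teleported logical qubit, while a $Z$ error on an unmeasured qubit commutes with the $cZ$ entanglement and propagates to $Z$'s on its neighbours. Composing these rules with the gates that each four-qubit block actually implements - a Hadamard when $c_y=0$, or an $R_Z$/$R_X$ rotation when $c_y=1$ - any fixed physical phase-flip pattern $\mathcal F$ translates, branch by branch of the coin flips, into an explicit logical Pauli operator $P_i(\mathcal F,\{c_y\}) \in \{\mathbb{1},X,Y,Z\}$ sitting on logical qubit $i$ just before its final $R_Z(\Phi_i)$ rotation and Pauli-$X$ measurement. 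Since Sub-protocol~\hyperlink{pr:spr1}{1.1} sets the last-column angle $\phi_{i,m}$ so that in the error-free case the trap deterministically outputs $0$, the last-column bit of row $i$ is flipped exactly when the rotated $P_i$ anticommutes with $X$, i.e.\ whenever $P_i$ carries a $Z$ or $Y$ component.

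I would then define Type-I and Type-II as precisely the phase-flip patterns $\mathcal F$ for which $P_i(\mathcal F,\{c_y\}) \in \{\mathbb{1},X\}$ for every row $i$ and every coin-flip assignment; by the propagation rules these turn out to be patterns supported on the ``null'' physical positions of the blocks (those whose $Z$-errors are always absorbed into the $X$-byproduct correction string) together with certain symmetric cancelling pairs inside the rotation blocks. For any other $\mathcal F$ there exists at least one row $i$ and a non-negligible set of coin-flip branches on which the logical error carries a non-trivial $Z$ or $Y$ component; conditional on such a branch, the probability that the last-column qubit still returns $0$ averages to $1/2$ when the $\phi_{i,j}$'s are drawn uniformly from $\{0,\pi/4,\ldots,7\pi/4\}$, because the relevant acceptance amplitude is $\cos^2(\Psi/2)$ for an angle $\Psi$ that depends non-trivially on at least one of these random angles.

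The main technical obstacle is the combinatorial bookkeeping: verifying that Type-I and Type-II really exhaust all undetectable patterns, and turning the per-row, per-branch estimates into a single lower bound of $1/4$ on the overall detection probability. Here I would invoke Lemma~\ref{lem:tw} to kill off cross terms between inequivalent Pauli components when averaging over the blinding angles $\theta^{(k)}_{i,j}$ already used in the blindness argument, and then combine the resulting diagonal contributions with the measures of the coin-flip branches that expose the error. Putting the pieces together yields the stated bound: outside Type-I and Type-II the all-zero string is produced with probability at most $3/4$.
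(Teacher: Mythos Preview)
Your overall plan---translate physical $Z$-errors into logical errors and then exploit the two layers of randomisation (the coin flips $c_y$ and the angles $\phi_{i,j}$) to bound detection---matches the paper's strategy. But the execution has a structural error and is missing the mechanism that actually produces the constant $1/4$.

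The structural error is your claim that the logical effect of a phase-flip pattern is always a Pauli $P_i(\mathcal F,\{c_y\})\in\{\mathbb{1},X,Y,Z\}$. This is false: the R-trap gates $R_Z(\phi)$, $R_X(\phi)$ are non-Clifford, so conjugating a Pauli through them does not return a Pauli. For instance, with $c_y=1$ and $l_2=l_4=1$ (notation of Figure~\ref{fig:BSflip2}), the tape unitary becomes $R_Z(\phi_1-\phi_3)$ instead of $R_Z(\phi_1+\phi_3)$, so the logical error is the rotation $R_Z(-2\phi_3)$, not a Pauli. The paper's proof (Table~\hyperlink{tab:tab1}{1} and the discussion around equation~\eqref{eq:rr}) keeps track of both the Pauli byproduct \emph{and} the sign flips of rotation angles, and uses that $-2\phi$ is uniform over $\{0,\pi/2,\pi,3\pi/2\}$ to extract the $1/2$ detection factor. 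Your $\cos^2(\Psi/2)$ remark hints at this but is inconsistent with the Pauli framing you set up; in particular your definition of Type-I/II as ``$P_i\in\{\mathbb{1},X\}$ on all branches'' does not make sense once the error is not a Pauli.

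The missing mechanism is the combinatorial step that actually yields $1/4$. In the paper this is an exhaustive single-tape case analysis (Table~\hyperlink{tab:tab1}{1}): every non-trivial single-tape pattern corrupts at least two of the three possible logical gates $\{H,R_Z,R_X\}$, so---since $c_y$ makes the tape a Hadamard with probability $1/2$ and a rotation with probability $1/2$---the gate is corrupted with probability at least $1/2$; conditionally on corruption, equation~\eqref{eq:rr} gives another factor $1/2$, hence $1/4$. The multi-tape extension (Steps~III--IV) then identifies Type-I (flips at $(i,4y-1)$ and $(i,4y+1)$) and Type-II (flips at $(i,1)$ and $(i,m)$) as the only patterns that escape, by explicitly chasing byproducts across tape boundaries. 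Your proposal supplies neither the single-tape table nor the inter-tape chase, and these are where all the work lies.

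Finally, Lemma~\ref{lem:tw} plays no role here: the Pauli twirl over $\theta^{(k)}_{i,j},r^{(k)}_{i,j},r'^{(k)}_{i,j}$ is applied in the proof of Theorem~\ref{theorem:ver1} \emph{before} Lemma~\ref{lem:r} is invoked---that is exactly why the hypothesis of Lemma~\ref{lem:r} already speaks of phase-flips rather than general deviations.
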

\begin{lemma}~\label{lem:c}
Consider a computation implementing a C-trap. Suppose that some of the physical qubits composing the BwS are phase-flipped. Then, for any combination of phase-flips belonging to the ``Type-I''  or ``Type-II'' {sets
 or to their ``cross product''}, the average probability of obtaining the string $(0,0,..,0)$ as outcome of last-column measurements is upper bounded by 1/2, i.e. phase-flips are detected with probability at least 1/2. 
\end{lemma}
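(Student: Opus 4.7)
The plan is to study how a given pattern of Pauli-$Z$ errors on the physical qubits of a C-trap propagates to a logical error on the last-column measurements, and to bound the resulting acceptance probability averaged over the coin tosses of Sub-protocol \hyperlink{pr:spr2}{1.2}. The honest C-trap implements a layered product of CNOTs applied to $|+\rangle^{\otimes n}$ (possibly preceded by a Pauli-$Z$ on a random subset of rows chosen at step 1.1), and since CNOT stabilises $|++\rangle$ the expected value of every last-column $X$-measurement is known to Alice; acceptance at Step 4 of Protocol \hyperlink{pr:pr1}{1} is therefore equivalent to no residual logical $Z$ error surviving at the output of the logical circuit.

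First I would propagate each physical $Z$ toward the output using the standard stabiliser rules for measurement-based computation: a $Z$ on a qubit measured in the $X$ basis simply flips the outcome, which is equivalent to an $X$ error on the teleported logical qubit. These $X$ errors then flow through the logical CNOT layers by $\mathrm{CNOT}\,X_c\,\mathrm{CNOT}=X_c X_t$ and $\mathrm{CNOT}\,X_t\,\mathrm{CNOT}=X_t$, with their $Z$ partners propagating by the dual rules. The crucial observation is that Sub-protocol \hyperlink{pr:spr2}{1.2} picks each brick's control/target orientation independently via $c_{i,y}$, so one and the same physical phase-flip manifests itself as a logical error on different wires for different coin values, and this is exactly the source of the averaged detection probability.

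Next I would treat the three families of patterns in the statement by case analysis. For a pattern in Type-I or Type-II, the key point is that, although its support matches the R-trap rotation pattern of Sub-protocol \hyperlink{pr:spr1}{1.1} (and is therefore invisible in Lemma \ref{lem:r}), it necessarily intersects at least one brick of the C-trap whose orientation is uniformly random; conditioning on that $c_{i,y}$, the induced logical $Z$ is non-trivial with probability at least $1/2$, because exactly one of the two coin values routes the propagated $X$ onto a wire whose output is sampled. For a cross product of the two families, the bricks carrying the Type-I component and those carrying the Type-II component lie in disjoint rows or tapes with independent coins, so the bound is preserved by independence. In every case the probability that a non-trivial logical $Z$ reaches at least one output wire is at least $1/2$, which translates directly into a detection probability of at least $1/2$ since such a $Z$ flips the corresponding $X$-measurement away from its expected bit.

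The main obstacle will be unpacking the explicit definitions of Type-I, Type-II and their cross product inherited from the proof of Lemma \ref{lem:r}, and verifying that every such pattern has non-trivial support on at least one brick whose $c_{i,y}$ coin controls whether an $X$ error enters or bypasses an output wire. I would handle this through a single-brick sub-lemma asserting that, within one 10-qubit brick, any non-trivial phase-flip pattern yields a non-trivial logical $Z$ on at least one of the two outgoing wires for at least one of the two orientation choices, and then combine single-brick contributions using independence of the coins $\{c_{i,y}\}$ and $\{c_i\}$ to obtain the $1/2$ bound in the general case.
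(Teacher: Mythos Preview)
Your high-level plan---track how the logical Pauli by-products propagate through the randomly oriented CNOT layers and average over the coins $c_{i,y}$---is the right one, but two concrete problems prevent the argument from closing.

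First, a technical slip. Type-I errors are phase-flips on columns $4y-1$ and $4y+1$, and Type-II on columns $1$ and $m$; all of these are odd columns, and in the logical circuit of Figure~\ref{fig:BSdev2} phase-flips on odd-column physical qubits surface as logical \emph{$Z$} by-products between tapes (see Figure~\ref{fig:BSflip6}), not as $X$. So the propagation rules you need are
\[
(Z_c\otimes\mathbb{1}_t)\,cX_{c,t}=cX_{c,t}\,(Z_c\otimes\mathbb{1}_t),\qquad
(\mathbb{1}_c\otimes Z_t)\,cX_{c,t}=cX_{c,t}\,(Z_c\otimes Z_t),
\]
not the $X$-rules you quote. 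This matters: a $Z$ on the control wire commutes straight through a CNOT, so a Type-I pattern landing on the control row of a brick is invisible \emph{at that brick}, which is precisely why the R-trap missed it.

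Second, and more seriously, your single-brick sub-lemma does not compose. You condition on one coin $c_{i,y}$ and argue that for at least one of its two values a non-trivial by-product is produced. But combinations of Type-I/II errors generally touch many bricks, and the residual $Z$ pattern at the output depends on \emph{all} the coins jointly: a $Z$ created at one brick can be cancelled by a $Z$ created at a later brick whose coin you have not fixed. Your cross-product step (``disjoint rows or tapes with independent coins'') fails for the same reason---Type-I and Type-II errors may share rows, and even when they do not, the $Z$'s spread across rows via the $(\mathbb{1}\otimes Z)\mapsto(Z\otimes Z)$ rule and can interfere downstream.

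The paper replaces the local conditioning by a global pairing. Writing $\mathbf{cX}^{\bar k}_y$ for the CNOT layer in tape $y$ with orientation vector $\bar k$, it shows (equation~\eqref{eq:cc}) that whenever $\mathbf{Z}^{\bar g'}\mathbf{cX}^{\bar k}\mathbf{Z}^{\bar g}=\mathbf{cX}^{\bar k}$ for some $\bar k$, one has $\mathbf{Z}^{\bar g'}\mathbf{cX}^{\bar k\oplus\bar 1}\mathbf{Z}^{\bar g}\neq\mathbf{cX}^{\bar k\oplus\bar 1}$. The involution $\bar k\mapsto\bar k\oplus\bar 1$ therefore pairs every ``missed'' orientation with a ``detected'' one within a single tape, which is what actually yields the factor $1/2$; a separate ``merging'' step then reduces multi-tape error patterns to a two-tape configuration to which this pairing can be applied. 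To repair your proof you would need to replace the single-brick conditioning by this whole-tape pairing, and handle cancellations across tapes explicitly rather than by an independence claim.
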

We first use the Lemmas to prove Theorem \ref{theorem:ver1} and then we prove the Lemmas. The sets of Type-I and Type-II errors are formally introduced along the proofs of the Lemmas and illustrated in Figure \ref{fig:BSflip5}. 

\begin{proof}\textit{(Theorem \ref{theorem:ver1})}
First, we show that the protocol is complete with $\delta=1$. To do this, we show that the Sub-protocols assign the correct measurement angles:

\begin{itemize}
\item \underline{Sub-protocol \hyperlink{pr:spr1}{1.1}} For any tape $y\in(1,..,w)$ and row $i\in(1,..,n)$, the assignments made in steps (1.1) and (1.2) of Sub-protocol \hyperlink{pr:spr1}{1.1} correspond to Hadamards, $R_Z$-gates and $R_X$-gates (Figure \ref{fig:BS}). Thus, the logical unitary implemented within the computation acts locally on each logical qubit $i$. Considering the role of the counter $count$ in steps (1.1) and (1.2), it can be seen that (i) the gates implemented on each logical qubit $i$ by measuring physical qubits in the first tape are either Hadamards or $XY$-plane rotations, (ii) an odd number of Hadamards is implemented between $XY$-plane and $ZY$-plane rotations, and (iii) the gates implemented on each logical qubit $i$ by measuring physical qubits in the last tape are either Hadamards or $XY$-plane rotations. As a consequence, the relation $HR_X(\Phi)H=R_Z(\Phi)$ valid for any $\Phi$ implies that the overall unitary acting on each logical qubit $i$ is a rotation on the $XY$-plane of the Bloch sphere. Due to the same relation, the angle of the overall rotation can be (efficiently) computed as a summation of all of the angles characterizing the rotations implemented in the various tapes. Thus, the measurements of qubits in the last column of the BwS are expected to yield 0.

\item \underline{Sub-protocol \hyperlink{pr:spr2}{1.2}} Considering Figure \ref{fig:BS}, it can be seen that (i) the assignments made in step (1.1) of Sub-protocol \hyperlink{pr:spr2}{1.2} produce Pauli-$Z$ gates acting on a random subset of logical qubits and (ii) the assignments made in step (1.2) produce CNOTs in the logical circuit, target and control qubits being chosen at random. To see that the measurements
\end{itemize}

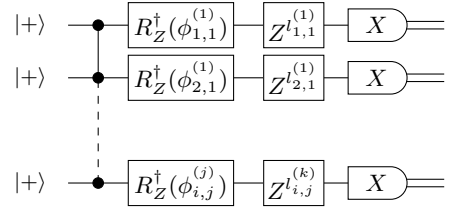
\begin{figure}[H]
	\centering
	\begin{tikzpicture}
	\draw (0.4,3.5) -- (0.4,4.2);
	\draw [dashed] (0.4,2.1) -- (0.4,3.5);
	\draw [fill] (0.4,4.2) circle [radius=0.7mm];
	\draw [fill] (0.4,3.5) circle [radius=0.7mm];
	\draw [fill] (0.4,2.1) circle [radius=0.7mm];
	
	\draw (0,4.2) -- (4.1,4.2);
	\draw (4.1,4.25) -- (5,4.25);
	\draw (4.1,4.15) -- (5,4.15);
	\draw (0,3.5) -- (4.1,3.5);
	\draw (4.1,3.55) -- (5,3.55);
	\draw (4.1,3.45) -- (5,3.45);
	\draw (0,2.1) -- (4.1,2.1);
	\draw (4.1,2.15) -- (5,2.15);
	\draw (4.1,2.05) -- (5,2.05);
	
	\node at (-0.5,4.2) {\footnotesize $\ket{+}$};
	\node at (-0.5,3.5) {\footnotesize $\ket{+}$};
	\node at (-0.5,2.1) {\footnotesize $\ket{+}$};
	
	\draw [fill=white] (0.8,3.9) rectangle (2.2,4.5);
	\node at (1.5,4.2) {\small $R^{\dagger}_{Z}(\phi^{^{(1)}}_{1,1})$};
	
	\draw [fill=white] (0.8,3.2) rectangle (2.2,3.8);
	\node at (1.5,3.5) {\small $R^{\dagger}_{Z}(\phi^{^{(1)}}_{2,1})$};
	
	\draw [fill=white] (0.8,1.8) rectangle (2.2,2.4);
	\node at (1.5,2.1) {\small $R^{\dagger}_Z(\phi^{^{(j)}}_{i,j})$};

	\draw [fill=white] (2.6,4.5) rectangle (3.4,3.9);
	\node at (3.0,4.2) {\small $Z^{l^{^{(1)}}_{1,1}}$};
	\draw [fill=white] (2.6,3.8) rectangle (3.4,3.2);
	\node at (3.0,3.5) {\small $Z^{l^{^{(1)}}_{2,1}}$};
	\draw [fill=white] (2.6,2.4) rectangle (3.4,1.8);
	\node at (3.0,2.1) {\small $Z^{l^{^{(k)}}_{i,j}}$};
	
	\node [fill=white,draw,rounded rectangle,rounded rectangle left arc=none,minimum width=1cm] (name) at (4.1,4.2) {$X$};
	\node [fill=white,draw,rounded rectangle,rounded rectangle left arc=none,minimum width=1cm] (name) at (4.1,3.5) {$X$};
	\node [fill=white,draw,rounded rectangle,rounded rectangle left arc=none,minimum width=1cm] (name) at (4.1,2.1) {$X$};

	\end{tikzpicture}
	\caption{\small Simplification of circuit in Figure \ref{fig:MBQC1circ7} after Bob's private system and the classical registers are traced out. The parameters $l^{^{(k)}}_{i,j}=0,1$ take into account deviations affecting physical qubits $(i,j)$.}
	\label{fig:MBQC1circ8}
\end{figure}

\begin{itemize}
	 \item[] of last-column qubits yield 0, it can be used that CNOT$_{c,t}=(\mathbb{1}_{\textrm{c}}\otimes|+\rangle_{\textrm{t}}\langle+|+Z_{\textrm{c}}\otimes|-\rangle_{\textrm{t}}\langle-|)/2$, where the index ``c'' labels the control qubit and ``t'' the target. This relation also allows to compute angles $\phi_{i,m}$ in an efficient way.
\end{itemize}
\noindent Thus, if Bob is honest, the trap computations always yield the expected outcome and Alice does always accept. This proves that $\delta=1$.

To prove that the protocol is $\varepsilon$-sound, consider the operations on Bob's side (represented in the circuit diagram in Figure \ref{fig:MBQC1circ6}). As in \cite{FK12,KD17}, we describe Bob's dishonest operations through a collection of unitaries (the ``deviations'') acting on yet-to-be-measured qubits, on the angles and on Bob's private system. The deviation $U_E$ represents the dishonest entangling operation, while $U^{^{(k)}}_{i,j}$ represents Bob deviations before the measurement of qubit $(i,j)$ in the $k$th graph. Thus, using for simplicity the notation $O\cdot\sigma=O\sigma O^{\dagger}$ for an operator $O$ and a state $\sigma$, the state of the system immediately before the measurements is of the form
\begin{align*}
{\sigma}_{\textup{out}}&=\sum_{\overline{\theta},\overline{r},\overline{r}'}\frac{U_{n,m}^{(v+1)}{cR_{n,m}^{\dagger(v+1)}}..U_{1,1}^{(1)}cR_{1,1}^{\dagger(1)}U_EE\cdot(\sigma^{\overline{\theta},\overline{r},\overline{r}'}_{\textup{in}})}{2^{2nm(v+1)}8^{nm(v+1)}}\textrm{ ,}
\end{align*}
where each operator ${cR}_{i,j}^{\dagger(k)}$ represents the rotation acting on qubit $(i,j)$ of computation $k$ and controlled by angle $\delta_{i,j}^{(k)}$, $E$ represents the global entangling operation and the summation is made over all of the combinations of the angles $\{\theta_{i,j}^{(k)}\}$ and of the sets $\{r_{i,j}^{(k)}\}$ and $\{r_{i,j}^{\prime(k)}\}$. For any particular choice of $\overline{\theta}$, $\overline{r}$ and $\overline{r}'$, the state $\sigma^{\overline{\theta},\overline{r},\overline{r}'}_{\textup{in}}$ is equal to
\begin{equation*}
\sigma_{\textup{in}}^{\overline{\theta},\overline{r},\overline{r}'}=\bigotimes_{i,j,k}|+\rangle_{\theta_{i,j}^{^{\prime(k)}}}\langle+|\otimes|\delta_{i,j}^{(k)}\rangle\langle\delta_{i,j}^{(k)}|\otimes|0\rangle^{|B|}\langle0|\textrm{ ,}
\end{equation*}
where $\theta_{i,j}^{^{\prime(k)}}=\theta_{i,j}^{^{(k)}}+\pi\sum_{(i',j')\sim(i,j)}^{(k)}r_{i',j'}^{\prime(k)}$.
To proceed, we rewrite ${\sigma}_{\textup{out}}$ as follows. First, we move the deviations towards the end of the circuit and ``{merge}'' them into a sole unitary $U_B$. Next, as we do along the proof of correctness (see Appendix \ref{app:corr}), we rewrite the controlled $R_Z$-gates as un-controlled rotations and cancel every angle $\theta^{(k)}_{i,j}$ in the $R_Z$-gates with the $XY$-plane pre-rotations characterizing the state of the physical qubits. At this point, summing over all the possible $\theta^{(k)}_{i,j}$, the angles $\delta^{(k)}_{i,j}$ become the completely mixed state. We thus obtain the state
\begin{align*}
{\sigma}_{\textup{out}}&=\sum_{\overline{r},\overline{r}'}\frac{U_B\big[\otimes_{i,j,k}Z^{r_{i,j}^{(k)}}X^{r_{i,j}^{\prime(k)}}R^{\dagger}(\phi_{i,j}^{(k)})X^{r_{i,j}^{\prime(k)}}\big]E\cdot(\sigma_{\textup{in}}^{\overline{0},\overline{r},\overline{r}'})}{2^{2nm(v+1)}}
\end{align*}
where
\begin{align*}
\sigma_{\textup{in}}^{\overline{0},\overline{r},\overline{r}'}=\bigotimes_{i,j,k}&Z^{\sum_{(i',j')\sim(i,j)}{r}_{i',j'}^{\prime(k)}}|+\rangle_{i,j}^{(k)}\langle+|Z^{\sum_{(i',j')\sim(i,j)}{r}_{i',j'}^{\prime(k)}}\cr
&\otimes\mathbb{1}^{3nm(v+1)}\otimes|0\rangle^{|B|}\langle0|
\end{align*}
and $\mathbb{1}^{3nm(v+1)}$ is the identity on the classical registers containing the information about the angles. Using the relation $(X\otimes\mathbb{1})cZ=cZ(X\otimes Z)$, we can commute the Pauli-$X$\textit{s} on the right-hand side of the rotations with $E$, and subsequently cancel them out with the Pauli-$Zs$\textit{} in $\sigma_{\textup{in}}^{\overline{0},\overline{r},\overline{r}'}$. We thus obtain the state in Figure \ref{fig:MBQC1circ7}.

When both Bob's private system and the classical register are traced out, $U_B$ becomes a CPTP-map $\E=\{E_u\}$. Rewriting the operational elements $\{E_u\}$ as a linear combination of tensor products of Pauli operators $P_l$, the state of the system just before the measurement can be rewritten as
\begin{align*}
{\rho}^{}_{\textrm{out}}&=\sum_{\substack{u,l,l^\prime \\ \overline{r}, \overline{r}^\prime}}\frac{a_{u,l}a^*_{u,l'}}{2^{2nm(v+1)}}\textrm{ }\bigotimes_{i,j,k}P_{l|(i,j)}^{(k)}Q_{(i,j)}^{(k)}\rho_{\textrm{in}} \textrm{ }Q_{(i,j)}^{ (k)}P_{l'|(i,j)}^{(k)}\textrm{ ,}
\end{align*}
where $P_{l|(i,j)}^{(k)}$ is the component of $P_l$ acting on qubit $(i,j)$ of the $k$th computation, $Q_{(i,j)}^{(k)}=Z_{i,j}^{r_{i,j}^{(k)}}X_{i,j}^{r^{\prime(k)}_{i,j}}$, $a_{u,l}$ are complex numbers and
\begin{equation*}
\rho_{\textrm{in}}=E\bigg[\bigotimes_{i,j,k}{}R^{\dagger}_Z(\phi_{i,j}^{(k)})|+\rangle_{i,j}^{^{(k)}}\langle+|R_Z(\phi_{i,j}^{(k)})\bigg]E
\end{equation*}
After the measurements in the Pauli-$X$ basis, the above state becomes
\begin{align*}
{\rho}^{\prime}_{\textrm{out}}= & \sum_{\substack{u,l,l^\prime \\ \overline{r}, \overline{r}^\prime,\overline{s}}}\frac{a_{u,l}a^*_{u,l'}}{2^{2nm(v+1)}}\textrm{ }\bigotimes_{i,j,k}\textrm{ }Z^{s_{i,j}^{(k)}}|+\rangle_{i,j}^{^{(k)}}\langle+|Z^{s_{i,j}^{(k)}}\textrm { }\times\cr
\times&\textrm{ }_{i,j}^{^{(k)}}\langle+|Z^{s_{i,j}^{(k)}}P_{l|(i,j)}^{(k)}Q_{(i,j)}^{(k)}\rho_{\textrm{in}} \textrm{ }Q_{(i,j)}^{ (k)}P_{l'|(i,j)}^{(k)}Z^{s_{i,j}^{(k)}}|+\rangle_{i,j}^{^{(k)}}
\end{align*}
where $\overline{s}=\{s_{i,j}^{(k)}\}$ represents a combination of outcomes. After Alice recomputes $s_{i,j}^{(k)}$ as $s_{i,j}^{(k)}\oplus r_{i,j}^{(k)}$, ${\rho}^{\prime}_{\textrm{out}}$ becomes

\begin{align*}
{\rho}^{\prime}_{\textrm{out}}= & \sum_{\substack{u,l,l^\prime \\ \overline{r}, \overline{r}^\prime,\overline{s}}}\frac{a_{u,l}a^*_{u,l'}}{2^{2nm(v+1)}}\textrm{ }\bigotimes_{i,j,k}Z^{s_{i,j}^{(k)}\oplus r_{i,j}^{(k)}}|+\rangle_{i,j}^{^{(k)}}\langle+|Z^{s_{i,j}^{(k)}\oplus r_{i,j}^{(k)}} \cr
\times&\textrm{ }_{i,j}^{^{(k)}}\langle+|Z^{s_{i,j}^{(k)}}P_{l|(i,j)}^{(k)}Q_{(i,j)}^{(k)}\rho_{\textrm{in}} \textrm{ }Q_{(i,j)}^{(k)}P_{l'|(i,j)}^{(k)}Z^{s_{i,j}^{(k)}}|+\rangle_{i,j}^{^{(k)}}\\ \cr
= & \sum_{\substack{u,l,l^\prime \\ \overline{r}, \overline{r}^\prime,\overline{s}^\prime}}\frac{a_{u,l}a^*_{u,l'}}{2^{2nm(v+1)}}\textrm{ }\bigotimes_{i,j,k}\textrm{ }Z^{s_{i,j}^{\prime(k)}}|+\rangle_{i,j}^{^{(k)}}\langle+|Z^{s_{i,j}^{\prime(k)}} \cr
&\times\textrm{ }_{i,j}^{^{(k)}}\langle+|Z^{s_{i,j}^{\prime(k)}}\big[\p_{l,l'}^{(i,j,k)}(\rho_{\textrm{in}})\big]Z^{s_{i,j}^{\prime(k)}}|+\rangle_{i,j}^{^{(k)}}
\end{align*}
where
$$\p_{l,l'}^{(i,j,k)}(\rho_{\textrm{in}})=Q_{(i,j)}^{(k)}P_{l|(i,j)}^{(k)}Q_{(i,j)}^{(k)}\rho_{\textrm{in}} \textrm{ }Q_{(i,j)}^{(k)}P_{l'|(i,j)}^{(k)}Q_{(i,j)}^{(k)}$$
To obtain the second equality, one can (i) make the change of variable $s_{(i,j)}^{(k)}\rightarrow s_{(i,j)}^{\prime(k)}=s_{(i,j)}^{(k)}\oplus r_{(i,j)}^{(k)}$ and (ii) use the fact that Pauli-$X$ operators stabilize $\ket{+}$ states to add extra $X_{i,j}^{r^{\prime(k)}_{i,j}}$. Summing over all possible $\{r_{i,j}^{(k)}\}$ and $\{r_{i,j}^{\prime(k)}\}$ and using the Pauli-twirl (Lemma \ref{lem:tw}), the above state becomes
\begin{align*}
{\rho}^{\prime}_{\textrm{out}}= & \sum_{\substack{l,\overline{s}}}b_l\textrm{ }\bigotimes_{i,j,k}\textrm{ }Z^{s_{i,j}^{(k)}}|+\rangle_{i,j}^{^{(k)}}\langle+|Z^{s_{i,j}^{(k)}}\cr
&\times\textrm{ }_{i,j}^{^{(k)}}\langle+|Z^{s_{i,j}^{(k)}}P_{l|(i,j)}^{(k)}\rho_{\textrm{in}} \textrm{ }P_{l|(i,j)}^{(k)}Z^{s_{i,j}^{(k)}}|+\rangle_{i,j}^{^{(k)}}
\end{align*}
where $b_l=\sum_u|a_{u,l}|^2$ and $\sum_l b_l=1$. Thus, the effects of $\E$ are reduced to that of a convex combination of tensor products of Pauli operators affecting each qubit locally. Moreover, since Pauli-$X$ operators stabilize $\ket{+}$ states, the effect of the Pauli-$X$ components of the Pauli operators $P_l$ on the system is trivial. Thus, the circuit in Figure \ref{fig:MBQC1circ7} can be rewritten as the circuit in Figure \ref{fig:MBQC1circ8}. This result is similar to that obtained in \cite{ABE08,B15,KD17}crucial, and is crucial. Indeed, it allows us to use Lemmas \ref{lem:r} and \ref{lem:c}: on average, Bob's deviations reduce to local phase-flips of physical qubits, therefore the problem of detecting Bob's deviations becomes the problem of detecting all of the possible combinations of phase-flips affecting the various graphs. As illustrated in Figures \ref{fig:BSdev2} and \ref{fig:BSdev3}, this produces by-products of Pauli-$X$ and Pauli-$Z$ affecting the computation at the logical level.

To obtain equation \ref{eq:oneround}, we consider the case where Bob deviates on $\widetilde{v}$ computations. We first compute an upper bound to the probability $p(E_1\wedge E_2| \widetilde{v})$ of the events $E_1$ \textit{Bob corrupts the target computation} and $E_2$ \textit{Bob is not detected} happening simultaneously when Bob deviates on $\widetilde{v}$ computations. Next, we maximize over $\widetilde{v}$. The probability $p(E_1\wedge E_2| \widetilde{v})$ is upper-bounded as 
\begin{align*}~\label{fig:}
p(E_1\wedge E_2| \widetilde{v})= &\textrm{ } \textup{}{p{(E_1|\widetilde{v})}}\textup{}{p{(E_2|E_1,\widetilde{v})}}\cr
&\leq\frac{\widetilde{v}}{v+1}\bigg(\frac{7}{8}\bigg)^{\widetilde{v}-1}\textrm{ ,}
\end{align*}
\vspace{0.1cm}

\onecolumngrid

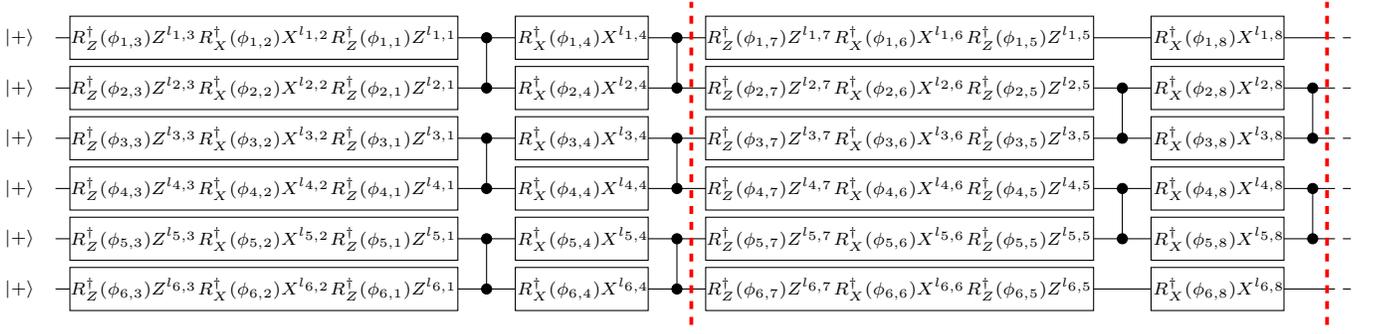
\begin{figure}[H]
\begin{tikzpicture}[scale=0.955, every node/.style={scale=0.95}]


\node at (-0.5,0.0) {\scriptsize $\ket{+}$};
\node at (-0.5,0.7) {\scriptsize $\ket{+}$};
\node at (-0.5,1.4) {\scriptsize $\ket{+}$};
\node at (-0.5,2.1) {\scriptsize $\ket{+}$};
\node at (-0.5,2.8) {\scriptsize $\ket{+}$};
\node at (-0.5,3.5) {\scriptsize $\ket{+}$};

\draw (0.0,0.0) -- (8.65+8.65+0.2+0.2,0.0);
\draw (0.0,0.7) -- (8.65+8.65+0.2+0.2,0.7);
\draw (0.0,1.4) -- (8.65+8.65+0.2+0.2,1.4);
\draw (0.0,2.1) -- (8.65+8.65+0.2+0.2,2.1);
\draw (0.0,2.8) -- (8.65+8.65+0.2+0.2,2.8);
\draw (0.0,3.5) -- (8.65+8.65+0.2+0.2,3.5);

\draw [dashed] (8.65+8.65+0.2+0.2,0.0) -- (8.65+8.65+0.2+0.6,0.0);
\draw [dashed] (8.65+8.65+0.2+0.2,0.7) -- (8.65+8.65+0.2+0.6,0.7);
\draw [dashed] (8.65+8.65+0.2+0.2,1.4) -- (8.65+8.65+0.2+0.6,1.4);
\draw [dashed] (8.65+8.65+0.2+0.2,2.1) -- (8.65+8.65+0.2+0.6,2.1);
\draw [dashed] (8.65+8.65+0.2+0.2,2.8) -- (8.65+8.65+0.2+0.6,2.8);
\draw [dashed] (8.65+8.65+0.2+0.2,3.5) -- (8.65+8.65+0.2+0.6,3.5);

\draw [fill=white] (0.2,-0.3) rectangle (5.6,0.3);
\draw [fill=white] (0.2,0.4) rectangle (5.6,1.0);
\draw [fill=white] (0.2,1.1) rectangle (5.6,1.7);
\draw [fill=white] (0.2,1.8) rectangle (5.6,2.4);
\draw [fill=white] (0.2,2.5) rectangle (5.6,3.1);
\draw [fill=white] (0.2,3.2) rectangle (5.6,3.8);

\node at (2.9,0.0) {\scriptsize $R^{\dagger}_Z{(\phi_{6,3})}Z^{l_{6,3}}R^{\dagger}_X(\phi_{6,2})X^{l_{6,2}}R^{\dagger}_Z(\phi_{6,1})Z^{l_{6,1}}$};\node at (2.9,0.7) {\scriptsize $R^{\dagger}_Z{(\phi_{5,3})}Z^{l_{5,3}}R^{\dagger}_X(\phi_{5,2})X^{l_{5,2}}R^{\dagger}_Z(\phi_{5,1})Z^{l_{5,1}}$};\node at (2.9,1.4) {\scriptsize $R^{\dagger}_Z{(\phi_{4,3})}Z^{l_{4,3}}R^{\dagger}_X(\phi_{4,2})X^{l_{4,2}}R^{\dagger}_Z(\phi_{4,1})Z^{l_{4,1}}$};\node at (2.9,2.1) {\scriptsize $R^{\dagger}_Z{(\phi_{3,3})}Z^{l_{3,3}}R^{\dagger}_X(\phi_{3,2})X^{l_{3,2}}R^{\dagger}_Z(\phi_{3,1})Z^{l_{3,1}}$};\node at (2.9,2.8) {\scriptsize $R^{\dagger}_Z{(\phi_{2,3})}Z^{l_{2,3}}R^{\dagger}_X(\phi_{2,2})X^{l_{2,2}}R^{\dagger}_Z(\phi_{2,1})Z^{l_{2,1}}$};\node at (2.9,3.5) {\scriptsize $R^{\dagger}_Z{(\phi_{1,3})}Z^{l_{1,3}}R^{\dagger}_X(\phi_{1,2})X^{l_{1,2}}R^{\dagger}_Z(\phi_{1,1})Z^{l_{1,1}}$};

\draw [fill=black] (6,0.0) circle [radius=0.07cm];
\draw [fill=black] (6,0.7) circle [radius=0.07cm];
\draw [fill=black] (6,1.4) circle [radius=0.07cm];
\draw [fill=black] (6,2.1) circle [radius=0.07cm];
\draw [fill=black] (6,2.8) circle [radius=0.07cm];
\draw [fill=black] (6,3.5) circle [radius=0.07cm];

\draw (6,0.0) -- (6,0.7);
\draw (6,1.4) -- (6,2.1);
\draw (6,2.8) -- (6,3.5);

\draw [fill=white] (6.4,-0.3) rectangle (8.25,0.3);
\draw [fill=white] (6.4,0.4) rectangle (8.25,1.0);
\draw [fill=white] (6.4,1.1) rectangle (8.25,1.7);
\draw [fill=white] (6.4,1.8) rectangle (8.25,2.4);
\draw [fill=white] (6.4,2.5) rectangle (8.25,3.1);
\draw [fill=white] (6.4,3.2) rectangle (8.25,3.8);

\node at (7.35,0.0) {\scriptsize $R^{\dagger}_X{(\phi_{6,4})}X^{l_{6,4}}$};
\node at (7.35,0.7) {\scriptsize $R^{\dagger}_X{(\phi_{5,4})}X^{l_{5,4}}$};
\node at (7.35,1.4) {\scriptsize $R^{\dagger}_X{(\phi_{4,4})}X^{l_{4,4}}$};
\node at (7.35,2.1) {\scriptsize $R^{\dagger}_X{(\phi_{3,4})}X^{l_{3,4}}$};
\node at (7.35,2.8) {\scriptsize $R^{\dagger}_X{(\phi_{2,4})}X^{l_{2,4}}$};
\node at (7.35,3.5) {\scriptsize $R^{\dagger}_X{(\phi_{1,4})}X^{l_{1,4}}$};

\draw [fill=black] (8.65,0.0) circle [radius=0.07cm];
\draw [fill=black] (8.65,0.7) circle [radius=0.07cm];
\draw [fill=black] (8.65,1.4) circle [radius=0.07cm];
\draw [fill=black] (8.65,2.1) circle [radius=0.07cm];
\draw [fill=black] (8.65,2.8) circle [radius=0.07cm];
\draw [fill=black] (8.65,3.5) circle [radius=0.07cm];

\draw (8.65,0.0) -- (8.65,0.7);
\draw (8.65,1.4) -- (8.65,2.1);
\draw (8.65,2.8) -- (8.65,3.5);

\draw [fill=white] (0.4+8.65,-0.3) rectangle (5.6+8.65+0.2,0.3);

\draw [red, dashed, line width=0.5mm] (8.65+8.65+0.4,-0.5) -- (8.65+8.65+0.4,4.0);
\draw [red, dashed, line width=0.5mm] (8.65+0.2,-0.5) -- (8.65+0.2,4.0);
\draw [fill=white] (0.4+8.65,0.4) rectangle (5.6+8.65+0.2,1.0);
\draw [fill=white] (0.4+8.65,1.1) rectangle (5.6+8.65+0.2,1.7);
\draw [fill=white] (0.4+8.65,1.8) rectangle (5.6+8.65+0.2,2.4);
\draw [fill=white] (0.4+8.65,2.5) rectangle (5.6+8.65+0.2,3.1);
\draw [fill=white] (0.4+8.65,3.2) rectangle (5.6+8.65+0.2,3.8);

\node at (2.9+0.2+8.65,0.0) {\scriptsize $R^{\dagger}_Z{(\phi_{6,7})}Z^{l_{6,7}}R^{\dagger}_X(\phi_{6,6})X^{l_{6,6}}R^{\dagger}_Z(\phi_{6,5})Z^{l_{6,5}}$};\node at (2.9+8.65+0.2,0.7) {\scriptsize $R^{\dagger}_Z{(\phi_{5,7})}Z^{l_{5,7}}R^{\dagger}_X(\phi_{5,6})X^{l_{5,6}}R^{\dagger}_Z(\phi_{5,5})Z^{l_{5,5}}$};\node at (2.9+8.65+0.2,1.4) {\scriptsize $R^{\dagger}_Z{(\phi_{4,7})}Z^{l_{4,7}}R^{\dagger}_X(\phi_{4,6})X^{l_{4,6}}R^{\dagger}_Z(\phi_{4,5})Z^{l_{4,5}}$};\node at (2.9+8.65+0.2,2.1) {\scriptsize $R^{\dagger}_Z{(\phi_{3,7})}Z^{l_{3,7}}R^{\dagger}_X(\phi_{3,6})X^{l_{3,6}}R^{\dagger}_Z(\phi_{3,5})Z^{l_{3,5}}$};\node at (2.9+8.65+0.2,2.8) {\scriptsize $R^{\dagger}_Z{(\phi_{2,7})}Z^{l_{2,7}}R^{\dagger}_X(\phi_{2,6})X^{l_{2,6}}R^{\dagger}_Z(\phi_{2,5})Z^{l_{2,5}}$};\node at (2.9+8.65+0.2,3.5) {\scriptsize $R^{\dagger}_Z{(\phi_{1,7})}Z^{l_{1,7}}R^{\dagger}_X(\phi_{1,6})X^{l_{1,6}}R^{\dagger}_Z(\phi_{1,5})Z^{l_{1,5}}$};

\draw [fill=black] (6+8.65+0.2,0.7) circle [radius=0.07cm];
\draw [fill=black] (6+8.65+0.2,1.4) circle [radius=0.07cm];
\draw [fill=black] (6+8.65+0.2,2.1) circle [radius=0.07cm];
\draw [fill=black] (6+8.65+0.2,2.8) circle [radius=0.07cm];

\draw (6+8.65+0.2,0.7) -- (6+8.65+0.2,1.4);
\draw (6+8.65+0.2,2.1) -- (6+8.65+0.2,2.8);

\draw [fill=white] (6+8.65+0.2+0.4,-0.3) rectangle (8.25+8.65+0.2,0.3);
\draw [fill=white] (6+8.65+0.2+0.4,0.4) rectangle (8.25+8.65+0.2,1.0);
\draw [fill=white] (6+8.65+0.2+0.4,1.1) rectangle (8.25+8.65+0.2,1.7);
\draw [fill=white] (6+8.65+0.2+0.4,1.8) rectangle (8.25+8.65+0.2,2.4);
\draw [fill=white] (6+8.65+0.2+0.4,2.5) rectangle (8.25+8.65+0.2,3.1);
\draw [fill=white] (6+8.65+0.2+0.4,3.2) rectangle (8.25+8.65+0.2,3.8);

\node at (7.35+8.65+0.2,0.0) {\scriptsize $R^{\dagger}_X{(\phi_{6,8})}X^{l_{6,8}}$};
\node at (7.35+8.65+0.2,0.7) {\scriptsize $R^{\dagger}_X{(\phi_{5,8})}X^{l_{5,8}}$};
\node at (7.35+8.65+0.2,1.4) {\scriptsize $R^{\dagger}_X{(\phi_{4,8})}X^{l_{4,8}}$};
\node at (7.35+8.65+0.2,2.1) {\scriptsize $R^{\dagger}_X{(\phi_{3,8})}X^{l_{3,8}}$};
\node at (7.35+8.65+0.2,2.8) {\scriptsize $R^{\dagger}_X{(\phi_{2,8})}X^{l_{2,8}}$};
\node at (7.35+8.65+0.2,3.5) {\scriptsize $R^{\dagger}_X{(\phi_{1,8})}X^{l_{1,8}}$};

\draw [fill=black] (8.65+8.65+0.2,0.7) circle [radius=0.07cm];
\draw [fill=black] (8.65+8.65+0.2,1.4) circle [radius=0.07cm];
\draw [fill=black] (8.65+8.65+0.2,2.1) circle [radius=0.07cm];
\draw [fill=black] (8.65+8.65+0.2,2.8) circle [radius=0.07cm];

\draw (8.65+8.65+0.2,0.7) -- (8.65+8.65+0.2,1.4);
\draw (8.65+8.65+0.2,2.1) -- (8.65+8.65+0.2,2.8);

\end{tikzpicture}
\caption{\small Logical circuit associated to a computation on a six-row BwS affected by phase-flips at the physical levels. Red dashed lines separate operations implemented within different tapes of the BwS. Each row represents a logical qubit. Extra Pauli operators $X^{l_{i,j}}$ and $Z^{l_{i,j}}$ are logical by-products generated by phase-flipping the physical qubit $(i,j)$. We also refer to by-products as ``errors''.}
\label{fig:BSdev2}
\end{figure}

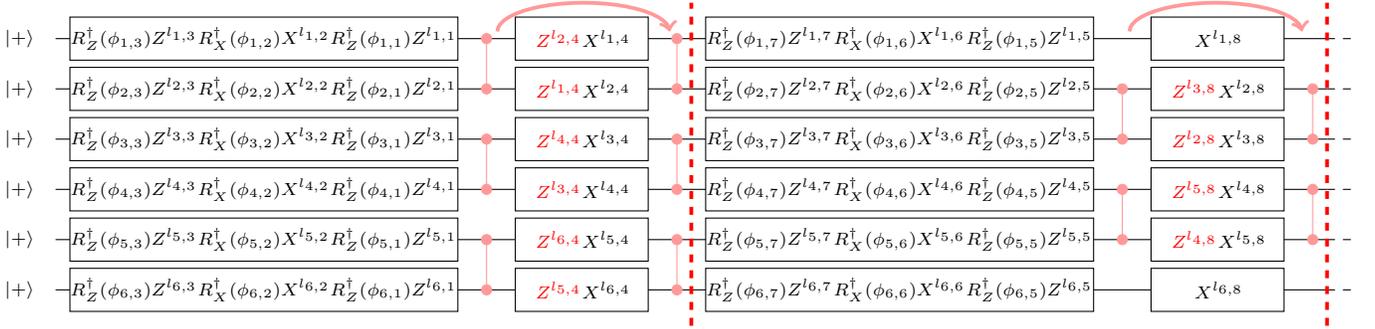
\begin{figure}[H]
\begin{tikzpicture}[scale=0.955, every node/.style={scale=0.95}]


\node at (-0.5,0.0) {\scriptsize $\ket{+}$};
\node at (-0.5,0.7) {\scriptsize $\ket{+}$};
\node at (-0.5,1.4) {\scriptsize $\ket{+}$};
\node at (-0.5,2.1) {\scriptsize $\ket{+}$};
\node at (-0.5,2.8) {\scriptsize $\ket{+}$};
\node at (-0.5,3.5) {\scriptsize $\ket{+}$};

\draw (0.0,0.0) -- (8.65+8.65+0.2+0.2,0.0);
\draw (0.0,0.7) -- (8.65+8.65+0.2+0.2,0.7);
\draw (0.0,1.4) -- (8.65+8.65+0.2+0.2,1.4);
\draw (0.0,2.1) -- (8.65+8.65+0.2+0.2,2.1);
\draw (0.0,2.8) -- (8.65+8.65+0.2+0.2,2.8);
\draw (0.0,3.5) -- (8.65+8.65+0.2+0.2,3.5);

\draw [dashed] (8.65+8.65+0.2+0.2,0.0) -- (8.65+8.65+0.2+0.6,0.0);
\draw [dashed] (8.65+8.65+0.2+0.2,0.7) -- (8.65+8.65+0.2+0.6,0.7);
\draw [dashed] (8.65+8.65+0.2+0.2,1.4) -- (8.65+8.65+0.2+0.6,1.4);
\draw [dashed] (8.65+8.65+0.2+0.2,2.1) -- (8.65+8.65+0.2+0.6,2.1);
\draw [dashed] (8.65+8.65+0.2+0.2,2.8) -- (8.65+8.65+0.2+0.6,2.8);
\draw [dashed] (8.65+8.65+0.2+0.2,3.5) -- (8.65+8.65+0.2+0.6,3.5);

\draw [fill=white] (0.2,-0.3) rectangle (5.6,0.3);
\draw [fill=white] (0.2,0.4) rectangle (5.6,1.0);
\draw [fill=white] (0.2,1.1) rectangle (5.6,1.7);
\draw [fill=white] (0.2,1.8) rectangle (5.6,2.4);
\draw [fill=white] (0.2,2.5) rectangle (5.6,3.1);
\draw [fill=white] (0.2,3.2) rectangle (5.6,3.8);

\node at (2.9,0.0) {\scriptsize $R^{\dagger}_Z{(\phi_{6,3})}Z^{l_{6,3}}R^{\dagger}_X(\phi_{6,2})X^{l_{6,2}}R^{\dagger}_Z(\phi_{6,1})Z^{l_{6,1}}$};\node at (2.9,0.7) {\scriptsize $R^{\dagger}_Z{(\phi_{5,3})}Z^{l_{5,3}}R^{\dagger}_X(\phi_{5,2})X^{l_{5,2}}R^{\dagger}_Z(\phi_{5,1})Z^{l_{5,1}}$};\node at (2.9,1.4) {\scriptsize $R^{\dagger}_Z{(\phi_{4,3})}Z^{l_{4,3}}R^{\dagger}_X(\phi_{4,2})X^{l_{4,2}}R^{\dagger}_Z(\phi_{4,1})Z^{l_{4,1}}$};\node at (2.9,2.1) {\scriptsize $R^{\dagger}_Z{(\phi_{3,3})}Z^{l_{3,3}}R^{\dagger}_X(\phi_{3,2})X^{l_{3,2}}R^{\dagger}_Z(\phi_{3,1})Z^{l_{3,1}}$};\node at (2.9,2.8) {\scriptsize $R^{\dagger}_Z{(\phi_{2,3})}Z^{l_{2,3}}R^{\dagger}_X(\phi_{2,2})X^{l_{2,2}}R^{\dagger}_Z(\phi_{2,1})Z^{l_{2,1}}$};\node at (2.9,3.5) {\scriptsize $R^{\dagger}_Z{(\phi_{1,3})}Z^{l_{1,3}}R^{\dagger}_X(\phi_{1,2})X^{l_{1,2}}R^{\dagger}_Z(\phi_{1,1})Z^{l_{1,1}}$};

\draw [fill=white] (6.4,-0.3) rectangle (8.25,0.3);
\draw [fill=white] (6.4,0.4) rectangle (8.25,1.0);
\draw [fill=white] (6.4,1.1) rectangle (8.25,1.7);
\draw [fill=white] (6.4,1.8) rectangle (8.25,2.4);
\draw [fill=white] (6.4,2.5) rectangle (8.25,3.1);
\draw [fill=white] (6.4,3.2) rectangle (8.25,3.8);

\node at (7.35,0.0) {\scriptsize $\textcolor{red}{Z^{l_{5,4}}}X^{l_{6,4}}$};
\node at (7.35,0.7) {\scriptsize $\textcolor{red}{Z^{l_{6,4}}}X^{l_{5,4}}$};
\node at (7.35,1.4) {\scriptsize $\textcolor{red}{Z^{l_{3,4}}}X^{l_{4,4}}$};
\node at (7.35,2.1) {\scriptsize $\textcolor{red}{Z^{l_{4,4}}}X^{l_{3,4}}$};
\node at (7.35,2.8) {\scriptsize $\textcolor{red}{Z^{l_{1,4}}}X^{l_{2,4}}$};
\node at (7.35,3.5) {\scriptsize $\textcolor{red}{Z^{l_{2,4}}}X^{l_{1,4}}$};

\draw [fill=white] (0.4+8.65,-0.3) rectangle (5.6+8.65+0.2,0.3);

\draw [red, dashed, line width=0.5mm] (8.65+8.65+0.4,-0.5) -- (8.65+8.65+0.4,4.0);
\draw [red, dashed, line width=0.5mm] (8.65+0.2,-0.5) -- (8.65+0.2,4.0);
\draw [fill=white] (0.4+8.65,0.4) rectangle (5.6+8.65+0.2,1.0);
\draw [fill=white] (0.4+8.65,1.1) rectangle (5.6+8.65+0.2,1.7);
\draw [fill=white] (0.4+8.65,1.8) rectangle (5.6+8.65+0.2,2.4);
\draw [fill=white] (0.4+8.65,2.5) rectangle (5.6+8.65+0.2,3.1);
\draw [fill=white] (0.4+8.65,3.2) rectangle (5.6+8.65+0.2,3.8);

\node at (2.9+0.2+8.65,0.0) {\scriptsize $R^{\dagger}_Z{(\phi_{6,7})}Z^{l_{6,7}}R^{\dagger}_X(\phi_{6,6})X^{l_{6,6}}R^{\dagger}_Z(\phi_{6,5})Z^{l_{6,5}}$};\node at (2.9+8.65+0.2,0.7) {\scriptsize $R^{\dagger}_Z{(\phi_{5,7})}Z^{l_{5,7}}R^{\dagger}_X(\phi_{5,6})X^{l_{5,6}}R^{\dagger}_Z(\phi_{5,5})Z^{l_{5,5}}$};\node at (2.9+8.65+0.2,1.4) {\scriptsize $R^{\dagger}_Z{(\phi_{4,7})}Z^{l_{4,7}}R^{\dagger}_X(\phi_{4,6})X^{l_{4,6}}R^{\dagger}_Z(\phi_{4,5})Z^{l_{4,5}}$};\node at (2.9+8.65+0.2,2.1) {\scriptsize $R^{\dagger}_Z{(\phi_{3,7})}Z^{l_{3,7}}R^{\dagger}_X(\phi_{3,6})X^{l_{3,6}}R^{\dagger}_Z(\phi_{3,5})Z^{l_{3,5}}$};\node at (2.9+8.65+0.2,2.8) {\scriptsize $R^{\dagger}_Z{(\phi_{2,7})}Z^{l_{2,7}}R^{\dagger}_X(\phi_{2,6})X^{l_{2,6}}R^{\dagger}_Z(\phi_{2,5})Z^{l_{2,5}}$};\node at (2.9+8.65+0.2,3.5) {\scriptsize $R^{\dagger}_Z{(\phi_{1,7})}Z^{l_{1,7}}R^{\dagger}_X(\phi_{1,6})X^{l_{1,6}}R^{\dagger}_Z(\phi_{1,5})Z^{l_{1,5}}$};

\draw [fill=white] (6+8.65+0.2+0.4,-0.3) rectangle (8.25+8.65+0.2,0.3);
\draw [fill=white] (6+8.65+0.2+0.4,0.4) rectangle (8.25+8.65+0.2,1.0);
\draw [fill=white] (6+8.65+0.2+0.4,1.1) rectangle (8.25+8.65+0.2,1.7);
\draw [fill=white] (6+8.65+0.2+0.4,1.8) rectangle (8.25+8.65+0.2,2.4);
\draw [fill=white] (6+8.65+0.2+0.4,2.5) rectangle (8.25+8.65+0.2,3.1);
\draw [fill=white] (6+8.65+0.2+0.4,3.2) rectangle (8.25+8.65+0.2,3.8);

\node at (7.35+8.65+0.2,0.0) {\scriptsize $X^{l_{6,8}}$};
\node at (7.35+8.65+0.2,0.7) {\scriptsize $\textcolor{red}{Z^{l_{4,8}}}X^{l_{5,8}}$};
\node at (7.35+8.65+0.2,1.4) {\scriptsize $\textcolor{red}{Z^{l_{5,8}}}X^{l_{4,8}}$};
\node at (7.35+8.65+0.2,2.1) {\scriptsize $\textcolor{red}{Z^{l_{2,8}}}X^{l_{3,8}}$};
\node at (7.35+8.65+0.2,2.8) {\scriptsize $\textcolor{red}{Z^{l_{3,8}}}X^{l_{2,8}}$};
\node at (7.35+8.65+0.2,3.5) {\scriptsize $X^{l_{1,8}}$};

\draw [red!40,fill=red!40] (6,0.0) circle [radius=0.07cm];
\draw [red!40,fill=red!40] (6,0.7) circle [radius=0.07cm];
\draw [red!40,fill=red!40] (6,1.4) circle [radius=0.07cm];
\draw [red!40,fill=red!40] (6,2.1) circle [radius=0.07cm];
\draw [red!40,fill=red!40] (6,2.8) circle [radius=0.07cm];
\draw [red!40,fill=red!40] (6,3.5) circle [radius=0.07cm];

\draw [red!40] (6,0.0) -- (6,0.7);
\draw [red!40] (6,1.4) -- (6,2.1);
\draw [red!40] (6,2.8) -- (6,3.5);

\draw [red!40,fill=red!40] (8.65,0.0) circle [radius=0.07cm];
\draw [red!40,fill=red!40] (8.65,0.7) circle [radius=0.07cm];
\draw [red!40,fill=red!40] (8.65,1.4) circle [radius=0.07cm];
\draw [red!40,fill=red!40] (8.65,2.1) circle [radius=0.07cm];
\draw [red!40,fill=red!40] (8.65,2.8) circle [radius=0.07cm];
\draw [red!40,fill=red!40] (8.65,3.5) circle [radius=0.07cm];

\draw [red!40] (8.65,0.0) -- (8.65,0.7);
\draw [red!40] (8.65,1.4) -- (8.65,2.1);
\draw [red!40] (8.65,2.8) -- (8.65,3.5);

\draw [red!40,fill=red!40] (6+8.65+0.2,0.7) circle [radius=0.07cm];
\draw [red!40,fill=red!40] (6+8.65+0.2,1.4) circle [radius=0.07cm];
\draw [red!40,fill=red!40] (6+8.65+0.2,2.1) circle [radius=0.07cm];
\draw [red!40,fill=red!40] (6+8.65+0.2,2.8) circle [radius=0.07cm];

\draw [red!40] (6+8.65+0.2,0.7) -- (6+8.65+0.2,1.4);
\draw [red!40] (6+8.65+0.2,2.1) -- (6+8.65+0.2,2.8);

\draw [red!40,fill=red!40] (8.65+8.65+0.2,0.7) circle [radius=0.07cm];
\draw [red!40,fill=red!40] (8.65+8.65+0.2,1.4) circle [radius=0.07cm];
\draw [red!40,fill=red!40] (8.65+8.65+0.2,2.1) circle [radius=0.07cm];
\draw [red!40,fill=red!40] (8.65+8.65+0.2,2.8) circle [radius=0.07cm];

\draw [red!40] (8.65+8.65+0.2,0.7) -- (8.65+8.65+0.2,1.4);
\draw [red!40] (8.65+8.65+0.2,2.1) -- (8.65+8.65+0.2,2.8);

\draw [domain=0:180,red!40,<-,line width=0.5mm]  plot ({7.35+1.2*cos(\x)}, {3.6+0.4*sin(\x)});
\draw [domain=0:180,red!40,<-,line width=0.5mm]  plot ({7.35+8.8+1.2*cos(\x)}, {3.6+0.4*sin(\x)});

\end{tikzpicture}
\caption{\small Simplification of circuit in Figure \ref{fig:BSdev2} valid for R-traps. This circuit can be obtained setting to 0 the angles between $cZ$-gates, hence moving $cZ$-gates toward each other and cancelling them out. Before cancelling out, $cZ$-gates have to be commuted with extra Pauli-$X$ operators. This generates the red Pauli-$Z$ errors. Logical qubits are now kept disentangled all along the computation, hence errors affecting a given logical qubit can not spread and affect other logical qubits.}
\label{fig:BSdev3}
\end{figure}

\twocolumngrid

The factor $\widetilde{v}/(v+1)$ upper bounds the probability $p{(E_1,\widetilde{v})}$ of event $E_1$ happening, while the term $({7}/{8})^{\widetilde{v}-1}$ maximizes $p{(E_2|E_1,\widetilde{v})}$. Indeed, by Lemmas \ref{lem:r} and \ref{lem:c}, $p{(E_2|E_1,\widetilde{v})}$ is maximized by those combinations of phase-flips that corrupt R-traps (and are detected with probability $\geq3/4$) and do not corrupt C-traps. Considering that in the Protocol, the choice between R-traps and C-traps is made independently at random, we obtain
\begin{align*}
\textup{}p{(E_2|E_1,\widetilde{v})}\leq&\bigg(\frac{1}{2}\times\frac{3}{4}+\frac{1}{2}\times1\bigg)^{\widetilde{v}-1}=&\bigg(\frac{7}{8}\bigg)^{\widetilde{v}-1}\textrm{ ,}
\end{align*}
where the exponent $\widetilde{v}-1$ is due to the fact that if Bob affects the target, then he will affect $\widetilde{v}-1$ trap computations. Maximizing $p{(E_1\wedge E_2|\widetilde{v})}$ over $\widetilde{v}$ (which is an integer), we find Bob's highest probability of corrupting the target computation without being detected:
\begin{equation}~\label{fig:}
\varepsilon=\max_{\widetilde{v}}[p{(E_1\wedge E_2|\widetilde{v})}]=\frac{3.14}{v+1}\textrm{ , for }\widetilde{v}=7
\end{equation}

The quantity $\varepsilon$ represents the soundness of Protocol \hyperlink{pr:pr1}{1}. Indeed, the condition $p(E_1\wedge E_2|\widetilde{v})\leq\varepsilon$ implies that the output is $\varepsilon$-close to the state $p\textrm{ }{\rho}_{\textrm{out}}\otimes|\textrm{acc}\rangle\langle\textrm{acc}|+(1-p)\textrm{ }\widetilde{\rho}_{\textrm{out}}\otimes|\textrm{rej}\rangle\langle\textrm{rej}|$ (with $0\leq p\leq1$), as required by Definition \ref{def:ver}.
\end{proof} 

We now prove the Lemmas.

  \begin{proof} \textit{(Lemma \ref{lem:tw})}
We refer to \cite{DCEL09} for a proof of the Lemma.
\end{proof}

\begin{proof} \textit{(Lemma \ref{lem:r})}	The main tool that will be used in this proof is the following fact, which can be verified via an explicit calculation. Suppose that a qubit in the state $\ket{+}$ is rotated around the $Z$-axis of the Bloch sphere by a random and unknown angle ${\Phi}^{\star}\in\{0,\pi/2,2\pi/2,3\pi/2\}$. If the state is subsequently measured in the basis $\{|\pm\rangle_{}\langle\pm|\}$, the average probability of getting outcome 0 is equal to
	\begin{equation}~\label{eq:rr}
	\frac{1}{4}\sum_{{\Phi}^{\star}}\big|\textrm{ }_{}\langle+|R_Z({\Phi}^{\star})|+\rangle_{}\textrm{ }\big|^2=\frac{1}{2}
	\end{equation}
	To prove the Lemma we proceed as follows:
	\begin{itemize}
		\item[]\textbf{Step I}: We argue that the combinations of phase-flips that have the least probability of being detected are those that affect at most one logical qubit. We find the set of combinations affecting an arbitrary logical qubit $i$, and subsequently restrict our analysis to this set of combinations.
		
		\item[]\textbf{Step II}: We consider all of the combinations of phase-flips of physical qubits belonging to {a single vertical tape} $y$ and affecting logical qubit $i$. We show that these errors are detected with probability at least $1/4$.
		
		\item[]\textbf{Step III}: We consider phase-flips of physical qubits belonging to {two {neighbouring} tapes} $y$ and $y+1$ and affecting logical qubit $i$. Making exception for Type-I errors, we show that these errors are detected with probability at least $1/4$.
		
		\item[]\textbf{Step IV}: We consider phase-flips of physical qubits belonging to {two tapes $y$ and $y'$} and affecting logical qubit $i$, as well as to more than two tapes. We show that the same upper bound as above is obtained,
	\end{itemize}
	
	\begin{itemize}
		\item[]making exceptions for Type-II errors.
	\end{itemize}
	
	\noindent We now elaborate on these steps.\\
	
	\noindent \noindent\textbf{Step I. }Consider Figure \ref{fig:BSdev2}, which represents a logical computation in the presence of phase-flips affecting the physical qubits. As it can be seen in Sub-protocol \hyperlink{pr:spr1}{1.1} (steps 1.1, 1.2),  $\phi_{i,4y}=0$ for any row $i$ and tape $y$. Hence, with reference to the Figure, the only logical operators in between the $cZ$-gates are the Pauli-$X$ errors. Moving the $cZ$-gates toward each other and cancelling them out, Pauli-$X$ errors propagate to the other logical qubits and by-products of Pauli-$Z$ are produced (Figure \ref{fig:BSdev3}). 
	
	After cancelling the $cZ$-gates, the logical qubits are kept disentangled all along the rest of the computation. Hence, the by-products affecting a given logical qubit can not spread and affect the rest of the logical qubits, nor cancel out with each other. In other words, after cancelling the $cZ$-gates, logical errors remain ``{local}''. For this reason, the error with the least probability of being detected must be such that after the $cZ$-gates are cancelled out, only one logical qubit is affected by by-products (for if more than one is affected, the probability of detecting the errors can not be smaller than it is if only one qubit is affected). This enables us to restrict our analysis to the combinations of phase-flips affecting a single logical qubit $i$. 
	
	The combinations of phase-flips affecting logical qubit $i$ are contained in the set of combinations of phase-flips of red physical qubits in Figure \ref{fig:BSflip1}. Eventually, some of the combinations will also corrupt logical qubits $i-1$ and $i+1$, but we do not consider (unless explicitly specified) the effects they have on these qubits.\\
	
	\noindent\textbf{Step II. }We restrict the analysis to phase-flips of qubits belonging to a single tape $y$ and affecting logical qubit $i$. We show that they are detected by R-traps with probability larger than 1/4.
	
	Let $U_{i}^{(y)}=R_Z(\phi_3)R_X(\phi_2)R_Z(\phi_1)$ be the unitary implemented on logical qubit $i$ within tape $y$ in the absence of phase-flips. In the presence of errors, $U_{i}^{(y)}$ becomes (see Figure \ref{fig:BSflip2} for labels)
	\begin{eqnarray*}
		\begin{tabular}{lll}
			$\widetilde{U}_{i}^{(y)}$&=&\small $X^{l_4}Z^{l_5\oplus l_3}R_Z(\phi_3)X^{l_2}R_X(\phi_2)Z^{l_1}R_Z(\phi_1)$\cr
		\end{tabular}
	\end{eqnarray*}
	\noindent Moving the errors toward the beginning of the circuit, we obtain
	\begin{eqnarray*}
		\begin{tabular}{lll}
			$\widetilde{U}_{i}^{(y)}$&=& $R_Z\big((-1)^{l_4}\phi_3\big)R_X\big((-1)^{l_3\oplus l_5}\phi_2\big)R_Z\big((-1)^{l_4\oplus l_2}\phi_1\big)$\\ \cr
			&&$\cdot\textrm{ } X^{l_4\oplus l_2}Z^{l_5\oplus l_3\oplus l_1}$\cr
		\end{tabular}
	\end{eqnarray*}
	Thus, when the by-products are moved toward the beginning of the tape, the sign of some angles characterizing $U_{i}^{(y)}$ might be flipped. To understand which combinations of phase-flips have a non-trivial effect on $U_{i}^{(y)}$, we proceed by direct calculation: we consider all of the possible combinations and see if $\widetilde{U}_{i}^{(y)}=U_{i}^{(y)}$, keeping in mind that ${U}_{i}^{(y)}$ will either be an $XY$-plane rotation, a $ZY$-plane rotation or a Hadamard.
	
	The explicit calculations are illustrated in Table \hyperlink{tab:tab1}{1}. As it can be seen, the only error that never corrupts $ {U}_{i}^{(y)}$ corresponds to $l_3=l_5=1$ and the other $l^{}$ equal to 0. However, $l_5=1$ implies a ``$l_4$ Pauli-$X$ error'' affecting logical qubit $i+1$ (or $i-1$), which has a non-trivial effect on $ {U}_{i+1}^{(y)}$ (or $ \widetilde{U}_{i-1}^{(y)}$). All of the remaining errors affect ${U}_{i}^{(y)}$ non-trivially whenever it implements at least two gates from the set $\{H,R_Z,R_X\}$. This is crucial, hence we present it with some examples. As can be seen, error $l_1=l_2=1$ and the other $l$ equal to 0 always corrupts ${U}_{i}^{(y)}$, both in the case where it implements a rotation or a Hadamard. On the contrary, error $l_1=l_3=1$ and the other $l$ equal to 0 has a non-trivial effect on ${U}_{i}^{(y)}$ only if implements a Hadamard or a $R_X$-gate, but does not affect $R_Z$-gates. No error corrupts ${U}_{i}^{(y)}$ in less than two cases (apart from the already mentioned $l_3=l_5=1$ and the other $l^{}$ equal to 0). 
	
	We can now lower bound the probability of detecting single-tape phase-flips by 1/4. To compute the bound, we first notice that for any combination of phase-flips affecting solely tape $y$, $\widetilde{U}_{i}^{(y)}\neq U_{i}^{(y)}$ with probability at least 1/2. This can be seen considering that (i) as pointed out above, each combination of phase-flips affects non-trivially at least two logical unitaries out of three, and (ii) $ {U}_{i}^{(y)}$ is a Hadamard with probability 1/2 and a rotation with probability 1/2. Thus, errors that affect $R_Z$-gates and $R_X$-gates, but do not affect Hadamards (e.g. $l_1=l_4=1$ and the other $l=0$) have the least probability of corrupting the R-trap, and this probability is equal to 1/2. Next, we use equation \ref{eq:rr} to show that whenever $\widetilde{U}_{i}^{(y)}\neq U_{i}^{(y)}$, the error is detected with probability $\geq$1/2. To see this, consider the possible ways phase-flips can 
\begin{center}
\setlength{\extrarowheight}{0.2cm}
\begin{tabular}{|c|c|c|c|}
\hline
\hypertarget{tab:tab1}{Flipped qubits}&$R_Z(\phi_1+\phi_3)$&$R_X(\phi_2)$&$H$\cr
\hline
1&$R_Z(\phi_1+\phi_3)Z$&$R_X(\phi_2)Z$&$HZ$\cr
2&$R_Z(-\phi_1+\phi_3)X$&$R_X(\phi_2)X$&$HXZ$\cr
3&$R_Z(\phi_1+\phi_3)Z$&$R_X(-\phi_2)Z$&$HX$\cr
4&$R_Z(-\phi_1-\phi_3)X$&$R_X(\phi_2)X$&$HZ$\cr
5&$R_Z(\phi_1+\phi_3)Z$&$R_X(-\phi_2)Z$&$HX$\cr
1,2&$R_Z(-\phi_1+\phi_3)XZ$&$R_X(\phi_2)XZ$&$HX$\cr
1,3&&$R_X(-\phi_2)$&$HXZ$\cr
1,4&$R_Z(-\phi_1-\phi_3)XZ$&$R_X(\phi_2)XZ$&\cr
1,5&&$R_X(-\phi_2)$&$HXZ$\cr
2,3&$R_Z(-\phi_1+\phi_3)XZ$&$R_X(-\phi_2)XZ$&$HZ$\cr
2,4&$R_Z(\phi_1-\phi_3)$&&$HX$\cr
2,5&$R_Z(-\phi_1+\phi_3)XZ$&$R_X(-\phi_2)XZ$&$HZ$\cr
3,4&$R_Z(-\phi_1-\phi_3)XZ$&$R_X(-\phi_2)XZ$&$HXZ$\cr
3,5&&&\cr
4,5&$R_Z(-\phi_1-\phi_3)XZ$&$R_X(-\phi_2)XZ$&$HXZ$\cr
1,2,3&$R_Z(-\phi_1+\phi_3)X$&$R_X(-\phi_2)X$&\cr
1,2,4&$R_Z(\phi_1-\phi_3)Z$&$R_X(\phi_2)Z$&$HX$\cr 
1,2,5&$R_Z(-\phi_1+\phi_3)X$&$R_X(-\phi_2)X$&\cr
1,3,4&$R_Z(-\phi_1-\phi_3)X$&$R_X(-\phi_2)X$&$HX$\cr
1,3,5&$R_Z(\phi_1+\phi_3)Z$&$R_X(\phi_2)Z$&$HZ$\cr
1,4,5&$R_Z(-\phi_1-\phi_3)X$&$R_X(-\phi_2)X$&$HX$\cr
2,3,4&$R_Z(\phi_1-\phi_3)Z$&$R_X(-\phi_2)Z$&\cr
2,3,5&$R_Z(-\phi_1+\phi_3)X$&$R_X(\phi_2)X$&$HXZ$\cr
2,4,5&$R_Z(\phi_1-\phi_3)Z$&$R_X(-\phi_2)Z$&\cr
3,4,5&$R_Z(-\phi_1-\phi_3)X$&$R_X(\phi_2)X$&$HZ$\cr
1,2,3,4&$R_Z(\phi_1-\phi_3)$&$R_X(-\phi_2)$&$HZ$\cr
1,2,3,5&$R_Z(-\phi_1+\phi_3)XZ$&$R_X(\phi_2)XZ$&$HX$\cr
1,2,4,5&$R_Z(\phi_1-\phi_3)$&$R_X(-\phi_2)$&$HZ$\cr
1,3,4,5&$R_Z(-\phi_1-\phi_3)XZ$&$R_X(\phi_2)XZ$&\cr
2,3,4,5&$R_Z(\phi_1-\phi_3)$&&$HX$\cr
1,2,3,4,5&$R_Z(\phi_1-\phi_3)Z$&$R_X(\phi_2)Z$&$HXZ$\cr &&&\cr
\hline
\end{tabular}
\end{center}
\vspace{0.4cm}
\noindent Table {1}:\begin{small}
The effects of phase-flips on a single tape $y$ after the by-products are moved toward the beginning of the tape (i.e. ``to the right'' of the unitary implemented within the tape). White spaces correspond to the cases where the by-products have a trivial effect on the unitary implemented within tape $y$ (i.e. $\widetilde{U}_i^{(y)}={U}_i^{(y)}$).  We refer to Figure \ref{fig:BSflip2} for labels regarding phase-flipped qubits.
\end{small}

\newpage
\begin{figure}[H]
\centering
\begin{tikzpicture}

\draw [line width=0.5mm,dashed] (0.0-0.25,1.) -- (0.0-0.25,3.0);
\draw [line width=0.5mm,dashed] (0.0+1.75,1) -- (0.0+1.75,3.0);
\draw [line width=0.5mm,dashed] (2.0+1.75,1) -- (2.0+1.75,3.0);
\draw [line width=0.5mm,dashed] (4.0+1.75,1) -- (4.0+1.75,3.0);

\draw (0.0,1.5) -- (6.0,1.5);
\draw [dashed] (-0.5,1.5) -- (0.0,1.5);
\draw [dashed] (6.0,1.5) -- (6.5,1.5);
\draw (0.0,2.0) -- (6.0,2.0);
\draw [dashed] (-0.5,2.0) -- (0.0,2.0);
\draw [dashed] (6.0,2.0) -- (6.5,2.0);
\draw (0.0,2.5) -- (6.0,2.5);
\draw [dashed] (-0.5,2.5) -- (0.0,2.5);
\draw [dashed] (6.0,2.5) -- (6.5,2.5);

\draw (1.0,2.0) -- (1.0,2.5);
\draw (2.0,2.0) -- (2.0,2.5);

\draw (3.0,1.5) -- (3.0,2.0);
\draw (4.0,1.5) -- (4.0,2.0);

\draw (5.0,2.0) -- (5.0,2.5);
\draw (6.0,2.0) -- (6.0,2.5);

\draw [fill=white] (0.0,2.5) circle [radius=0.1];
\draw [fill=white] (0.5,2.5) circle [radius=0.1];
\draw [fill=white] (1.0,2.5) circle [radius=0.1];
\draw [fill=red!40] (1.5,2.5) circle [radius=0.1];
\draw [fill=white] (2.0,2.5) circle [radius=0.1];
\draw [fill=white] (2.5,2.5) circle [radius=0.1];
\draw [fill=white] (3.0,2.5) circle [radius=0.1];
\draw [fill=white] (3.5,2.5) circle [radius=0.1];
\draw [fill=white] (4.0,2.5) circle [radius=0.1];
\draw [fill=white] (4.5,2.5) circle [radius=0.1];
\draw [fill=white] (5.0,2.5) circle [radius=0.1];
\draw [fill=red!40] (5.5,2.5) circle [radius=0.1];
\draw [fill=white] (6.0,2.5) circle [radius=0.1];

\draw [fill=red!40] (0.0,2.0) circle [radius=0.1];
\draw [fill=red!40] (0.5,2.0) circle [radius=0.1];
\draw [fill=red!40] (1.0,2.0) circle [radius=0.1];
\draw [fill=red!40] (1.5,2.0) circle [radius=0.1];
\draw [fill=red!40] (2.0,2.0) circle [radius=0.1];
\draw [fill=red!40] (2.5,2.0) circle [radius=0.1];
\draw [fill=red!40] (3.0,2.0) circle [radius=0.1];
\draw [fill=red!40] (3.5,2.0) circle [radius=0.1];
\draw [fill=red!40] (4.0,2.0) circle [radius=0.1];
\draw [fill=red!40] (4.5,2.0) circle [radius=0.1];
\draw [fill=red!40] (5.0,2.0) circle [radius=0.1];
\draw [fill=red!40] (5.5,2.0) circle [radius=0.1];
\draw [fill=red!40] (6.0,2.0) circle [radius=0.1];

\draw [fill=white] (0.0,1.5) circle [radius=0.1];
\draw [fill=white] (0.5,1.5) circle [radius=0.1];
\draw [fill=white] (1.0,1.5) circle [radius=0.1];
\draw [fill=white] (1.5,1.5) circle [radius=0.1];
\draw [fill=white] (2.0,1.5) circle [radius=0.1];
\draw [fill=white] (2.5,1.5) circle [radius=0.1];
\draw [fill=white] (3.0,1.5) circle [radius=0.1];
\draw [fill=red!40] (3.5,1.5) circle [radius=0.1];
\draw [fill=white] (4.0,1.5) circle [radius=0.1];
\draw [fill=white] (4.5,1.5) circle [radius=0.1];
\draw [fill=white] (5.0,1.5) circle [radius=0.1];
\draw [fill=white] (5.5,1.5) circle [radius=0.1];
\draw [fill=white] (6.0,1.5) circle [radius=0.1];

\node at (-1.0,1.5) {$i+1$};
\node at (-1.0,2.0) {$i$};
\node at (-1.0,2.5) {$i-1$};

\node at (0.65,3.2) {$y-1$};
\node at (2.65,3.2) {$y$};
\node at (4.65,3.2) {$y+1$};
%
%

\end{tikzpicture}
\caption{\small Phase-flips affecting the overall unitary acting on logical qubit $i$. Phase-flips of red physical qubits generate by-products of Pauli-$Z$ or Pauli-$X$ that affect logical qubit $i$ along the computation.}
\label{fig:BSflip1}
\end{figure}
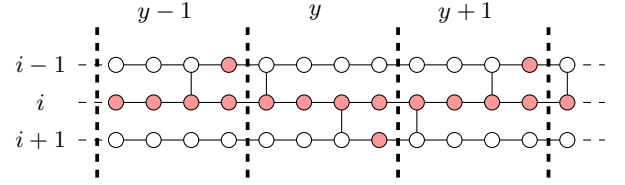

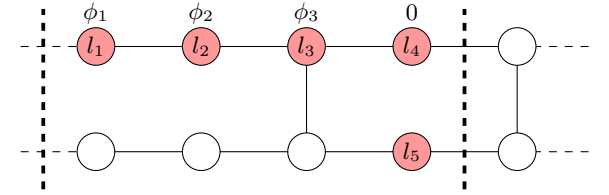
\begin{figure}[H]
\centering
\begin{tikzpicture}

\draw [line width=0.5mm,dashed] (4.9,-0.5) -- (4.9,1.9);
\draw [line width=0.5mm,dashed] (-0.7,-0.5) -- (-0.7,1.9);

\draw [dashed] (-1,1.4) -- (0.0,1.4);
\draw [dashed] (5.6,1.4) -- (6.6,1.4);
\draw (0.0,1.4) -- (5.6,1.4);
\draw [dashed] (-1,0.0) -- (0.0,0.0);
\draw [dashed] (5.6,0.0) -- (6.6,0.0);
\draw (0.0,0.0) -- (5.6,0.0);
\draw (2.8,0.0) -- (2.8,1.4);
\draw (5.6,0.0) -- (5.6,1.4);

\draw [fill=red!40] (0.0,1.4) circle [radius=0.25cm];
\draw [fill=red!40] (1.4,1.4) circle [radius=0.25cm];
\draw [fill=red!40] (2.8,1.4) circle [radius=0.25cm];
\draw [fill=red!40] (4.2,1.4) circle [radius=0.25cm];
\draw [fill=white] (5.6,1.4) circle [radius=0.25cm];
\draw [fill=white] (0.0,0.0) circle [radius=0.25cm];
\draw [fill=white] (1.4,0.0) circle [radius=0.25cm];
\draw [fill=white] (2.8,0.0) circle [radius=0.25cm];
\draw [fill=red!40] (4.2,0.0) circle [radius=0.25cm];
\draw [fill=white] (5.6,0.0) circle [radius=0.25cm];

\node at (0.0,1.4) {$l_1$};
\node at (1.4,1.4) {$l_2$};
\node at (2.8,1.4) {$l_3$};
\node at (4.2,1.4) {$l_4$};
\node at (4.2,0.0) {$l_5$};

\node at (0.0,1.85) {$\phi_1$};
\node at (1.4,1.85) {$\phi_2$};
\node at (2.8,1.85) {$\phi_3$};
\node at (4.2,1.85) {$0$};

\end{tikzpicture}
\caption{\small Phase-flips affecting the unitary implemented on logical qubit $i$ within tape $y$ are described by the set of parameters $l_q$, $q=1,..,5$. $l_q$ can either be 0 (no flip) or 1 (flip).}
\label{fig:BSflip2}
\end{figure}

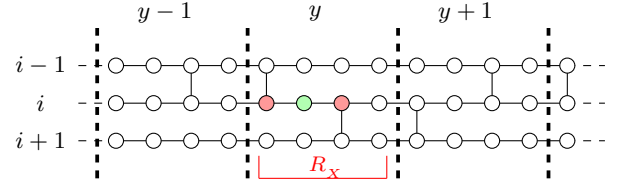
\begin{figure}[H]
\centering
\begin{tikzpicture}

\draw [line width=0.5mm,dashed] (0.0-0.25,1.) -- (0.0-0.25,3.0);
\draw [line width=0.5mm,dashed] (0.0+1.75,1) -- (0.0+1.75,3.0);
\draw [line width=0.5mm,dashed] (2.0+1.75,1) -- (2.0+1.75,3.0);
\draw [line width=0.5mm,dashed] (4.0+1.75,1) -- (4.0+1.75,3.0);

\draw (0.0,1.5) -- (6.0,1.5);
\draw [dashed] (-0.5,1.5) -- (0.0,1.5);
\draw [dashed] (6.0,1.5) -- (6.5,1.5);
\draw (0.0,2.0) -- (6.0,2.0);
\draw [dashed] (-0.5,2.0) -- (0.0,2.0);
\draw [dashed] (6.0,2.0) -- (6.5,2.0);
\draw (0.0,2.5) -- (6.0,2.5);
\draw [dashed] (-0.5,2.5) -- (0.0,2.5);
\draw [dashed] (6.0,2.5) -- (6.5,2.5);

\draw (1.0,2.0) -- (1.0,2.5);
\draw (2.0,2.0) -- (2.0,2.5);

\draw (3.0,1.5) -- (3.0,2.0);
\draw (4.0,1.5) -- (4.0,2.0);

\draw (5.0,2.0) -- (5.0,2.5);
\draw (6.0,2.0) -- (6.0,2.5);

\draw [fill=white] (0.0,2.5) circle [radius=0.1];
\draw [fill=white] (0.5,2.5) circle [radius=0.1];
\draw [fill=white] (1.0,2.5) circle [radius=0.1];
\draw [fill=white] (1.5,2.5) circle [radius=0.1];
\draw [fill=white] (2.0,2.5) circle [radius=0.1];
\draw [fill=white] (2.5,2.5) circle [radius=0.1];
\draw [fill=white] (3.0,2.5) circle [radius=0.1];
\draw [fill=white] (3.5,2.5) circle [radius=0.1];
\draw [fill=white] (4.0,2.5) circle [radius=0.1];
\draw [fill=white] (4.5,2.5) circle [radius=0.1];
\draw [fill=white] (5.0,2.5) circle [radius=0.1];
\draw [fill=white] (5.5,2.5) circle [radius=0.1];
\draw [fill=white] (6.0,2.5) circle [radius=0.1];

\draw [fill=white] (0.0,2.0) circle [radius=0.1];
\draw [fill=white] (0.5,2.0) circle [radius=0.1];
\draw [fill=white] (1.0,2.0) circle [radius=0.1];
\draw [fill=white] (1.5,2.0) circle [radius=0.1];
\draw [fill=red!40] (2.0,2.0) circle [radius=0.1];
\draw [fill=green!30] (2.5,2.0) circle [radius=0.1];
\draw [fill=red!40] (3.0,2.0) circle [radius=0.1];
\draw [fill=white] (3.5,2.0) circle [radius=0.1];
\draw [fill=white] (4.0,2.0) circle [radius=0.1];
\draw [fill=white] (4.5,2.0) circle [radius=0.1];
\draw [fill=white] (5.0,2.0) circle [radius=0.1];
\draw [fill=white] (5.5,2.0) circle [radius=0.1];
\draw [fill=white] (6.0,2.0) circle [radius=0.1];

\draw [fill=white] (0.0,1.5) circle [radius=0.1];
\draw [fill=white] (0.5,1.5) circle [radius=0.1];
\draw [fill=white] (1.0,1.5) circle [radius=0.1];
\draw [fill=white] (1.5,1.5) circle [radius=0.1];
\draw [fill=white] (2.0,1.5) circle [radius=0.1];
\draw [fill=white] (2.5,1.5) circle [radius=0.1];
\draw [fill=white] (3.0,1.5) circle [radius=0.1];
\draw [fill=white] (3.5,1.5) circle [radius=0.1];
\draw [fill=white] (4.0,1.5) circle [radius=0.1];
\draw [fill=white] (4.5,1.5) circle [radius=0.1];
\draw [fill=white] (5.0,1.5) circle [radius=0.1];
\draw [fill=white] (5.5,1.5) circle [radius=0.1];
\draw [fill=white] (6.0,1.5) circle [radius=0.1];

\node at (-1.0,1.5) {$i+1$};
\node at (-1.0,2.0) {$i$};
\node at (-1.0,2.5) {$i-1$};

\node at (0.65,3.2) {$y-1$};
\node at (2.65,3.2) {$y$};
\node at (4.65,3.2) {$y+1$};

\draw [red] (1.9,1.0) -- (3.6,1.0);
\draw [red] (1.9,1.0) -- (1.9,1.3);
\draw [red] (3.6,1.0) -- (3.6,1.3);
\node at (2.8,1.15) {\footnotesize \textcolor{red}{$R_{_X}$}};

%
%

\end{tikzpicture}
\caption{\small Tape $y$ is used to implement a $R_X$-gate on logical qubit $i$. If red qubits are phase-flipped, rotation angle $\phi_{i,4y-2}$ associated to green physical qubit is mapped into $-\phi_{i,4y-2}$. Thus, instead of being output in some expected state $R_Z({\Phi}_i)\ket{+}$, logical qubit $i$ is output in the state $R_Z({\Phi}^{\star}_i)R_Z({\Phi}_i)\ket{+}$, where $\Phi^{\star}_i=-2\phi_{i,4y-2}$ is a random angle in $\{0,\pi/2,\pi,3\pi/2\}$. By equation \ref{eq:rr}, the probability of detecting the phase-flips is 1/2.}
\label{fig:BSflip3}
\end{figure}

\begin{figure}[H]
\centering
\begin{tikzpicture}

\draw [line width=0.5mm,dashed] (0.0-0.25,1.) -- (0.0-0.25,3.0);
\draw [line width=0.5mm,dashed] (0.0+1.75,1) -- (0.0+1.75,3.0);
\draw [line width=0.5mm,dashed] (2.0+1.75,1) -- (2.0+1.75,3.0);
\draw [line width=0.5mm,dashed] (4.0+1.75,1) -- (4.0+1.75,3.0);

\draw (0.0,1.5) -- (6.0,1.5);
\draw [dashed] (-0.5,1.5) -- (0.0,1.5);
\draw [dashed] (6.0,1.5) -- (6.5,1.5);
\draw (0.0,2.0) -- (6.0,2.0);
\draw [dashed] (-0.5,2.0) -- (0.0,2.0);
\draw [dashed] (6.0,2.0) -- (6.5,2.0);
\draw (0.0,2.5) -- (6.0,2.5);
\draw [dashed] (-0.5,2.5) -- (0.0,2.5);
\draw [dashed] (6.0,2.5) -- (6.5,2.5);

\draw (1.0,2.0) -- (1.0,2.5);
\draw (2.0,2.0) -- (2.0,2.5);

\draw (3.0,1.5) -- (3.0,2.0);
\draw (4.0,1.5) -- (4.0,2.0);

\draw (5.0,2.0) -- (5.0,2.5);
\draw (6.0,2.0) -- (6.0,2.5);

\draw [fill=white] (0.0,2.5) circle [radius=0.1];
\draw [fill=white] (0.5,2.5) circle [radius=0.1];
\draw [fill=white] (1.0,2.5) circle [radius=0.1];
\draw [fill=white] (1.5,2.5) circle [radius=0.1];
\draw [fill=white] (2.0,2.5) circle [radius=0.1];
\draw [fill=white] (2.5,2.5) circle [radius=0.1];
\draw [fill=white] (3.0,2.5) circle [radius=0.1];
\draw [fill=white] (3.5,2.5) circle [radius=0.1];
\draw [fill=white] (4.0,2.5) circle [radius=0.1];
\draw [fill=white] (4.5,2.5) circle [radius=0.1];
\draw [fill=white] (5.0,2.5) circle [radius=0.1];
\draw [fill=white] (5.5,2.5) circle [radius=0.1];
\draw [fill=white] (6.0,2.5) circle [radius=0.1];

\draw [fill=white] (0.0,2.0) circle [radius=0.1];
\draw [fill=green!30] (0.5,2.0) circle [radius=0.1];
\draw [fill=white] (1.0,2.0) circle [radius=0.1];
\draw [fill=white] (1.5,2.0) circle [radius=0.1];
\draw [fill=white] (2.0,2.0) circle [radius=0.1];
\draw [fill=white] (2.5,2.0) circle [radius=0.1];
\draw [fill=white] (3.0,2.0) circle [radius=0.1];
\draw [fill=white] (3.5,2.0) circle [radius=0.1];
\draw [fill=green!30] (4.0,2.0) circle [radius=0.1];
\draw [fill=white] (4.5,2.0) circle [radius=0.1];
\draw [fill=green!30] (5.0,2.0) circle [radius=0.1];
\draw [fill=white] (5.5,2.0) circle [radius=0.1];
\draw [fill=white] (6.0,2.0) circle [radius=0.1];

\draw [fill=white] (0.0,1.5) circle [radius=0.1];
\draw [fill=white] (0.5,1.5) circle [radius=0.1];
\draw [fill=white] (1.0,1.5) circle [radius=0.1];
\draw [fill=white] (1.5,1.5) circle [radius=0.1];
\draw [fill=white] (2.0,1.5) circle [radius=0.1];
\draw [fill=white] (2.5,1.5) circle [radius=0.1];
\draw [fill=white] (3.0,1.5) circle [radius=0.1];
\draw [fill=white] (3.5,1.5) circle [radius=0.1];
\draw [fill=white] (4.0,1.5) circle [radius=0.1];
\draw [fill=white] (4.5,1.5) circle [radius=0.1];
\draw [fill=white] (5.0,1.5) circle [radius=0.1];
\draw [fill=white] (5.5,1.5) circle [radius=0.1];
\draw [fill=white] (6.0,1.5) circle [radius=0.1];

\node at (-1.0,1.5) {$i+1$};
\node at (-1.0,2.0) {$i$};
\node at (-1.0,2.5) {$i-1$};

\node at (0.65,3.2) {$y-3$};
\node at (2.65,3.2) {$y-2$};
\node at (4.65,3.2) {$y-1$};

\draw [red] (1.9-2,1.0) -- (3.6-2,1.0);
\draw [red] (1.9-2,1.0) -- (1.9-2,1.3);
\draw [red] (3.6-2,1.0) -- (3.6-2,1.3);
\node at (2.8-2,1.15) {\footnotesize \textcolor{red}{$R_{_X}$}};

\draw [red] (1.9,1.0) -- (3.6,1.0);
\draw [red] (1.9,1.0) -- (1.9,1.3);
\draw [red] (3.6,1.0) -- (3.6,1.3);
\node at (2.8,1.15) {\footnotesize \textcolor{red}{$H$}};

\draw [red] (1.9+2,1.0) -- (3.6+2,1.0);
\draw [red] (1.9+2,1.0) -- (1.9+2,1.3);
\draw [red] (3.6+2,1.0) -- (3.6+2,1.3);
\node at (2.8+2,1.15) {\footnotesize \textcolor{red}{$R_{_Z}$}};

\draw [->,red,line width=0.35mm] (7.3,2.1) -- (6.2,2.1);
\node at (6.9,2.27) {\small {Pauli-${X}$}};

\end{tikzpicture}
\caption{\small Pushing a Pauli-$X$ by-product from tape $y$ toward the beginning of the circuit flips the sign of angles associated to green physical qubit. Supposing that angles of green qubits sum up to $\widetilde{\Phi}_i\in\{0,\pi/4,..,7\pi/4\}$, and that the overall rotation angle is equal to ${\Phi}_i\in\{0,\pi/4,..,7\pi/4\}$, logical qubit $i$ is here output in the state $R_Z({\Phi_i}^{\star})R_Z({\Phi}_i)\ket{+}$, where ${\Phi}^{\star}_i=-2\widetilde{\Phi}_i$ is a random angle. By equation \ref{eq:rr}, the probability of detecting the presence of the by-product is 1/2.}
\label{fig:BSflip4}
\end{figure}
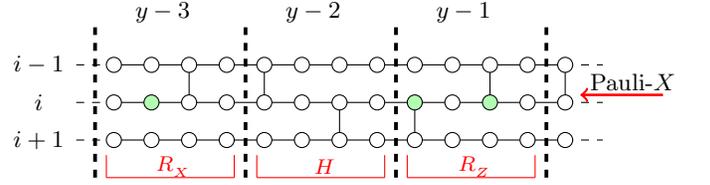

\newpage

\noindent corrupt $ U_{i}^{(y)}$ (remember that the overall operation on each logical qubit is an $R_Z$-gate):

\begin{itemize}
	 \item[(i)] The sign of a set of angles $\{\phi_{i,j}\}$ is flipped. Using $-\phi_{i,j}=\phi_{i,j}-2\phi_{i,j}$ for any $i,j$, it can be seen that logical qubit $i$ is subject to the honest overall rotation modulo a rotation by a \textit{random} angle $\Phi_{i}^{\star}=2\sum_{i,j}\phi_{i,j}\in\{0,\pi/2,\pi,3\pi/2\}$. Thus, equation \ref{eq:rr} guarantees that Bob's success rate is equal to 1/2. An example is provided in Figure \ref{fig:BSflip3}.
 	\item[(ii)] The angles remain unchanged and a by-product of Pauli-$X$ is produced. Commuting the Pauli-$X$ error all the way through to the beginning of the circuit, the sign of a set of angles $\{\phi_{i,j}\}$ is indeed flipped. Thus, by equation \ref{eq:rr}, the probability of detecting the presence of the by-product is 1/2. An example is provided in Figure \ref{fig:BSflip4}.
\item[(iii)] The angles remain unchanged and a by-product of Pauli-$Y$ is produced. Commuting the Pauli-$Y$ error all the way through to the beginning of the circuit, the sign of some set of angles $\{\phi_{i,j}\}$ is indeed flipped. Thus, by equation \ref{eq:rr}, the probability of detecting the presence of the by-product is 1/2.
\item[(iv)] The angles remain unchanged and a by-product of Pauli-$Z$ is produced. Since $R_Z$-gates commute with Pauli-$Z$, the error can be moved to the beginning of the circuit. As a consequence, the input qubit is phase-flipped and the error is detected with probability 1.
\end{itemize}
Overall, any combination of phase-flips affecting physical qubits within a single tape is thus detected with probability $\geq$1/4.\\

\noindent\textbf{Step III: }A similar strategy can be used to prove that phase-flips of physical qubits belonging to different tapes yield either errors that do not affect R-traps (we will deal with them later on using C-traps), or that are detected with probability at least 1/4. 

Consider a combination of phase-flips affecting logical qubit $i$ within two neighbouring tapes $y$ and $y+1$. The unitary implemented within the tapes is equal to 
\begin{eqnarray*}
\begin{tabular}{lll}
&\small $X^{l'_4}Z^{l'_5\oplus l'_3}R_Z(\phi'_3)X^{l'_2}R_X(\phi'_2)Z^{l'_1}R_Z(\phi'_1)$&$\cdot$\\ \cr
$\cdot$&\small $X^{l_4}Z^{l_5\oplus l_3}R_Z(\phi_3)X^{l_2}R_X(\phi_2)Z^{l_1}R_Z(\phi_1)$\cr
\end{tabular}
\end{eqnarray*}
(here, $\phi$ and $l$ label angles and phase-flips of tape $y$, while $\phi'$ and $l'$ label those of tape $y+1$). First, we move by-products affecting tape $y+1$ at the beginning of the tape. The above operator becomes
\begin{eqnarray*}
\begin{tabular}{ccc}
\small $R_Z\big((-1)^{l_4'}\phi'_3\big)R_X\big((-1)^{l'_3\oplus l_5'}\phi'_2\big)R_Z\big((-1)^{l'_2\oplus l_4'}\phi'_1\big)$&\\ \cr
$\cdot\textrm{ }$\small $X^{l'_4\oplus l'_2\oplus l_4}Z^{l'_5\oplus l'_3\oplus l'_1\oplus l_5\oplus l_3}R_Z(\phi_3)X^{l_2}R_X(\phi_2)Z^{l_1}R_Z(\phi_1)$\cr
\end{tabular}
\end{eqnarray*}
Next, we notice (Table \hyperlink{tab:tab1}{1}) that all of the combinations of phase-flips affecting tape $y+1$ generate the same by-products between the two logical rotations. On the other

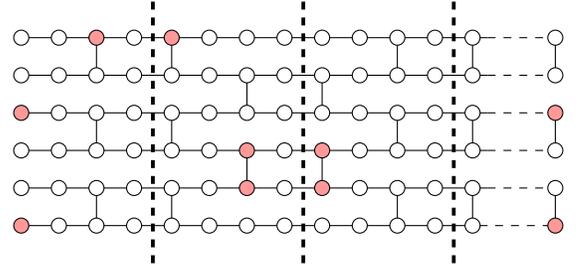
\begin{figure}[H]
\centering
\begin{tikzpicture}
\draw [line width=0.5mm,dashed] (0.0+1.75,-0.5) -- (0.0+1.75,3.0);
\draw [line width=0.5mm,dashed] (2.0+1.75,-0.5) -- (2.0+1.75,3.0);
\draw [line width=0.5mm,dashed] (4.0+1.75,-0.5) -- (4.0+1.75,3.0);

\draw (0.0,0.0) -- (6.2,0.0);
\draw (0.0,0.5) -- (6.2,0.5);
\draw (0.0,1.0) -- (6.2,1.0);
\draw (0.0,1.5) -- (6.2,1.5);
\draw (0.0,2.0) -- (6.2,2.0);
\draw (0.0,2.5) -- (6.2,2.5);

\draw [dashed] (6.3,0.0) -- (7.1,0.0);
\draw [dashed] (6.2,0.5) -- (7.1,0.5);
\draw [dashed] (6.2,1.0) -- (7.1,1.0);
\draw [dashed] (6.2,1.5) -- (7.1,1.5);
\draw [dashed] (6.2,2.0) -- (7.1,2.0);
\draw [dashed] (6.2,2.5) -- (7.1,2.5);

\draw (1.0,0.0) -- (1.0,0.5);
\draw (1.0,1.0) -- (1.0,1.5);
\draw (1.0,2.0) -- (1.0,2.5);
\draw (2.0,0.0) -- (2.0,0.5);
\draw (2.0,1.0) -- (2.0,1.5);
\draw (2.0,2.0) -- (2.0,2.5);

\draw (3.0,0.5) -- (3.0,1.0);
\draw (3.0,1.5) -- (3.0,2.0);
\draw (4.0,0.5) -- (4.0,1.0);
\draw (4.0,1.5) -- (4.0,2.0);

\draw (5.0,0.0) -- (5.0,0.5);
\draw (5.0,1.0) -- (5.0,1.5);
\draw (5.0,2.0) -- (5.0,2.5);
\draw (6.0,0.0) -- (6.0,0.5);
\draw (6.0,1.0) -- (6.0,1.5);
\draw (6.0,2.0) -- (6.0,2.5);

\draw (7.1,0.0) -- (7.1,0.5);
\draw (7.1,1.0) -- (7.1,1.5);
\draw (7.1,2.0) -- (7.1,2.5);

\draw [fill=white] (0.0,2.5) circle [radius=0.1];
\draw [fill=white] (0.5,2.5) circle [radius=0.1];
\draw [fill=red!40] (1.0,2.5) circle [radius=0.1];
\draw [fill=white] (1.5,2.5) circle [radius=0.1];
\draw [fill=red!40] (2.0,2.5) circle [radius=0.1];
\draw [fill=white] (2.5,2.5) circle [radius=0.1];
\draw [fill=white] (3.0,2.5) circle [radius=0.1];
\draw [fill=white] (3.5,2.5) circle [radius=0.1];
\draw [fill=white] (4.0,2.5) circle [radius=0.1];
\draw [fill=white] (4.5,2.5) circle [radius=0.1];
\draw [fill=white] (5.0,2.5) circle [radius=0.1];
\draw [fill=white] (5.5,2.5) circle [radius=0.1];
\draw [fill=white] (6.0,2.5) circle [radius=0.1];
\draw [fill=white] (7.1,2.5) circle [radius=0.1];

\draw [fill=white] (0.0,2.0) circle [radius=0.1];
\draw [fill=white] (0.5,2.0) circle [radius=0.1];
\draw [fill=white] (1.0,2.0) circle [radius=0.1];
\draw [fill=white] (1.5,2.0) circle [radius=0.1];
\draw [fill=white] (2.0,2.0) circle [radius=0.1];
\draw [fill=white] (2.5,2.0) circle [radius=0.1];
\draw [fill=white] (3.0,2.0) circle [radius=0.1];
\draw [fill=white] (3.5,2.0) circle [radius=0.1];
\draw [fill=white] (4.0,2.0) circle [radius=0.1];
\draw [fill=white] (4.5,2.0) circle [radius=0.1];
\draw [fill=white] (5.0,2.0) circle [radius=0.1];
\draw [fill=white] (5.5,2.0) circle [radius=0.1];
\draw [fill=white] (6.0,2.0) circle [radius=0.1];
\draw [fill=white] (7.1,2.0) circle [radius=0.1];

\draw [fill=red!40] (0.0,1.5) circle [radius=0.1];
\draw [fill=white] (0.5,1.5) circle [radius=0.1];
\draw [fill=white] (1.0,1.5) circle [radius=0.1];
\draw [fill=white] (1.5,1.5) circle [radius=0.1];
\draw [fill=white] (2.0,1.5) circle [radius=0.1];
\draw [fill=white] (2.5,1.5) circle [radius=0.1];
\draw [fill=white] (3.0,1.5) circle [radius=0.1];
\draw [fill=white] (3.5,1.5) circle [radius=0.1];
\draw [fill=white] (4.0,1.5) circle [radius=0.1];
\draw [fill=white] (4.5,1.5) circle [radius=0.1];
\draw [fill=white] (5.0,1.5) circle [radius=0.1];
\draw [fill=white] (5.5,1.5) circle [radius=0.1];
\draw [fill=white] (6.0,1.5) circle [radius=0.1];
\draw [fill=red!40] (7.1,1.5) circle [radius=0.1];

\draw [fill=white] (0.0,1.0) circle [radius=0.1];
\draw [fill=white] (0.5,1.0) circle [radius=0.1];
\draw [fill=white] (1.0,1.0) circle [radius=0.1];
\draw [fill=white] (1.5,1.0) circle [radius=0.1];
\draw [fill=white] (2.0,1.0) circle [radius=0.1];
\draw [fill=white] (2.5,1.0) circle [radius=0.1];
\draw [fill=red!40] (3.0,1.0) circle [radius=0.1];
\draw [fill=white] (3.5,1.0) circle [radius=0.1];
\draw [fill=red!40] (4.0,1.0) circle [radius=0.1];
\draw [fill=white] (4.5,1.0) circle [radius=0.1];
\draw [fill=white] (5.0,1.0) circle [radius=0.1];
\draw [fill=white] (5.5,1.0) circle [radius=0.1];
\draw [fill=white] (6.0,1.0) circle [radius=0.1];
\draw [fill=white] (7.1,1.0) circle [radius=0.1];

\draw [fill=white] (0.0,0.5) circle [radius=0.1];
\draw [fill=white] (0.5,0.5) circle [radius=0.1];
\draw [fill=white] (1.0,0.5) circle [radius=0.1];
\draw [fill=white] (1.5,0.5) circle [radius=0.1];
\draw [fill=white] (2.0,0.5) circle [radius=0.1];
\draw [fill=white] (2.5,0.5) circle [radius=0.1];
\draw [fill=red!40] (3.0,0.5) circle [radius=0.1];
\draw [fill=white] (3.5,0.5) circle [radius=0.1];
\draw [fill=red!40] (4.0,0.5) circle [radius=0.1];
\draw [fill=white] (4.5,0.5) circle [radius=0.1];
\draw [fill=white] (5.0,0.5) circle [radius=0.1];
\draw [fill=white] (5.5,0.5) circle [radius=0.1];
\draw [fill=white] (6.0,0.5) circle [radius=0.1];
\draw [fill=white] (7.1,0.5) circle [radius=0.1];

\draw [fill=red!40] (0.0,0.0) circle [radius=0.1];
\draw [fill=white] (0.5,0.0) circle [radius=0.1];
\draw [fill=white] (1.0,0.0) circle [radius=0.1];
\draw [fill=white] (1.5,0.0) circle [radius=0.1];
\draw [fill=white] (2.0,0.0) circle [radius=0.1];
\draw [fill=white] (2.5,0.0) circle [radius=0.1];
\draw [fill=white] (3.0,0.0) circle [radius=0.1];
\draw [fill=white] (3.5,0.0) circle [radius=0.1];
\draw [fill=white] (4.0,0.0) circle [radius=0.1];
\draw [fill=white] (4.5,0.0) circle [radius=0.1];
\draw [fill=white] (5.0,0.0) circle [radius=0.1];
\draw [fill=white] (5.5,0.0) circle [radius=0.1];
\draw [fill=white] (6.0,0.0) circle [radius=0.1];
\draw [fill=red!40] (7.1,0.0) circle [radius=0.1];

\end{tikzpicture}
\caption{\small Example of undetectable errors for R-traps. More generally, R-traps can not detect error generated by (i) phase-flips of physical qubits $(i,j)$ and $(i,j+2)$, where $mod(j,3)=0$, and (ii) phase-flips of physical qubits $(i,1)$ and $(i,m)$. We will refer to errors in the first (respectively second) class as Type-I (respectively Type-II) errors. Type-I and Type-II errors do not affect R-traps, but affect universal quantum computations in general. For this reason, detecting them is imperative.}
\label{fig:BSflip5}
\end{figure}

\noindent hand, we observe that the majority of them generates a different by-product for the logical Hadamard. As an example, the error defined by $l'_2=1$ and the other $l'$ equal to 0 produces an extra Pauli-$X$ if tape $y+1$ implements a logical $R_Z$-gate or $R_X$-gate, while produces an extra Pauli-$Y$ (modulo a global phase) if it implements a logical $H$. Regardless of other effects (such as the flip of some signs), this fact is enough to argue that these errors are detected with probability at least 1/4. Indeed, the probability that the by-products generated within tape $y+1$ cancel out with those generated within tape $y$ is automatically bounded by 1/2 (recall that tape $y+1$ implements a Hadamard with probability 1/2 and a rotation with probability 1/2). Hence, with the same arguments as in the single-tape case, we can see that these errors are detected with probability $\geq$1/4.

The remaining combinations of phase-flips produce the same by-products regardless of the particular logical unitary implemented within tape $y+1$. As an example, the error defined by $l'_1=1$ and the other $l'$ produces a Pauli-$Z$ both in the case where tape $y+1$ implements a rotation or a Hadamard. These phase-flips are detected with probability $\geq$1/4 or are not detectable with R-traps at all. To see this, we consider these particular configurations one by one:
\begin{itemize}
\item $l'_1=1$, other $l'$ equal to 0. This error produces the honest gate modulo a Pauli-$Z$. The extra Pauli can be recovered phase-flipping qubit 3 in tape $y$. Thus, error specified by $l'_1=1$, $l_3=1$, other $l$ and $l'$ equal to 0 is {undetectable} for R-traps. In what follows, we refer to this particular error as Type-I (Figure \ref{fig:BSflip5}).
\item $l_3',l_4'=1$, other $l'$ equal to 0. This error flips the sign of rotations implemented within tape $y+1$, hence is detected with probability 1/4 (equation \ref{eq:rr}) regardless of the unitary implemented within tape $y$. The same happens for the following cases: $l_4',l_5'=1$, other $l'$ equal to 0; $l'_1,l_4',l_5'=1$, other $l'$ equal to 0; $l'_1,l_3',l_4'=1$, other $l'$ equal to 0.

\item $l'_1,l_3',l_5'=1$, other $l'$ equal to 0. This error produces an extra Pauli-$Z$ at the beginning of tape $y+1$ and does not flip any sign. However, flip of qubit 5 is equivalent to flip of qubit 4 of logical qubit $i-1$ or $i+1$, hence can be detected with probability at least $1/4$.
\end{itemize}
Thus, we find the same bound as in step II.\\

\noindent\textbf{Step IV:} We now extend the discussion to combinations of phase-flips affecting two non-neighbouring tapes within row $i$.

If Bob deviates on two non-neighbouring tapes $y$ and $y'>y$, we can move by-products affecting tape $y'$ toward tape $y$ and use a similar argument as the one used in step III: the unitaries implemented within tapes between  $y$ and $y'$ are chosen at random, hence by-products do not cancel out trivially. Indeed, detailed calculations show that they are detected with probability larger than 1/4. The only exception is phase-flipping solely the first and  the last physical qubit within the same row $i$. In this case, the overall unitary acting on qubit $i$ throughout the circuit is known to be a $R_Z$-gate. As a consequence, this combination of phase-flips produces two Pauli-$Z$ errors that cancel out with each other without affecting the overall unitary. This error is undetectable with R-traps, and from now on will be referred to as Type-II (Figure \ref{fig:BSflip5}).

Finally, the same arguments explained above can be used to show that also the combinations of phase-flips affecting more than two tapes are detected with probability larger than 1/4.

\end{proof}

\begin{proof}\textit{(Lemma \ref{lem:c})}
Consider Figure \ref{fig:BSflip6}, which shows the pattern for logical CNOTs. As can be seen, (i) Type-I errors acting on the logical control qubit only (i.e. $l_1,l_2=1$, $l'_1,l'_2=0$) do not corrupt the logical computation, (ii) Type-I errors acting on the logical target qubit only ($l_1,l_2=0$, $l'_1,l'_2=1$) introduce a Pauli-$Z$ error acting on the logical control qubit and (iii) Type-I errors acting on both the control and target qubit ($l_1,l_2,l'_1,l'_2=1$) introduce a Pauli-$Z$ error acting on the control qubit. Thus, in general, Type-I errors produce by-products of Pauli-$Z$ in between the various tapes of the BwS. Similarly, Type-II errors can be seen as by-products of Pauli-$Z$ that first phase-flip some of the inputs to the logical circuit, and subsequently the corresponding outputs.

To prove the lemma we proceed as follows:
\begin{itemize}
	\item[]\textbf{Step I:} We prove that for any combination of Type-I and Type-II errors, there exists some C-trap that can detect the error with probability 1 (meaning, by ``some C-trap'', a specific configuration of target and control qubits for every CNOT in the logical circuit).
	\item[]\textbf{Step II:} We consider the different combinations of Type-I and Type-II errors affecting the C-trap and show that they are detected with probability $\geq$1/2.
\end{itemize}

\begin{figure}[H]
\centering
\begin{tikzpicture}
\draw [dashed] (-1,1.4) -- (0.0,1.4);
\draw [dashed] (5.6,1.4) -- (6.6,1.4);
\draw (0.0,1.4) -- (5.6,1.4);
\draw [dashed] (-1,0.0) -- (0.0,0.0);
\draw [dashed] (5.6,0.0) -- (6.6,0.0);
\draw (0.0,0.0) -- (5.6,0.0);
\draw (2.8,0.0) -- (2.8,1.4);
\draw (5.6,0.0) -- (5.6,1.4);

\draw [dashed,line width=0.5mm] (4.9,-0.7) -- (4.9,2.2);

\draw [fill=white] (0.0,1.4) circle [radius=0.25cm];
\draw [fill=white] (1.4,1.4) circle [radius=0.25cm];
\draw [fill=red!40] (2.8,1.4) circle [radius=0.25cm];
\draw [fill=white] (4.2,1.4) circle [radius=0.25cm];
\draw [fill=red!40] (5.6,1.4) circle [radius=0.25cm];
\draw [fill=white] (0.0,0.0) circle [radius=0.25cm];
\draw [fill=white] (1.4,0.0) circle [radius=0.25cm];
\draw [fill=red!40] (2.8,0.0) circle [radius=0.25cm];
\draw [fill=white] (4.2,0.0) circle [radius=0.25cm];
\draw [fill=red!40] (5.6,0.0) circle [radius=0.25cm];

\node at (2.8,1.4) {$l_{1}$};
\node at (5.6,1.4) {$l_{2}$};
\node at (2.8,0.0) {$l_{1}'$};
\node at (5.6,0.0) {$l_{2}'$};

\node at (0.0,1.85) {$0$};
\node at (1.4,1.85) {$0$};
\node at (2.8,1.85) {$\pi/2$};
\node at (4.2,1.85) {$0$};
\node at (5.6,1.85) {$0$};
\node at (0.0,-0.5) {$0$};
\node at (1.4,-0.5) {$\pi/2$};
\node at (2.8,-0.5) {$0$};
\node at (4.2,-0.5) {$\pi/2$};
\node at (5.6,-0.5) {$0$};

\node at (-0.9,0.3) {$i+1$};
\node at (-0.9,1.7) {$i$};

\node at (2.1,2.3) {\large $y$};

\node at (-0.9,-3.1) {$i+1$};
\node at (-0.9,-1.7) {$i$};

\draw (0.0,-2.0) -- (5.8,-2.0);
\draw [dashed] (-0.9,-2.0) -- (-0.1,-2.0);
\draw [dashed] (5.8,-2.0) -- (6.6,-2.0);
\draw (0.0,-3.4) -- (5.8,-3.4);
\draw [dashed] (-0.9,-3.4) -- (-0.1,-3.4);
\draw [dashed] (5.8,-3.4) -- (6.6,-3.4);

\draw [fill=white] (0.81,-2.4) rectangle (2.45,-1.6);
\node at (1.65,-2.0) {\footnotesize $R_Z(-\frac{\pi}{2})Z^{l_1}$};
\draw [fill=white] (0.81,-3.8) rectangle (2.45,-3.0);
\node at (1.65,-3.4) {\footnotesize $R_X(-\frac{\pi}{2})Z^{l_1'}$};

\draw (2.8,-2.0) -- (2.8,-3.4);
\draw [fill] (2.8,-2.0) circle [radius=0.1cm];
\draw [fill] (2.8,-3.4) circle [radius=0.1cm];
\draw (5.6,-2.0) -- (5.6,-3.4);
\draw [fill] (5.6,-2.0) circle [radius=0.1cm];
\draw [fill] (5.6,-3.4) circle [radius=0.1cm];

\draw [fill=white] (3.8,-2.4) rectangle (4.6,-1.6);
\node at (4.2,-2.0) {\footnotesize $Z^{l_2}$};
\draw [fill=white] (3.43,-3.8) rectangle (4.93,-3.0);
\node at (4.2,-3.4) {\footnotesize $R_X(\frac{\pi}{2})Z^{l_2'}$};

\end{tikzpicture}
\caption{\small Type-I errors affecting the computation at the physical (top figure) and logical (bottom figure) level. Logical qubit $i$ is used as control and logical qubit $i+1$ as the target.}
\label{fig:BSflip6}
\end{figure}
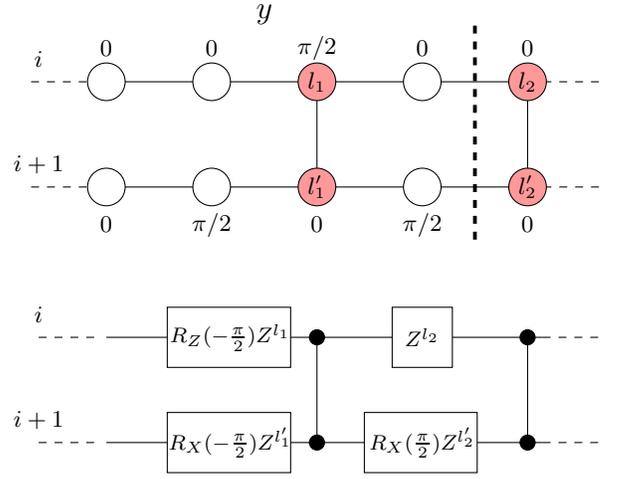

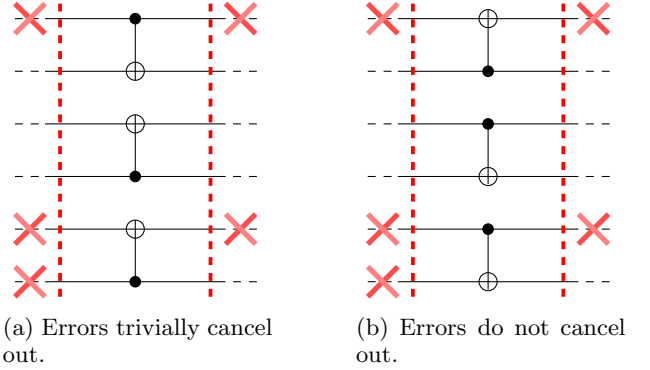
\begin{figure}[H]
\centering
\begin{subfigure}{.2\textwidth}
\centering
\begin{tikzpicture}
\draw [dashed] (-0.4,0.0) -- (0.0,0.0);
\draw [dashed] (2.5,0.0) -- (2.9,0.0);
\draw [dashed] (-0.4,0.7) -- (0.0,0.7);
\draw [dashed] (2.5,0.7) -- (2.9,0.7);
\draw [dashed] (-0.4,1.4) -- (0.0,1.4);
\draw [dashed] (2.5,1.4) -- (2.9,1.4);
\draw [dashed] (-0.4,2.1) -- (0.0,2.1);
\draw [dashed] (2.5,2.1) -- (2.9,2.1);
\draw [dashed] (-0.4,2.8) -- (0.0,2.8);
\draw [dashed] (2.5,2.8) -- (2.9,2.8);
\draw [dashed] (-0.4,3.5) -- (0.0,3.5);
\draw [dashed] (2.5,3.5) -- (2.9,3.5);

\draw (0.0,0.0) -- (2.4,0.0);
\draw (0.0,0.7) -- (2.4,0.7);
\draw (0.0,1.4) -- (2.4,1.4);
\draw (0.0,2.1) -- (2.4,2.1);
\draw (0.0,2.8) -- (2.4,2.8);
\draw (0.0,3.5) -- (2.4,3.5);

\draw [fill] (1.2,0.0) circle [radius=0.7mm]; 
\draw [] (1.2,0.7) circle [radius=1.2mm];
\draw (1.2,0.0) -- (1.2,0.82);
\draw [fill] (1.2,1.4) circle [radius=0.7mm]; 
\draw [] (1.2,2.1) circle [radius=1.2mm];
\draw (1.2,1.4) -- (1.2,2.22);
\draw [fill] (1.2,3.5) circle [radius=0.7mm]; 
\draw [] (1.2,2.8) circle [radius=1.2mm];
\draw (1.2,2.68) -- (1.2,3.5);

\draw [line width=0.5mm,dashed,red] (0.2,-0.2) -- (0.2,3.7);
\draw [line width=0.5mm,dashed,red] (2.2,-0.2) -- (2.2,3.7);

\draw [line width=0.7mm,red!70] (-0.4,-0.2) -- (-0.0,0.2);
\draw [line width=0.7mm,red!50] (-0.0,-0.2) -- (-0.4,0.2);
\draw [line width=0.7mm,red!70] (-0.4,0.5) -- (-0.0,0.9);
\draw [line width=0.7mm,red!50] (-0.0,0.5) -- (-0.4,0.9);
\draw [line width=0.7mm,red!70] (-0.4,3.7) -- (-0.0,3.3);
\draw [line width=0.7mm,red!50] (-0.0,3.7) -- (-0.4,3.3);

\draw [line width=0.7mm,red!70] (2.4,0.5) -- (2.8,0.9);
\draw [line width=0.7mm,red!50] (2.8,0.5) -- (2.4,0.9);
\draw [line width=0.7mm,red!70] (2.4,3.7) -- (2.8,3.3);
\draw [line width=0.7mm,red!50] (2.8,3.7) -- (2.4,3.3);
\end{tikzpicture}
\caption{Errors trivially cancel out.}
\label{fig:proofst2a}
\end{subfigure}%
\hspace{1cm}
\begin{subfigure}{.2\textwidth}
\centering
\begin{tikzpicture}
\draw [dashed] (-0.4,0.0) -- (0.0,0.0);
\draw [dashed] (2.5,0.0) -- (2.9,0.0);
\draw [dashed] (-0.4,0.7) -- (0.0,0.7);
\draw [dashed] (2.5,0.7) -- (2.9,0.7);
\draw [dashed] (-0.4,1.4) -- (0.0,1.4);
\draw [dashed] (2.5,1.4) -- (2.9,1.4);
\draw [dashed] (-0.4,2.1) -- (0.0,2.1);
\draw [dashed] (2.5,2.1) -- (2.9,2.1);
\draw [dashed] (-0.4,2.8) -- (0.0,2.8);
\draw [dashed] (2.5,2.8) -- (2.9,2.8);
\draw [dashed] (-0.4,3.5) -- (0.0,3.5);
\draw [dashed] (2.5,3.5) -- (2.9,3.5);

\draw (0.0,0.0) -- (2.4,0.0);
\draw (0.0,0.7) -- (2.4,0.7);
\draw (0.0,1.4) -- (2.4,1.4);
\draw (0.0,2.1) -- (2.4,2.1);
\draw (0.0,2.8) -- (2.4,2.8);
\draw (0.0,3.5) -- (2.4,3.5);

\draw [fill] (1.2,0.7) circle [radius=0.7mm]; 
\draw [] (1.2,0.0) circle [radius=1.2mm];
\draw (1.2,0.7) -- (1.2,-0.1);

\draw [fill] (1.2,2.1) circle [radius=0.7mm]; 
\draw [] (1.2,1.4) circle [radius=1.2mm];
\draw (1.2,2.1) -- (1.2,1.3);

\draw [fill] (1.2,2.8) circle [radius=0.7mm]; 
\draw [] (1.2,3.5) circle [radius=1.2mm];
\draw (1.2,2.8) -- (1.2,3.62);

\draw [line width=0.5mm,dashed,red] (0.2,-0.2) -- (0.2,3.7);
\draw [line width=0.5mm,dashed,red] (2.2,-0.2) -- (2.2,3.7);

\draw [line width=0.7mm,red!70] (-0.4,-0.2) -- (-0.0,0.2);
\draw [line width=0.7mm,red!50] (-0.0,-0.2) -- (-0.4,0.2);
\draw [line width=0.7mm,red!70] (-0.4,0.5) -- (-0.0,0.9);
\draw [line width=0.7mm,red!50] (-0.0,0.5) -- (-0.4,0.9);
\draw [line width=0.7mm,red!70] (-0.4,3.7) -- (-0.0,3.3);
\draw [line width=0.7mm,red!50] (-0.0,3.7) -- (-0.4,3.3);

\draw [line width=0.7mm,red!70] (2.4,0.5) -- (2.8,0.9);
\draw [line width=0.7mm,red!50] (2.8,0.5) -- (2.4,0.9);
\draw [line width=0.7mm,red!70] (2.4,3.7) -- (2.8,3.3);
\draw [line width=0.7mm,red!50] (2.8,3.7) -- (2.4,3.3);
\end{tikzpicture}
\caption{Errors do not cancel out.}
\label{fig:proofst2b}
\end{subfigure}%
\caption{\small Proof of equation \ref{eq:cc}, an example. Red crosses represent by-products of Pauli-$Z$. Figure \ref{fig:proofst2a} illustrates a combination of CNOTs such that errors cancel out with each other. A configuration of CNOTs that does not let errors cancel out can be found swapping control and target qubits in every brick (Figure \ref{fig:proofst2b}).}
\label{fig:proofst2}
\end{figure}

\begin{itemize}
	 \item[] To be specific, we first show that this is true for errors affecting two nearest-neighbouring tapes. Next, we argue that any combination of Type-I and Type-II errors can be rewritten as a specific error affecting two nearest-neighbouring tapes, and so that the bound holds in general.
\end{itemize}

\noindent In more detail:\\

\noindent \textbf{Step I.} We begin by describing the notation that will be used along this step of the proof. We formally define CNOTs acting on qubits $c$ (control) and $t$ (target) as
\begin{equation}
cX_{c,t}=|0\rangle_c\langle0|\otimes\mathbb{1}_t+|1\rangle_c\langle1|\otimes X_t
\end{equation}
Next, we represent the unitary implemented within tape $y$ with $\mathbf{cX}^{\bar{k}}_y$ (notice the bold font used to distinguish $n$-

\newpage
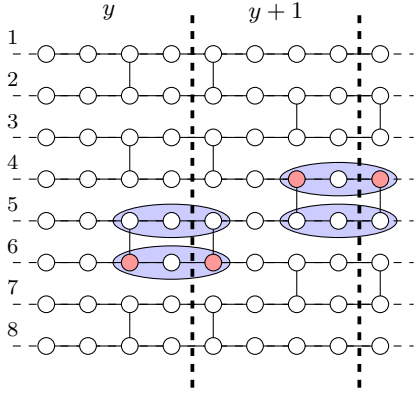
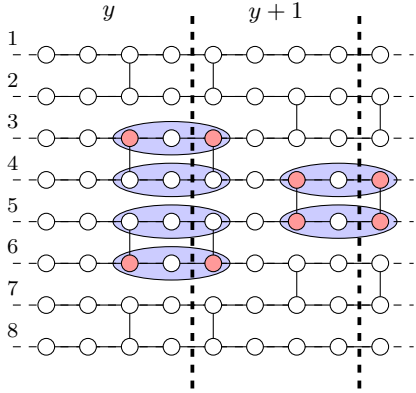
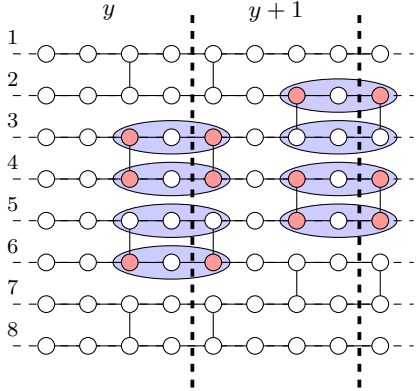
\begin{figure}[H]
\begin{subfigure}{.45\textwidth}
\centering
\begin{tikzpicture}[scale=1.11]
\node at (-0.4,3.7) {\footnotesize $1$};
\node at (-0.4,3.2) {\footnotesize $2$};
\node at (-0.4,2.7) {\footnotesize $3$};
\node at (-0.4,2.2) {\footnotesize $4$};
\node at (-0.4,1.7) {\footnotesize $5$};
\node at (-0.4,1.2) {\footnotesize $6$};
\node at (-0.4,0.7) {\footnotesize $7$};
\node at (-0.4,0.2) {\footnotesize $8$};

\draw [fill=blue!20] (3.5,2.0) ellipse (0.7cm and 0.2cm);
\draw [fill=blue!20] (3.5,1.5) ellipse (0.7cm and 0.2cm);
\draw [fill=blue!20] (1.5,1.5) ellipse (0.7cm and 0.2cm);
\draw [fill=blue!20] (1.5,1.0) ellipse (0.7cm and 0.2cm);

\draw [line width=0.5mm,dashed] (0.0+1.75,-0.5) -- (0.0+1.75,4.0);
\draw [line width=0.5mm,dashed] (2.0+1.75,-0.5) -- (2.0+1.75,4.0);

\draw (0.0,0.0) -- (4.0,0.0);
\draw (0.0,0.5) -- (4.0,0.5);
\draw (0.0,1.0) -- (4.0,1.0);
\draw (0.0,1.5) -- (4.0,1.5);
\draw (0.0,2.0) -- (4.0,2.0);
\draw (0.0,2.5) -- (4.0,2.5);
\draw (0.0,3.0) -- (4.0,3.0);
\draw (0.0,3.5) -- (4.0,3.5);

\draw [dashed] (-0.4,0.0) -- (4.4,0.0);
\draw [dashed] (-0.4,0.5) -- (4.4,0.5);
\draw [dashed] (-0.4,1.0) -- (4.4,1.0);
\draw [dashed] (-0.4,1.5) -- (4.4,1.5);
\draw [dashed] (-0.4,2.0) -- (4.4,2.0);
\draw [dashed] (-0.4,2.5) -- (4.4,2.5);
\draw [dashed] (-0.4,3.0) -- (4.4,3.0);
\draw [dashed] (-0.4,3.5) -- (4.4,3.5);

\draw (1.0,0.0) -- (1.0,0.5);
\draw (1.0,1.0) -- (1.0,1.5);
\draw (1.0,2.0) -- (1.0,2.5);
\draw (1.0,3.0) -- (1.0,3.5);
\draw (2.0,0.0) -- (2.0,0.5);
\draw (2.0,1.0) -- (2.0,1.5);
\draw (2.0,2.0) -- (2.0,2.5);
\draw (2.0,3.0) -- (2.0,3.5);

\draw (3.0,0.5) -- (3.0,1.0);
\draw (3.0,1.5) -- (3.0,2.0);
\draw (3.0,2.5) -- (3.0,3.0);
\draw (4.0,0.5) -- (4.0,1.0);
\draw (4.0,1.5) -- (4.0,2.0);
\draw (4.0,2.5) -- (4.0,3.0);

\draw [fill=white] (0.0,3.5) circle [radius=0.1];
\draw [fill=white] (0.5,3.5) circle [radius=0.1];
\draw [fill=white] (1.0,3.5) circle [radius=0.1];
\draw [fill=white] (1.5,3.5) circle [radius=0.1];
\draw [fill=white] (2.0,3.5) circle [radius=0.1];
\draw [fill=white] (2.5,3.5) circle [radius=0.1];
\draw [fill=white] (3.0,3.5) circle [radius=0.1];
\draw [fill=white] (3.5,3.5) circle [radius=0.1];
\draw [fill=white] (4.0,3.5) circle [radius=0.1];

\draw [fill=white] (0.0,3.0) circle [radius=0.1];
\draw [fill=white] (0.5,3.0) circle [radius=0.1];
\draw [fill=white] (1.0,3.0) circle [radius=0.1];
\draw [fill=white] (1.5,3.0) circle [radius=0.1];
\draw [fill=white] (2.0,3.0) circle [radius=0.1];
\draw [fill=white] (2.5,3.0) circle [radius=0.1];
\draw [fill=white] (3.0,3.0) circle [radius=0.1];
\draw [fill=white] (3.5,3.0) circle [radius=0.1];
\draw [fill=white] (4.0,3.0) circle [radius=0.1];

\draw [fill=white] (0.0,2.5) circle [radius=0.1];
\draw [fill=white] (0.5,2.5) circle [radius=0.1];
\draw [fill=white] (1.0,2.5) circle [radius=0.1];
\draw [fill=white] (1.5,2.5) circle [radius=0.1];
\draw [fill=white] (2.0,2.5) circle [radius=0.1];
\draw [fill=white] (2.5,2.5) circle [radius=0.1];
\draw [fill=white] (3.0,2.5) circle [radius=0.1];
\draw [fill=white] (3.5,2.5) circle [radius=0.1];
\draw [fill=white] (4.0,2.5) circle [radius=0.1];

\draw [fill=white] (0.0,2.0) circle [radius=0.1];
\draw [fill=white] (0.5,2.0) circle [radius=0.1];
\draw [fill=white] (1.0,2.0) circle [radius=0.1];
\draw [fill=white] (1.5,2.0) circle [radius=0.1];
\draw [fill=white] (2.0,2.0) circle [radius=0.1];
\draw [fill=white] (2.5,2.0) circle [radius=0.1];
\draw [fill=red!40] (3.0,2.0) circle [radius=0.1];
\draw [fill=white] (3.5,2.0) circle [radius=0.1];
\draw [fill=red!40] (4.0,2.0) circle [radius=0.1];

\draw [fill=white] (0.0,1.5) circle [radius=0.1];
\draw [fill=white] (0.5,1.5) circle [radius=0.1];
\draw [fill=white] (1.0,1.5) circle [radius=0.1];
\draw [fill=white] (1.5,1.5) circle [radius=0.1];
\draw [fill=white] (2.0,1.5) circle [radius=0.1];
\draw [fill=white] (2.5,1.5) circle [radius=0.1];
\draw [fill=white] (3.0,1.5) circle [radius=0.1];
\draw [fill=white] (3.5,1.5) circle [radius=0.1];
\draw [fill=white] (4.0,1.5) circle [radius=0.1];

\draw [fill=white] (0.0,1.0) circle [radius=0.1];
\draw [fill=white] (0.5,1.0) circle [radius=0.1];
\draw [fill=red!40] (1.0,1.0) circle [radius=0.1];
\draw [fill=white] (1.5,1.0) circle [radius=0.1];
\draw [fill=red!40] (2.0,1.0) circle [radius=0.1];
\draw [fill=white] (2.5,1.0) circle [radius=0.1];
\draw [fill=white] (3.0,1.0) circle [radius=0.1];
\draw [fill=white] (3.5,1.0) circle [radius=0.1];
\draw [fill=white] (4.0,1.0) circle [radius=0.1];

\draw [fill=white] (0.0,0.5) circle [radius=0.1];
\draw [fill=white] (0.5,0.5) circle [radius=0.1];
\draw [fill=white] (1.0,0.5) circle [radius=0.1];
\draw [fill=white] (1.5,0.5) circle [radius=0.1];
\draw [fill=white] (2.0,0.5) circle [radius=0.1];
\draw [fill=white] (2.5,0.5) circle [radius=0.1];
\draw [fill=white] (3.0,0.5) circle [radius=0.1];
\draw [fill=white] (3.5,0.5) circle [radius=0.1];
\draw [fill=white] (4.0,0.5) circle [radius=0.1];

\draw [fill=white] (0.0,0.0) circle [radius=0.1];
\draw [fill=white] (0.5,0.0) circle [radius=0.1];
\draw [fill=white] (1.0,0.0) circle [radius=0.1];
\draw [fill=white] (1.5,0.0) circle [radius=0.1];
\draw [fill=white] (2.0,0.0) circle [radius=0.1];
\draw [fill=white] (2.5,0.0) circle [radius=0.1];
\draw [fill=white] (3.0,0.0) circle [radius=0.1];
\draw [fill=white] (3.5,0.0) circle [radius=0.1];
\draw [fill=white] (4.0,0.0) circle [radius=0.1];

\node at (0.75,4) {$y$};
\node at (2.75,4) {$y+1$};
\end{tikzpicture}
\caption{Type-I errors affecting two nearest-neighbouring and overlapping bricks within nearest-neighbouring tapes. The phase-flip of red physical qubits is detected with probability 1/2, while any other combination of Type-I errors is detected with probability $\geq1/2$.}
\label{fig:int1}
\end{subfigure}%

\begin{subfigure}{.45\textwidth}
\centering
\begin{tikzpicture}[scale=1.11]
\node at (-0.4,3.7) {\footnotesize $1$};
\node at (-0.4,3.2) {\footnotesize $2$};
\node at (-0.4,2.7) {\footnotesize $3$};
\node at (-0.4,2.2) {\footnotesize $4$};
\node at (-0.4,1.7) {\footnotesize $5$};
\node at (-0.4,1.2) {\footnotesize $6$};
\node at (-0.4,0.7) {\footnotesize $7$};
\node at (-0.4,0.2) {\footnotesize $8$};

\draw [fill=blue!20] (1.5,2.5) ellipse (0.7cm and 0.2cm);
\draw [fill=blue!20] (1.5,2.0) ellipse (0.7cm and 0.2cm);
\draw [fill=blue!20] (3.5,2.0) ellipse (0.7cm and 0.2cm);
\draw [fill=blue!20] (3.5,1.5) ellipse (0.7cm and 0.2cm);
\draw [fill=blue!20] (1.5,1.5) ellipse (0.7cm and 0.2cm);
\draw [fill=blue!20] (1.5,1.0) ellipse (0.7cm and 0.2cm);

\draw [line width=0.5mm,dashed] (0.0+1.75,-0.5) -- (0.0+1.75,4.0);
\draw [line width=0.5mm,dashed] (2.0+1.75,-0.5) -- (2.0+1.75,4.0);

\draw (0.0,0.0) -- (4.0,0.0);
\draw (0.0,0.5) -- (4.0,0.5);
\draw (0.0,1.0) -- (4.0,1.0);
\draw (0.0,1.5) -- (4.0,1.5);
\draw (0.0,2.0) -- (4.0,2.0);
\draw (0.0,2.5) -- (4.0,2.5);
\draw (0.0,3.0) -- (4.0,3.0);
\draw (0.0,3.5) -- (4.0,3.5);

\draw [dashed] (-0.4,0.0) -- (4.4,0.0);
\draw [dashed] (-0.4,0.5) -- (4.4,0.5);
\draw [dashed] (-0.4,1.0) -- (4.4,1.0);
\draw [dashed] (-0.4,1.5) -- (4.4,1.5);
\draw [dashed] (-0.4,2.0) -- (4.4,2.0);
\draw [dashed] (-0.4,2.5) -- (4.4,2.5);
\draw [dashed] (-0.4,3.0) -- (4.4,3.0);
\draw [dashed] (-0.4,3.5) -- (4.4,3.5);

\draw (1.0,0.0) -- (1.0,0.5);
\draw (1.0,1.0) -- (1.0,1.5);
\draw (1.0,2.0) -- (1.0,2.5);
\draw (1.0,3.0) -- (1.0,3.5);
\draw (2.0,0.0) -- (2.0,0.5);
\draw (2.0,1.0) -- (2.0,1.5);
\draw (2.0,2.0) -- (2.0,2.5);
\draw (2.0,3.0) -- (2.0,3.5);

\draw (3.0,0.5) -- (3.0,1.0);
\draw (3.0,1.5) -- (3.0,2.0);
\draw (3.0,2.5) -- (3.0,3.0);
\draw (4.0,0.5) -- (4.0,1.0);
\draw (4.0,1.5) -- (4.0,2.0);
\draw (4.0,2.5) -- (4.0,3.0);

\draw [fill=white] (0.0,3.5) circle [radius=0.1];
\draw [fill=white] (0.5,3.5) circle [radius=0.1];
\draw [fill=white] (1.0,3.5) circle [radius=0.1];
\draw [fill=white] (1.5,3.5) circle [radius=0.1];
\draw [fill=white] (2.0,3.5) circle [radius=0.1];
\draw [fill=white] (2.5,3.5) circle [radius=0.1];
\draw [fill=white] (3.0,3.5) circle [radius=0.1];
\draw [fill=white] (3.5,3.5) circle [radius=0.1];
\draw [fill=white] (4.0,3.5) circle [radius=0.1];

\draw [fill=white] (0.0,3.0) circle [radius=0.1];
\draw [fill=white] (0.5,3.0) circle [radius=0.1];
\draw [fill=white] (1.0,3.0) circle [radius=0.1];
\draw [fill=white] (1.5,3.0) circle [radius=0.1];
\draw [fill=white] (2.0,3.0) circle [radius=0.1];
\draw [fill=white] (2.5,3.0) circle [radius=0.1];
\draw [fill=white] (3.0,3.0) circle [radius=0.1];
\draw [fill=white] (3.5,3.0) circle [radius=0.1];
\draw [fill=white] (4.0,3.0) circle [radius=0.1];

\draw [fill=white] (0.0,2.5) circle [radius=0.1];
\draw [fill=white] (0.5,2.5) circle [radius=0.1];
\draw [fill=red!40] (1.0,2.5) circle [radius=0.1];
\draw [fill=white] (1.5,2.5) circle [radius=0.1];
\draw [fill=red!40] (2.0,2.5) circle [radius=0.1];
\draw [fill=white] (2.5,2.5) circle [radius=0.1];
\draw [fill=white] (3.0,2.5) circle [radius=0.1];
\draw [fill=white] (3.5,2.5) circle [radius=0.1];
\draw [fill=white] (4.0,2.5) circle [radius=0.1];

\draw [fill=white] (0.0,2.0) circle [radius=0.1];
\draw [fill=white] (0.5,2.0) circle [radius=0.1];
\draw [fill=white] (1.0,2.0) circle [radius=0.1];
\draw [fill=white] (1.5,2.0) circle [radius=0.1];
\draw [fill=white] (2.0,2.0) circle [radius=0.1];
\draw [fill=white] (2.5,2.0) circle [radius=0.1];
\draw [fill=red!40] (3.0,2.0) circle [radius=0.1];
\draw [fill=white] (3.5,2.0) circle [radius=0.1];
\draw [fill=red!40] (4.0,2.0) circle [radius=0.1];

\draw [fill=white] (0.0,1.5) circle [radius=0.1];
\draw [fill=white] (0.5,1.5) circle [radius=0.1];
\draw [fill=white] (1.0,1.5) circle [radius=0.1];
\draw [fill=white] (1.5,1.5) circle [radius=0.1];
\draw [fill=white] (2.0,1.5) circle [radius=0.1];
\draw [fill=white] (2.5,1.5) circle [radius=0.1];
\draw [fill=red!40] (3.0,1.5) circle [radius=0.1];
\draw [fill=white] (3.5,1.5) circle [radius=0.1];
\draw [fill=red!40] (4.0,1.5) circle [radius=0.1];

\draw [fill=white] (0.0,1.0) circle [radius=0.1];
\draw [fill=white] (0.5,1.0) circle [radius=0.1];
\draw [fill=red!40] (1.0,1.0) circle [radius=0.1];
\draw [fill=white] (1.5,1.0) circle [radius=0.1];
\draw [fill=red!40] (2.0,1.0) circle [radius=0.1];
\draw [fill=white] (2.5,1.0) circle [radius=0.1];
\draw [fill=white] (3.0,1.0) circle [radius=0.1];
\draw [fill=white] (3.5,1.0) circle [radius=0.1];
\draw [fill=white] (4.0,1.0) circle [radius=0.1];

\draw [fill=white] (0.0,0.5) circle [radius=0.1];
\draw [fill=white] (0.5,0.5) circle [radius=0.1];
\draw [fill=white] (1.0,0.5) circle [radius=0.1];
\draw [fill=white] (1.5,0.5) circle [radius=0.1];
\draw [fill=white] (2.0,0.5) circle [radius=0.1];
\draw [fill=white] (2.5,0.5) circle [radius=0.1];
\draw [fill=white] (3.0,0.5) circle [radius=0.1];
\draw [fill=white] (3.5,0.5) circle [radius=0.1];
\draw [fill=white] (4.0,0.5) circle [radius=0.1];

\draw [fill=white] (0.0,0.0) circle [radius=0.1];
\draw [fill=white] (0.5,0.0) circle [radius=0.1];
\draw [fill=white] (1.0,0.0) circle [radius=0.1];
\draw [fill=white] (1.5,0.0) circle [radius=0.1];
\draw [fill=white] (2.0,0.0) circle [radius=0.1];
\draw [fill=white] (2.5,0.0) circle [radius=0.1];
\draw [fill=white] (3.0,0.0) circle [radius=0.1];
\draw [fill=white] (3.5,0.0) circle [radius=0.1];
\draw [fill=white] (4.0,0.0) circle [radius=0.1];

\node at (0.75,4) {$y$};
\node at (2.75,4) {$y+1$};
\end{tikzpicture}
\caption{Type-I errors affecting four nearest-neighbouring and overlapping bricks. The phase-flip of red physical qubits is detected with probability 1/4, while any other combination of Type-I errors is detected with probability $\geq1/4$.}
\label{fig:int2}
\end{subfigure}%

\begin{subfigure}{.45\textwidth}
\centering
\begin{tikzpicture}[scale=1.11]
\node at (-0.4,3.7) {\footnotesize $1$};
\node at (-0.4,3.2) {\footnotesize $2$};
\node at (-0.4,2.7) {\footnotesize $3$};
\node at (-0.4,2.2) {\footnotesize $4$};
\node at (-0.4,1.7) {\footnotesize $5$};
\node at (-0.4,1.2) {\footnotesize $6$};
\node at (-0.4,0.7) {\footnotesize $7$};
\node at (-0.4,0.2) {\footnotesize $8$};

\draw [fill=blue!20] (3.5,3.0) ellipse (0.7cm and 0.2cm);
\draw [fill=blue!20] (3.5,2.5) ellipse (0.7cm and 0.2cm);
\draw [fill=blue!20] (1.5,2.5) ellipse (0.7cm and 0.2cm);
\draw [fill=blue!20] (1.5,2.0) ellipse (0.7cm and 0.2cm);
\draw [fill=blue!20] (3.5,2.0) ellipse (0.7cm and 0.2cm);
\draw [fill=blue!20] (3.5,1.5) ellipse (0.7cm and 0.2cm);
\draw [fill=blue!20] (1.5,1.5) ellipse (0.7cm and 0.2cm);
\draw [fill=blue!20] (1.5,1.0) ellipse (0.7cm and 0.2cm);

\node at (0.75,4) {$y$};
\node at (2.75,4) {$y+1$};

\draw [line width=0.5mm,dashed] (0.0+1.75,-0.5) -- (0.0+1.75,4.0);
\draw [line width=0.5mm,dashed] (2.0+1.75,-0.5) -- (2.0+1.75,4.0);

\draw (0.0,0.0) -- (4.0,0.0);
\draw (0.0,0.5) -- (4.0,0.5);
\draw (0.0,1.0) -- (4.0,1.0);
\draw (0.0,1.5) -- (4.0,1.5);
\draw (0.0,2.0) -- (4.0,2.0);
\draw (0.0,2.5) -- (4.0,2.5);
\draw (0.0,3.0) -- (4.0,3.0);
\draw (0.0,3.5) -- (4.0,3.5);

\draw [dashed] (-0.4,0.0) -- (4.4,0.0);
\draw [dashed] (-0.4,0.5) -- (4.4,0.5);
\draw [dashed] (-0.4,1.0) -- (4.4,1.0);
\draw [dashed] (-0.4,1.5) -- (4.4,1.5);
\draw [dashed] (-0.4,2.0) -- (4.4,2.0);
\draw [dashed] (-0.4,2.5) -- (4.4,2.5);
\draw [dashed] (-0.4,3.0) -- (4.4,3.0);
\draw [dashed] (-0.4,3.5) -- (4.4,3.5);

\draw (1.0,0.0) -- (1.0,0.5);
\draw (1.0,1.0) -- (1.0,1.5);
\draw (1.0,2.0) -- (1.0,2.5);
\draw (1.0,3.0) -- (1.0,3.5);
\draw (2.0,0.0) -- (2.0,0.5);
\draw (2.0,1.0) -- (2.0,1.5);
\draw (2.0,2.0) -- (2.0,2.5);
\draw (2.0,3.0) -- (2.0,3.5);

\draw (3.0,0.5) -- (3.0,1.0);
\draw (3.0,1.5) -- (3.0,2.0);
\draw (3.0,2.5) -- (3.0,3.0);
\draw (4.0,0.5) -- (4.0,1.0);
\draw (4.0,1.5) -- (4.0,2.0);
\draw (4.0,2.5) -- (4.0,3.0);

\draw [fill=white] (0.0,3.5) circle [radius=0.1];
\draw [fill=white] (0.5,3.5) circle [radius=0.1];
\draw [fill=white] (1.0,3.5) circle [radius=0.1];
\draw [fill=white] (1.5,3.5) circle [radius=0.1];
\draw [fill=white] (2.0,3.5) circle [radius=0.1];
\draw [fill=white] (2.5,3.5) circle [radius=0.1];
\draw [fill=white] (3.0,3.5) circle [radius=0.1];
\draw [fill=white] (3.5,3.5) circle [radius=0.1];
\draw [fill=white] (4.0,3.5) circle [radius=0.1];

\draw [fill=white] (0.0,3.0) circle [radius=0.1];
\draw [fill=white] (0.5,3.0) circle [radius=0.1];
\draw [fill=white] (1.0,3.0) circle [radius=0.1];
\draw [fill=white] (1.5,3.0) circle [radius=0.1];
\draw [fill=white] (2.0,3.0) circle [radius=0.1];
\draw [fill=white] (2.5,3.0) circle [radius=0.1];
\draw [fill=red!40] (3.0,3.0) circle [radius=0.1];
\draw [fill=white] (3.5,3.0) circle [radius=0.1];
\draw [fill=red!40] (4.0,3.0) circle [radius=0.1];

\draw [fill=white] (0.0,2.5) circle [radius=0.1];
\draw [fill=white] (0.5,2.5) circle [radius=0.1];
\draw [fill=red!40] (1.0,2.5) circle [radius=0.1];
\draw [fill=white] (1.5,2.5) circle [radius=0.1];
\draw [fill=red!40] (2.0,2.5) circle [radius=0.1];
\draw [fill=white] (2.5,2.5) circle [radius=0.1];
\draw [fill=white] (3.0,2.5) circle [radius=0.1];
\draw [fill=white] (3.5,2.5) circle [radius=0.1];
\draw [fill=white] (4.0,2.5) circle [radius=0.1];

\draw [fill=white] (0.0,2.0) circle [radius=0.1];
\draw [fill=white] (0.5,2.0) circle [radius=0.1];
\draw [fill=red!40] (1.0,2.0) circle [radius=0.1];
\draw [fill=white] (1.5,2.0) circle [radius=0.1];
\draw [fill=red!40] (2.0,2.0) circle [radius=0.1];
\draw [fill=white] (2.5,2.0) circle [radius=0.1];
\draw [fill=red!40] (3.0,2.0) circle [radius=0.1];
\draw [fill=white] (3.5,2.0) circle [radius=0.1];
\draw [fill=red!40] (4.0,2.0) circle [radius=0.1];

\draw [fill=white] (0.0,1.5) circle [radius=0.1];
\draw [fill=white] (0.5,1.5) circle [radius=0.1];
\draw [fill=white] (1.0,1.5) circle [radius=0.1];
\draw [fill=white] (1.5,1.5) circle [radius=0.1];
\draw [fill=white] (2.0,1.5) circle [radius=0.1];
\draw [fill=white] (2.5,1.5) circle [radius=0.1];
\draw [fill=red!40] (3.0,1.5) circle [radius=0.1];
\draw [fill=white] (3.5,1.5) circle [radius=0.1];
\draw [fill=red!40] (4.0,1.5) circle [radius=0.1];

\draw [fill=white] (0.0,1.0) circle [radius=0.1];
\draw [fill=white] (0.5,1.0) circle [radius=0.1];
\draw [fill=red!40] (1.0,1.0) circle [radius=0.1];
\draw [fill=white] (1.5,1.0) circle [radius=0.1];
\draw [fill=red!40] (2.0,1.0) circle [radius=0.1];
\draw [fill=white] (2.5,1.0) circle [radius=0.1];
\draw [fill=white] (3.0,1.0) circle [radius=0.1];
\draw [fill=white] (3.5,1.0) circle [radius=0.1];
\draw [fill=white] (4.0,1.0) circle [radius=0.1];

\draw [fill=white] (0.0,0.5) circle [radius=0.1];
\draw [fill=white] (0.5,0.5) circle [radius=0.1];
\draw [fill=white] (1.0,0.5) circle [radius=0.1];
\draw [fill=white] (1.5,0.5) circle [radius=0.1];
\draw [fill=white] (2.0,0.5) circle [radius=0.1];
\draw [fill=white] (2.5,0.5) circle [radius=0.1];
\draw [fill=white] (3.0,0.5) circle [radius=0.1];
\draw [fill=white] (3.5,0.5) circle [radius=0.1];
\draw [fill=white] (4.0,0.5) circle [radius=0.1];

\draw [fill=white] (0.0,0.0) circle [radius=0.1];
\draw [fill=white] (0.5,0.0) circle [radius=0.1];
\draw [fill=white] (1.0,0.0) circle [radius=0.1];
\draw [fill=white] (1.5,0.0) circle [radius=0.1];
\draw [fill=white] (2.0,0.0) circle [radius=0.1];
\draw [fill=white] (2.5,0.0) circle [radius=0.1];
\draw [fill=white] (3.0,0.0) circle [radius=0.1];
\draw [fill=white] (3.5,0.0) circle [radius=0.1];
\draw [fill=white] (4.0,0.0) circle [radius=0.1];
\end{tikzpicture}
\caption{Type-I errors affecting three nearest-neighbouring and overlapping bricks. The phase-flip of red physical qubits is detected with probability 1/8, while any other combination of Type-I errors is detected with probability $\geq1/8$.}
\label{fig:int3}
\end{subfigure}%

\caption{\small Type-I errors affecting nearest-neighbouring tapes. In any picture, we illustrate the error that has the least probability of being detected. The reader can verify that any other combination of Type-I errors is detected with higher probability.}
\end{figure}

\newpage

\noindent qubit operations from two-qubit ones). Here, $\bar{k}$ is a binary vector composed by $n/2$ elements if $y$ is odd, and by $n/2-1$ if $y$ is even. As a convention, indicating the null-vector as $\bar{0}$, we define
\begin{eqnarray}
\begin{tabular}{lllllll}
$\mathbf{cX}^{\bar{0}}_y$&=&$cX_{1,2}\otimes cX_{3,4}\otimes..\otimes cX_{n-1,n}$&\cr
&&for $y$ odd\\ \cr
&&and\\ \cr
$\mathbf{cX}^{\bar{0}}_y$&=&$\mathbb{1}_1\otimes cX_{2,3}\otimes cX_{4,5}\otimes..\otimes cX_{n-2,n-1}\otimes \mathbb{1}_n$&\cr
&&for $y$ even\cr
\end{tabular}
\end{eqnarray}
Flipping a given element of $\bar{0}$ to 1 is equivalent, in this notation, to exchange the role of target and control qubits in the CNOT operation implemented within the corresponding brick. Finally, we represent by-products of Pauli-$Z$ as $\mathbf{Z}^{\bar{g}}$ (again, notice the bold font), where $\bar{g}$ is a $n$-element binary vector whose $i^{th}$-element is equal to 0 if no error is affecting logical qubit $i$, otherwise is equal to 1. 

With this new notation, we can prove that for any $n$-element binary vectors $\bar{g}$ and $\bar{g}'$ such that they are not simultaneously equal to $\bar{0}$, there exists a $\bar{k}$ such that
\begin{equation}~\label{eq:cc}
\mathbf{Z}^{\bar{g}'}\textbf{cX}^{\bar{k}}\mathbf{Z}^{\bar{g}}\neq\textbf{cX}^{\bar{k}}\textrm{ ,}
\end{equation}
The proof of equation \ref{eq:cc} can be obtained using the relations
\begin{eqnarray}
\begin{tabular}{lllll}
$\big(Z_c\otimes\mathbb{1}_t\big)\textrm{ }cX_{c,t}$&=&$cX_{c,t}\textrm{ }\big(Z_c\otimes\mathbb{1}_t\big)$\\ \cr
$\big(\mathbb{1}_c\otimes Z_t\big)\textrm{ }cX_{c,t}$&=&$cX_{c,t}\textrm{ }\big(Z_c\otimes Z_t\big)$\\ \cr
$\big(Z_c\otimes Z_t\big)\textrm{ }cX_{c,t}$&=&$cX_{c,t}\textrm{ }\big(\mathbb{1}_c\otimes Z_t\big)$\cr
\end{tabular}
\label{eq:ctraps}
\end{eqnarray}
valid for any two qubits $c$ and $t$. In more detail, take a specific $\bar{k}$ such that $\mathbf{Z}^{\bar{g}'}\textbf{cX}^{\bar{k}}\mathbf{Z}^{\bar{g}}=\textbf{cX}^{\bar{k}}$. Using equations \ref{eq:ctraps}, it follows that $\mathbf{Z}^{\bar{g}'}\textbf{cX}^{\bar{k}\textrm{}\oplus\bar{1}}\mathbf{Z}^{\bar{g}}\neq\textbf{cX}^{\bar{k}\textrm{}\oplus\bar{1}}$ (Figure \ref{fig:proofst2}).

Using equation \ref{eq:cc}, we can finally prove that for any error produced by a combination of Type-I and Type-II errors, there exists at least a C-trap that can detect the error with probability 1. Consider the operator
\begin{equation}
U_C^{\bar{k}_w,..,\bar{k}_1}=\textbf{cX}^{\bar{k}_w}_w\cdot..\cdot \textbf{cX}^{\bar{k}_1}_1\textrm{ ,}
\end{equation}
namely the unitary implemented by a C-trap for given vectors $\bar{k}_1,..,\bar{k}_w$. Also, consider a specific set of by-products $\{\mathbf{Z}^{\overline{g}_0},\mathbf{Z}^{\overline{g}_1},..,\mathbf{Z}^{\overline{g}_{w}}\}$ corrupting the logical computation (here, $\mathbf{Z}^{\overline{g}_0}$ represents the phase-flips due to Type-II errors; on the contrary, for any $y=1,..,w$, $\mathbf{Z}^{\overline{g}_y}$ represents the errors affecting the logical circuit in between tapes $y$ and $y+1$). In the presence of this deviation, $U_C$ becomes                                      
\begin{equation*}
\widetilde{U}_C^{\bar{k}_w,..,\bar{k}_1,\bar{g}_w,..,\bar{g}_0}= \mathbf{Z}^{\bar{g}_0\oplus \bar{g}_w}\textbf{cX}^{\bar{k}_w}_w\cdot..\cdot \mathbf{Z}^{\bar{g}_1}\textbf{cX}^{\bar{k}_1}_1\mathbf{Z}^{\bar{g}_0}
\end{equation*}
``Moving'' the by-products toward the end of the circuit (i.e. commuting by-products with the various layers of CNOTs via Equations \ref{eq:ctraps}) we obtain
\begin{equation}
\widetilde{U}_C^{\bar{k}_w,..,\bar{k}_1,\bar{g}_w,..,\bar{g}_0}= \mathbf{Z}^{\bar{g}'}\textbf{cX}^{\bar{k}_w}_w\mathbf{Z}^{\bar{g}}\cdot..\cdot \textbf{cX}^{\bar{k}_1}_1\textrm{ ,}
\end{equation}
where $\bar{g}$ depends on the phase-flips affecting the qubits in the tapes $0,1,..,w-1$ and $\bar{g}'=\bar{g}_0\oplus \bar{g}_w$. (For simplicity, we suppose that $w$ is odd. If $w$ is even, the following arguments hold, provided that the by-products are rewritten as errors affecting the computation before and after tape $w-1$). Before applying Equation \ref{eq:cc} and showing that $\widetilde{U}_C\neq U_C$, we need to show that there is no combination of by-products such that $\overline{g}=\bar{0}$ for every $\bar{k}_1,\bar{k}_2,..,\bar{k}_{w-1}$. To see this, consider the possible different cases:
\begin{itemize}
	\item $\overline{g}_{y_1}\neq\overline{0}$ for a given $y_1\in(0,..,w-1)$, all other $\overline{g}_{y}$ equal to $\bar{0}$. Moving errors toward the end of the circuit, equations \ref{eq:ctraps} guarantee that $\overline{g}\neq\overline{0}$ for any choice of $\bar{k}_1,\bar{k}_2,..,\bar{k}_{w-1}$.
	\item $\overline{g}_{y_1},\overline{g}_{y_2}\neq\overline{0}$ for some $y_1,y_2\in(0,..,w-1)$ ($y_1<y_2$), all other $\overline{g}_y$ equal to $\bar{0}$. Here, we use the same idea as the one in Figure $\ref{fig:proofst2}$: (i) we move errors toward each other, until they are separated by a single tape (say tape $d$) and (ii) we find out (if it exists) a specific $\bar{k_d}$ that let errors cancel out with each other. The C-traps where the configuration of CNOTs in tape $d$ is described by $\bar{k}_d\oplus\bar{1}$ do not let the errors cancel out with each other. Thus, ``merging'' the two errors and moving them toward the end of the circuit, we obtain a $\overline{g}$ that is different from $\bar{0}$. Notice that, in some sense, this procedure allows to ``merge'' two errors into a single one described by $\overline{g}$.
\item $\overline{g}_{y_1},\overline{g}_{y_2},\overline{g}_{y_3}\neq\overline{0}$ for some $y_1,y_2,y_3=0,..,w-1$ $(y_1<y_2<y_3)$, all other $\overline{g}_y$ equal $\bar{0}$. In this case, (i) we move errors affecting tape $y_1$ toward tape $y_2$, (ii) we merge the two errors as explained above and (iii) we do the same with the newly generated error affecting tape $y_2$ and the error affecting tape $y_3$. This way, we reduce three-tape errors to single-tape ones.
	\item We repeat the same procedure as above for deviations affecting more than three tapes.
\end{itemize}
Thus, we can indeed apply equation \ref{eq:cc}. We obtain $\widetilde{U}_C\neq U_C$, meaning that for some of the terms of the summation, errors do not trivially cancel out. The particular sequences of $\{\bar{k}_y\}_{y=0}^w$ defining these terms correspond to logical circuits that are corrupted by errors and output a state which is orthogonal to the expected one.\\

\noindent\textbf{Step II.} We first consider the various combinations of Type-I and Type-II errors affecting a single tape $y_1$ (i.e. $\textbf{Z}^{\bar{g}_{y}}=\mathbb{1}_n$ for all $y=0,..,w$ except from some specific $y_1$).  In this case, there is a probability $\leq1/2$ that some by-products are produced (with equality if the the phase-flips solely affect two physical qubits on the same line, inequality otherwise). Moving the by-products toward the end of the circuit and using equations \ref{eq:ctraps}, it can thus be seen that the probability of detecting the error is $\geq1/2$.

\newpage
\begin{small}

	\noindent\makebox[\linewidth]{\rule{8.8cm}{0.4pt}}
	\textbf{Protocol \hypertarget{pr:pr2}{2}.}\\
	\noindent\makebox[\linewidth]{\rule{8.8cm}{0.4pt}}
	\textbf{Hypothesis:} 
	\begin{itemize}
	\item[] Alice can measure qubits in the set of bases $\{|\pm\rangle_{\tau^{(k)}_{i,j}}\langle\pm|\}$, where $\tau^{(k)}_{i,j}\in\{0,\pi/4,..,7\pi/4\}$.
	\end{itemize}
	\textbf{Input}: 
	\begin{itemize}
	\item[(i) ] the number of computations $v$.
		\item[(ii)] the set of measurement angles $\{\phi_{i,j}\}$ for the target computation.
		\item[(iii)] the sets of random variables $\{r_{i,j}^{(k)}=0,1\}$ and $\{r_{i,j}^{\prime(k)}=0,1\}$ and the set of random angles $\{\theta_{i,j}^{(k)}=0,\pi/4,..,7\pi/4\}$ for any computation $k=1,..,v+1$.
	\end{itemize}
	\noindent\textbf{0. Preliminary operation.}\\
	Alice randomly chooses $v_t\in(1,2,\ldots,v+1)$ and sets $\{\phi^{(v_t)}_{i,j}\}=\{\phi_{i,j}\}$.\\
	
	\noindent For $k=1,..,v+1$:
	\begin{itemize}
		\item[]\textbf{1. Assigning measurement angles.}\\
		{If $k\neq v_t$}, Alice randomly runs Sub-protocol \hyperlink{pr:spr2}{1.1} or Sub-protocol \hyperlink{pr:spr2}{1.2} on input $n\times m$ and obtains the set $\{\phi^{(k)}_{i,j}\}$.
		
		\item[]\textbf{2. State preparation.}
		
		For $i=1,..,n$ and for $j=1,..,m$, Alice asks Bob to create eight qubits in the state $\ket{+}_{\tau^{(k)}_{i,j}}$, $\tau^{(k)}_{i,j}\in\{0,\pi/4,..,7\pi/4\}$, and to send them to her. Next, she measures each qubit $\ket{+}_{\tau^{(k)}_{i,j}}$ it in the basis $\{|\pm\rangle_{\tau^{(k)}_{i,j}}\langle\pm|\}$. If measurements output 0, she sends back to Bob the qubit in the state $R_Z(\theta^{(k)}_{i,j}+\pi\sum_{(i',j')\sim(i,j)}^{(k)}r_{i,j}^{\prime(k)})\ket{+}$ and discards the others. Otherwise, she restarts preparation of qubit $(i,j)$.

		\item[]\textbf{3. Blind Computation.}
		\begin{itemize}
			\item[3.1] Bob entangles the qubits in its memory and creates a $(n\times m)$ BwS.
			\item[3.2] For $j=1,..,m$ and for $i=1,..,n$,
			\begin{itemize}
				\item[-]Bob sends Alice the qubit in position $(i,j)$. Alice measures it with angle $\delta^{(k)}_{i,j}=(-1)^{r^{\prime(k)}_{i,j}}\phi^{(k)}_{i,j}+\theta^{(k)}_{i,j}+r^{(k)}_{i,j}\pi$ and obtains outcome $s^{(k)}_{i,j}$.
				\item[-]Alice recomputes the measurement outcome $s^{(k)}_{i,j}$ as $s^{(k)}_{i,j}\oplus r^{(k)}_{i,j}$. Next, she recomputes measurement angles of yet-to-be-measured qubits as $\{(-1)^{s_X}\phi^{(k)}_{i,j}+s_Z\pi\}$.
			\end{itemize}
		\end{itemize}
		
		\item[]\textbf{4. Verification.}\\
		If $k\neq v_t$ and last-column measurement outcomes $\{s^{(k)}_{i,m}\}_{i=1}^n\neq(0,0,..,0)$ , Alice rejects the whole computation.
		
	\end{itemize}

	\noindent\textbf{Output:} Outcomes $\{s^{(v_t)}_{i,m}\}$ of measurements of last-column qubits of the target computation.\\
	\noindent\makebox[\linewidth]{\rule{8.8cm}{0.4pt}}
\end{small}

\newpage
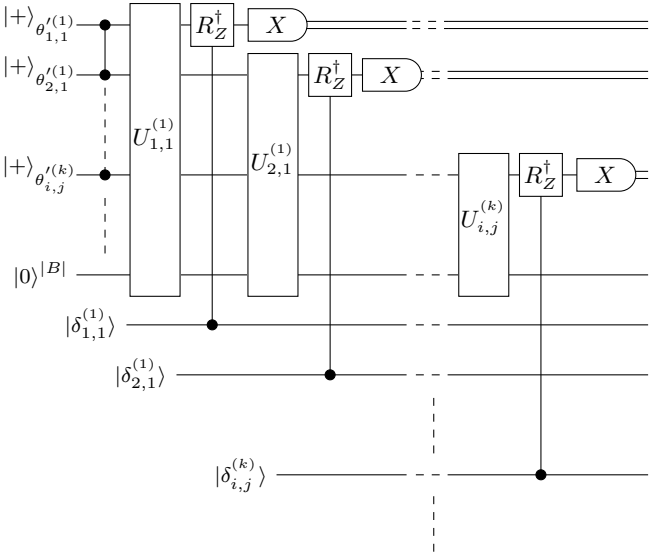
\begin{figure}[H]
	\begin{tikzpicture}[scale=0.95, every node/.style={scale=1}]
	
	\draw (0,4.2) -- (2.7,4.2);
	\draw (2.7,4.25) -- (4.62,4.25);
	\draw (2.7,4.15) -- (4.62,4.15);
	\draw [dashed] (4.7,4.25) -- (5.2,4.25);
	\draw [dashed] (4.7,4.15) -- (5.2,4.15);
	\draw (5.2,4.25) -- (8,4.25);
	\draw (5.2,4.15) -- (8,4.15);
	
	\draw (0,3.5) -- (4.3,3.5);
	\draw (4.3,3.55) -- (4.62,3.55);
	\draw (4.3,3.45) -- (4.62,3.45);
	\draw [dashed] (4.75,3.55) -- (5.2,3.55);
	\draw [dashed] (4.75,3.45) -- (5.2,3.45);
	\draw (5.2,3.55) -- (8,3.55);
	\draw (5.2,3.45) -- (8,3.45);
	
	\draw (0,2.1) -- (4.62,2.1);
	\draw [dashed] (4.75,2.1) -- (5.2,2.1);
	\draw (5.2,2.1) -- (7.8,2.1);
	\draw (7.8,2.15) -- (8,2.15);
	\draw (7.8,2.05) -- (8,2.05);
	
	\draw (0.0,0.7) -- (4.62,0.7);
	\draw [dashed] (4.75,1.4-0.7) -- (5.2,1.4-0.7);
	\draw (5.2,1.4-0.7) -- (8,1.4-0.7);
	
	\draw (0.7,0.7-0.7) -- (4.62,0.7-0.7);
	\draw [dashed] (4.75,0.7-0.7) -- (5.2,0.7-0.7);
	\draw (5.2,0.7-0.7) -- (8,0.7-0.7);
	
	\draw (1.4,-0.7) -- (4.62,-0.7);
	\draw [dashed] (4.75,-0.7) -- (5.2,-0.7);
	\draw (5.2,-0.7) -- (8,-0.7);
	
	\draw (2.8,-2.1) -- (4.62,-2.1);
	\draw [dashed] (4.75,-2.1) -- (5.2,-2.1);
	\draw (5.2,-2.1) -- (8,-2.1);
	
	\draw (0.4,3.5) -- (0.4,4.2);
	\draw [dashed] (0.4,2.1) -- (0.4,3.5);
	\draw [fill] (0.4,4.2) circle [radius=0.7mm];
	\draw [fill] (0.4,3.5) circle [radius=0.7mm];
	\draw [fill] (0.4,2.1) circle [radius=0.7mm];
	
	\node at (-0.5,4.2) {\footnotesize $\ket{+}_{\theta^{^{\prime(1)}}_{1,1}}$};
	\node at (-0.5,3.5) {\footnotesize $\ket{+}_{\theta^{^{\prime(1)}}_{2,1}}$};
	\node at (-0.5,2.1) {\footnotesize $\ket{+}_{\theta^{^{\prime(k)}}_{i,j}}$};
	\node at (0.2,0.0) {\footnotesize $\ket{\delta^{^{(1)}}_{1,1}}$};
	\node at (0.9,-0.7) {\footnotesize $\ket{\delta^{^{(1)}}_{2,1}}$};
	\node at (2.3,-2.1) {\footnotesize $\ket{\delta^{^{(k)}}_{i,j}}$};
	
	\node at (-0.5,0.7) {\footnotesize $\ket{0}^{|B|}$};
	
	\draw (1.9,4.2) -- (1.9,0.0);
	\draw [fill] (1.9,0.0) circle [radius=0.7mm];
	\draw (3.55,3.5) -- (3.55,-0.7);
	\draw [fill] (3.55,-0.7) circle [radius=0.7mm];
	\draw (6.5,2.1) -- (6.5,-2.1);
	\draw [fill] (6.5,-2.1) circle [radius=0.7mm];
	
	\draw [fill=white] (1.6,3.9) rectangle (2.2,4.5);
	\node at (1.9,4.2) {\small $R^{\dagger}_{Z}$};
	\draw [fill=white] (3.25,3.2) rectangle (3.85,3.8);
	\node at (3.55,3.5) {\small $R^{\dagger}_{Z}$};
	\draw [fill=white] (6.2,1.8) rectangle (6.8,2.4);
	\node at (6.5,2.1) {\small $R^{\dagger}_Z$};
	
	\draw [fill=white] (0.75,4.5) rectangle (1.45,0.4);
	\node at (1.1,2.8-0.15) {\small $U^{^{(1)}}_{1,1}$};
	\draw [fill=white] (2.4,3.8) rectangle (3.1,0.4);
	\node at (2.75,2.45-0.15) {\small $U^{^{(1)}}_{2,1}$};
	\draw [fill=white] (5.35,2.4) rectangle (6.05,0.4);
	\node at (5.7,1.65-0.15) {\small $U^{^{(k)}}_{i,j}$};
		
	\draw [dashed] (0.4,1.0) -- (0.4,2.5-0.7);
	\draw [dashed] (5,-1.1-0.7) -- (5,-0.3-0.7);
	\draw [dashed] (5,-1.7-0.7) -- (5,-2.5-0.7);
	
	\node [fill=white,draw,rounded rectangle,rounded rectangle left arc=none,minimum width=1cm] (name) at (2.8,4.2) {$X$};
	\node [fill=white,draw,rounded rectangle,rounded rectangle left arc=none,minimum width=1cm] (name) at (4.4,3.5) {$X$};
	\node [fill=white,draw,rounded rectangle,rounded rectangle left arc=none,minimum width=1cm] (name) at (7.4,2.1) {$X$};
	
	\end{tikzpicture}
	\caption{\small Circuit diagram of a computation on a BwS in the case of trusted measurements. Unitary $U^{^{(k)}}_{i,j}$ represents Bob's deviations before measurement of physical qubit $(i,j)$ along computation $k$. Bob's private register is initialized in the state $\ket{0}^{|B|}$. For simplicity, in the above picture we have rewritten each angle $\theta_{i,j}^{(k)}+\pi\sum_{(i',j')\sim(i,j)}{r}_{i,j}^{\prime(k)}$ as $\theta_{i,j}^{\prime(k)}$.}
	\label{fig:MBQC1circ10}
\end{figure}
\noindent Thus, any single-tape error is detected with probability $\geq$1/2.

The above lower bound remains valid also for errors affecting more tapes. For instance, consider the combinations of Type-I and Type-II errors affecting two nearest-neighbouring tapes $y$ and $y+1$ (i.e. $\textbf{Z}^{\bar{g}_{y}}=\mathbb{1}_n$ for all $y=0,..,w$ except from some specific $y$ and $y+1$). If the phase-flips affect solely two {``overlapping''} bricks (one belonging to tape $y$ and the other to tape $y+1$), the probability of detecting the presence of errors is at least 1/2 (Figure \ref{fig:int1}). Conversely, the combinations of phase-flips that affect more than two overlapping bricks are detected with higher probability (Figures \ref{fig:int2} and \ref{fig:int3}), and the same is true for errors affecting bricks that do not overlap.

The same strategy can be used to show that the probability of detecting combinations affecting two non-nearest-neighbouring tapes is lower bounded by 1/2 (using equations \ref{eq:ctraps}, errors can be moved toward each other and re-written as errors affecting two nearest-neighbouring tapes), as well as for errors affecting more than two tapes (errors can be moved into each other and ``merged'', and so re-written as errors affecting two nearest-neighbouring tapes).
\end{proof}

Summarising, we introduced a verification protocol that makes use of classically efficiently simulable computations to certify the correctness of a universal quantum computation. We described our protocol in the language of cryptographic protocols and subsequently proved that it is correct, blind and verifiable, provided that a trust assumption is made on state preparation. In what follows, we describe another verification protocol where the trust assumption is replaced by another trust assumption, this time made on the measurement device.

\section{Verification for trusted measurements}~\label{sec:verifM}

Suppose that state preparation can not be trusted, while it is possible to reliably measure qubits in eight different bases $\{|\pm\rangle_{\theta}\langle\pm|\}$, $\theta\in\{0,\pi/4,..,7\pi/4\}$ and subsequently reuse them. We now show that in this case, it is possible to use measurements to certify the correctness of state preparation and subsequently verify the quantum computation by means of the scheme described in the previous section.

The idea behind Protocol \hyperlink{pr:pr2}{2} is the following. Suppose that instead of generating a qubit in a state $\ket{\psi}$, a noisy state preparation device generates the state $\E(|\psi\rangle\langle\psi|)$, where $\E$ represents a CPTP-map. Regardless of the particular form of $\E$, after a measurement in the basis $\{|\psi\rangle\langle\psi|, |\psi_\perp\rangle\langle\psi_{\perp}|\}$ (where $\langle\psi|\psi_\perp\rangle=0$) the state collapses either into $\ket{\psi}$ or $\ket{\psi_\perp}$. Thus, despite the potential malfunctioning of the state preparation device, the collapse of the state function into one of the eigenvectors composing the measurement basis guarantees that if the outcome obtained is the expected one, then the state of the qubit after the measurement is itself the expected one. Based on the above idea, we show that Alice can blindly generate the same input state as in Protocol \hyperlink{pr:pr1}{1}, and thus use a similar technique.

\subsection{Description of the Protocol}

State preparation in Protocol \hyperlink{pr:pr2}{2} works as follows. For any qubit $(i,j)$ belonging to any of the graphs (traps and target), the set of single-qubit states $|+\rangle_{\theta}$, $\theta\in\{0,\pi/4,..,7\pi/4\}$, is prepared. Next, each of these eight states is measured in the same basis as prepared. If all of the measurements output 0, then the qubits are indeed in the correct state. A qubit is thus chosen at random and used for the computation, while the others are discarded. In this way, one can produce all of the graphs required by the verification protocol illustrated in the previous section. 

We now provide a more detailed description of our scheme in the language of cryptographic protocols. The roles played by Alice and Bob within the interactive game are described by Protocol \hyperlink{pr:pr2}{2}, that is a protocol for Alice making measurements in the discrete set of bases $\{|\pm\rangle_{\phi}\langle\pm|\}$, $\phi\in\{0,\pi/4,..,7\pi/4\}$. As in Protocol \hyperlink{pr:pr1}{1}, Alice verifies Bob's behaviours by hiding the target computation among $v$ traps of the kind explained in the previous section. 

The main difference with Protocol \hyperlink{pr:pr1}{1} regards the preparation of the physical qubits belonging to the various graphs. Notice that for state preparation Alice needs a short-time single-qubit memory if qubits are measured one after the other, or eight different measurements devices if qubits are measured at the same time.

For any computation $k=1,..,v+1$, the physical qubits belonging to the BwS are prepared as follows. For any physical qubit $(i,j)$, Bob creates eight qubits in the state $\ket{+}_{\tau^{(k)}_{i,j}}$, $\tau^{(k)}_{i,j}\in(0,\pi/4,..,7\pi/4)$, and sends them to Alice. Alice measures each qubit $\ket{+}_{\tau^{(k)}_{i,j}}$ by angle $\tau^{(k)}_{i,j}$ and checks the outcomes. If all of the measurements yield 0, she concludes that she holds the correct eight-qubit state, sends a randomly chosen qubit back to Bob and discards the remaining ones. On the contrary, if some measurement yields 1, she concludes that Bob is providing her with qubits in the wrong state, hence restarts state preparation of qubit $(i,j)$.
 
After receiving every physical qubit belonging to the $k^{th}$ BwS, Bob stores all of them in his register and finally creates the graph. Computational measurements are performed by Alice, who does not disclose any information about computational angles nor about measurement outcomes. 

\subsection{Overheads}
The amount of qubits sent by Alice to Bob equals $N^{\textup{A}}_{\textup{qubit}}=(v+1)nm$, while Bob sends to Alice $N^{\textup{B}}_{\textup{qubit}}=9(v+1)nm$ qubits (namely $8(v+1)nm$ during state preparation and $(v+1)nm$ during the actual computation). No bits are exchanged during the protocol. Although the interaction might potentially be done off-line, it is in fact done on-line. This is because Alice can not store in her register all of the physical qubits, hence she needs Bob to use the communication channel several times.

\subsection{Correctness, Blindness and Verifiability}

We now illustrate the results regarding Protocol \hyperlink{pr:pr2}{2}. First, we formally prove that state preparation outputs the correct state, regardless of Bob's deviations (Lemma \ref{lem:sp}). Next, we show that Protocol \hyperlink{pr:pr2}{2} is correct, blind and verifiable.

\begin{lemma}~\label{lem:sp}
If Alice accepts, then state preparation (namely step 2 of Protocol \hyperlink{pr:pr2}{2}) always produces the correct outcome.
\end{lemma}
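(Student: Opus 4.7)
The plan is to use the projection postulate to show that, conditioned on all eight measurement outcomes being~$0$, the qubit Alice sends back to Bob is in a pure product state with his register, and that this state is exactly the one required in step~2 of Protocol~\hyperlink{pr:pr1}{1}. The proof will not require any hypothesis on Bob's behaviour: even if he prepares a maliciously entangled state across the eight qubits and his private register, the measurement acts as a ``filter'' whose successful passage certifies the post-measurement state of each individual qubit.

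Concretely, I would first model a dishonest state preparation as follows. For each physical qubit~$(i,j)$ of computation~$k$, instead of the honest tensor product $\bigotimes_{t=1}^{8}|+\rangle_{\tau_{t}}\langle+|_{\tau_{t}}$ (where the eight angles $\tau_{t}\in\{0,\pi/4,\ldots,7\pi/4\}$ are distinct), Bob sends an arbitrary joint state $\rho_{Q_{1}\cdots Q_{8}B}$ on the eight qubit registers $Q_{t}$ and his private register~$B$. Alice then performs the projective measurement $\{|+\rangle_{\tau_{t}}\langle+|_{\tau_{t}},\,|-\rangle_{\tau_{t}}\langle-|_{\tau_{t}}\}$ on each~$Q_{t}$.

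Next, I would invoke the projection postulate for the acceptance event ``all outcomes equal~$0$''. The corresponding Kraus operator is the rank-eight product projector $\Pi=\bigotimes_{t=1}^{8}|+\rangle_{\tau_{t}}\langle+|_{\tau_{t}}\otimes\mathbb{1}_{B}$, so the conditional post-measurement state is
\begin{equation*}
\frac{\Pi\,\rho_{Q_{1}\cdots Q_{8}B}\,\Pi}{\mathrm{Tr}(\Pi\,\rho_{Q_{1}\cdots Q_{8}B})}
=\bigotimes_{t=1}^{8}|+\rangle_{\tau_{t}}\langle+|_{\tau_{t}}\otimes\sigma_{B},
\end{equation*}
for some reduced state $\sigma_{B}$ on Bob's register, because each $|+\rangle_{\tau_{t}}\langle+|_{\tau_{t}}$ is rank one and therefore factorises the post-measurement state. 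In particular, each of Alice's qubits is in a known pure state decoupled from~$B$ and from the other seven qubits.

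Finally, I would note that among the eight angles $\{\tau_{t}\}$, one equals $\theta^{(k)}_{i,j}+\pi\sum_{(i',j')\sim(i,j)}^{(k)}r^{\prime(k)}_{i',j'} \pmod{2\pi}$, and this is precisely the qubit Alice selects to send back to Bob; the others are discarded. By the displayed equation, that qubit is in the state $|+\rangle_{\theta^{(k)}_{i,j}+\pi\sum r^{\prime(k)}_{i',j'}}=R_{Z}\bigl(\theta^{(k)}_{i,j}+\pi\sum r^{\prime(k)}_{i',j'}\bigr)|+\rangle$, which coincides exactly with the state that Alice prepares in step~2 of Protocol~\hyperlink{pr:pr1}{1}. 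Since this reasoning is independent of $\rho_{Q_{1}\cdots Q_{8}B}$, state preparation produces the correct outcome whenever Alice accepts, proving the lemma. There is no real obstacle; the only point worth emphasising is that the rank-one nature of the accepted projectors is what guarantees decoupling from Bob's private register, which is the ingredient needed to make the subsequent analysis of Protocol~\hyperlink{pr:pr2}{2} reduce to that of Protocol~\hyperlink{pr:pr1}{1}.
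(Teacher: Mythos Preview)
Your argument is correct and follows essentially the same route as the paper: both invoke the projection postulate to conclude that, conditioned on all eight outcomes being~$0$, the post-measurement eight-qubit state is the pure product $\bigotimes_{t}|+\rangle_{\tau_t}\langle+|_{\tau_t}$, so the selected qubit is exactly the one required by step~2 of Protocol~\hyperlink{pr:pr1}{1}. Your version is in fact slightly more explicit than the paper's in that you keep Bob's private register~$B$ in the picture and note the decoupling, whereas the paper traces~$B$ out at the start; one small terminological slip is that your $\Pi$ is a rank-\emph{one} projector on the eight-qubit space (a product of eight rank-one projectors), not ``rank-eight''.
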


\begin{proof}
Consider state preparation of qubit $(i,j)$ along computation $k$. Suppose that Bob sends Alice the eight-qubit state
\begin{equation}
\widetilde{\rho}_{i,j}=\textrm{Tr}_{B}\bigg[\E_{i,j}\bigg(\otimes_{\tau_{i,j}}|+\rangle_{\tau_{i,j}}\langle+|\bigg)\bigg]\textrm{ ,}
\end{equation}
where $\E_{i,j}$ represents a CPTP-map and ${B}$ labels Bob's private system. Consider the case where measurements yield 0 (otherwise the whole procedure is restarted). In this case, the measurement operation maps $\widetilde{\rho}_{i,j}$ into $\bigotimes_{l=0}^7|+\rangle_{l\pi/4}\langle+|$, i.e. the correct eight-qubit state.
\end{proof}

\begin{theorem}
\textup{\textbf{[Correctness]}} Protocol \hyperlink{pr:pr2}{2} is correct.
\end{theorem}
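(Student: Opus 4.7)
The plan is to reduce correctness of Protocol 2 to that of Protocol 1 (Theorem \ref{theorem:corr1}) by showing that the only operational difference between the two protocols lies in how the prepared physical qubits arrive in Bob's register, and that this difference is annihilated in the honest run by Lemma \ref{lem:sp}. Concretely, the post-preparation inputs to Bob's entangling operation are identical in distribution to those in Protocol \hyperlink{pr:pr1}{1}, and every subsequent step (entangling, adaptive measurement of angles $\delta^{(k)}_{i,j}$, recomputation of outcomes $s^{(k)}_{i,j}\oplus r^{(k)}_{i,j}$, and the acceptance criterion on the last column) is transcribed verbatim from Protocol \hyperlink{pr:pr1}{1}.

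First I would invoke Lemma \ref{lem:sp}: since in an honest run Bob produces the requested states $\ket{+}_{\tau^{(k)}_{i,j}}$ and Alice measures in the matching basis, the eight measurement outcomes are deterministically 0, the procedure is never restarted, and the qubit Alice returns to Bob is precisely in the state $R_Z\bigl(\theta^{(k)}_{i,j}+\pi\sum_{(i',j')\sim(i,j)}^{(k)}r^{\prime(k)}_{i',j'}\bigr)\ket{+}$. This is exactly the single-qubit state that Alice sends to Bob in step 2 of Protocol \hyperlink{pr:pr1}{1}. Hence at the end of step 2 of Protocol \hyperlink{pr:pr2}{2}, Bob's register holds, for every $k$ and $(i,j)$, the same product state as in the corresponding step of Protocol \hyperlink{pr:pr1}{1}.

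Next I would observe that steps 3 and 4 of Protocol \hyperlink{pr:pr2}{2} are literally identical to steps 3 and 4 of Protocol \hyperlink{pr:pr1}{1}: Bob applies the $cZ$ gates to build the BwS, Alice computes $\delta^{(k)}_{i,j}=(-1)^{r^{\prime(k)}_{i,j}}\phi^{(k)}_{i,j}+\theta^{(k)}_{i,j}+r^{(k)}_{i,j}\pi$, a measurement in basis $\{|\pm\rangle_{\delta^{(k)}_{i,j}}\langle\pm|\}$ is performed, $s^{(k)}_{i,j}$ is recomputed as $s^{(k)}_{i,j}\oplus r^{(k)}_{i,j}$, and the angles of subsequent qubits are adapted using the MBQC correction rule. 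The only formal difference is that here Alice performs the measurement and Bob forwards the qubits, as opposed to Bob measuring and reporting outcomes; but since the measurement is performed in the same basis on the same state, the induced distribution on outcomes and post-measurement states is unchanged.

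Combining these two observations, the joint distribution on $\{s^{(k)}_{i,j}\}$ produced by Protocol \hyperlink{pr:pr2}{2} on input $\rho_{\mathrm{in}}$ equals that produced by Protocol \hyperlink{pr:pr1}{1} on the same input. Invoking Theorem \ref{theorem:corr1} (with the detailed cancellation of the random $R_Z(\theta^{(k)}_{i,j})$ rotations against $\theta^{(k)}_{i,j}$ in $\delta^{(k)}_{i,j}$, and of $r^{(k)}_{i,j}$ against the final XOR, carried out in Appendix \ref{app:corr}) then yields $D\bigl(\mathrm{Tr}_{BC}[\F_A(\rho_{\mathrm{in}})],\mathrm{Tr}_{BC}[\rho_{\mathrm{out}}]\bigr)=0$, as required by Definition \ref{def:corr}. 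I do not anticipate any genuine obstacle here; the one subtlety worth stating carefully is that the restart loop in step 2 terminates with probability $1$ in the honest case (indeed, after a single round), so that the overall map defining the protocol is well-defined as a CPTP-map on $\rho_{\mathrm{in}}$.
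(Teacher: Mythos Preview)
Your proposal is correct and takes essentially the same approach as the paper: reduce to the correctness of Protocol \hyperlink{pr:pr1}{1} by invoking Lemma \ref{lem:sp} to ensure that state preparation yields the same qubits as in Protocol \hyperlink{pr:pr1}{1}, and then observe that the remaining steps (entangling, measurement at angle $\delta^{(k)}_{i,j}$, outcome recomputation, verification) coincide apart from the cosmetic change that Alice rather than Bob performs the measurements. Your added remark about termination of the restart loop in the honest case is a sensible clarification that the paper leaves implicit.
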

\begin{proof}
Correctness can be proven with the same arguments as those used for Protocol \hyperlink{pr:pr1}{1}, even though Alice makes the measurements instead of Bob. Indeed, the main difference between the two protocols consists in state preparation, which is correct by Lemma \ref{lem:sp}.
\end{proof}

\begin{theorem}
\textup{\textbf{[Blindness]}} Protocol \hyperlink{pr:pr2}{2} is blind.
\end{theorem}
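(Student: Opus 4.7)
The plan is to reduce the blindness of Protocol~\hyperlink{pr:pr2}{2} to that of Protocol~\hyperlink{pr:pr1}{1} (Theorem~\ref{th:bl1}). The crucial observation is that in the computational phase (step~3) of Protocol~\hyperlink{pr:pr2}{2}, Alice performs the measurements herself and never discloses the outcomes $s^{(k)}_{i,j}$ to Bob, so Bob's view of this phase is strictly less informative than in Protocol~\hyperlink{pr:pr1}{1}. Hence, if we can establish that the state-preparation phase (step~2) leaks at most a constant function of the input, the BFK-style argument used in Theorem~\ref{th:bl1} (Appendix~\ref{app:oldproofs}) applies verbatim to the remainder.

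For step~2, I would argue as follows. The eight angles $\tau^{(k)}_{i,j}\in\{0,\pi/4,\ldots,7\pi/4\}$ at which Bob is asked to prepare the test states are input-independent: every physical qubit triggers the same eight-angle enumeration. The preparation instructions sent by Alice during this phase therefore reveal nothing about $\{\phi^{(k)}_{i,j}\}$ or about the trap/target assignment $v_{t}$. After Alice accepts, she returns one qubit in the state $R_Z(\theta^{(k)}_{i,j}+\pi\sum_{(i',j')\sim(i,j)} r^{\prime(k)}_{i',j'})|+\rangle$; because $\theta^{(k)}_{i,j}$ is uniformly distributed over the eight equispaced angles, tracing over $\theta^{(k)}_{i,j}$ yields the completely mixed state $\mathbb{1}/2$ on Bob's side. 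Thus, at the start of step~3, Bob holds exactly the same reduced state as at the analogous point of Protocol~\hyperlink{pr:pr1}{1}, and the blindness analysis from there extends with no change.

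The main obstacle is handling arbitrary deviations by Bob during state preparation. Rather than sending the requested tensor product of $|+\rangle_{\tau^{(k)}_{i,j}}$ states, Bob may return an arbitrary joint state, entangled across qubits and with his private register, described by a CPTP map $\widetilde{\E}^{(2)}_{BC}$ in the sense of Definition~\ref{def:blind}. To control this I would invoke the same reasoning as Lemma~\ref{lem:sp}: Alice's eight measurements project whatever Bob sent onto the correct eigenstates whenever she accepts, so the qubit she ultimately returns to Bob is exactly the desired pure state on Alice's side. The only additional classical information leaked by step~2 is the sequence of accept/restart indicators, whose distribution depends on $\widetilde{\E}^{(2)}_{BC}$ and on the fixed $\tau^{(k)}_{i,j}$ but not on the secret parameters $\theta^{(k)}_{i,j}$, $r^{(k)}_{i,j}$, $r^{\prime(k)}_{i,j}$ or $\{\phi^{(k)}_{i,j}\}$. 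This residual dependence is precisely a constant function of the input as allowed by Definition~\ref{def:blind}, closing the reduction.
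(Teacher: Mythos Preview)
Your argument is correct, but it takes a substantially different and more elaborate route than the paper's own proof. The paper dispatches the theorem in two sentences by appealing directly to the no-communication theorem: in Protocol~\hyperlink{pr:pr2}{2} the only operations Alice performs that depend on the secret angles $\{\phi^{(k)}_{i,j}\}$ are her local measurements in step~3.2, and since local measurements on Alice's subsystem cannot change Bob's reduced state, Bob learns nothing about the measurement angles irrespective of any entanglement he may have retained.

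Your approach instead decomposes the protocol into the state-preparation phase and the computational phase, showing that step~2 leaks only $\phi$-independent data and that step~3 hands Bob strictly less than Protocol~\hyperlink{pr:pr1}{1} does, then invoking Theorem~\ref{th:bl1}. This is a valid line of reasoning and has the merit of making explicit why step~2 is harmless---a point the paper's proof leaves entirely implicit, even though Alice does send a qubit back to Bob there. One small imprecision worth flagging: when Alice returns one of the eight tested qubits, Bob can in principle identify \emph{which} one, and that label encodes $\theta^{(k)}_{i,j}+\pi\sum r'^{(k)}$. So Bob's state at the start of step~3 is not literally ``exactly the same reduced state'' as in Protocol~\hyperlink{pr:pr1}{1}, where $\theta'$ is hidden; he may hold the extra classical datum $\theta'$. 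This does not endanger blindness, since $\theta'$ is uniformly random and independent of~$\phi$, but it means your reduction really rests on the $\phi$-independence of everything revealed in step~2 together with monotonicity in step~3, not on an exact match of Bob's views. The paper's no-communication shortcut avoids this bookkeeping altogether, at the price of not spelling out why the qubit returned in step~2 carries no secret information.
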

\begin{proof}
Blindness is guaranteed by the no-communication theorem \cite{NC00}. The no-communication theorem states that Alice can not send any kind information to Bob by solely making measurements on her part of the system, even if her system is entangled with Bob's one. Thus, Bob can not retrieve any information about measurement angles.\\
\end{proof}

\begin{theorem}~\label{theorem:ver2}
\textup{\textbf{[Verifiability] }}For any $v\geq7$, Protocol \hyperlink{pr:pr2}{2} is $\varepsilon$-verifiable with soundness
\begin{equation}~\label{eq:oneround}
\varepsilon=\frac{7}{v+1}\bigg(\frac{7}{8}\bigg)^{6}\cong\frac{3.14}{v+1}\textrm{ ,}
\end{equation}
where $v$ represents the number of trap computations.
\end{theorem}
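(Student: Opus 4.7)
\textit{(Theorem \ref{theorem:ver2})}
The plan is to follow the same three-part strategy used in the proof of Theorem \ref{theorem:ver1}: establish completeness from the honest-case analysis of R-traps and C-traps, reduce Bob's arbitrary deviations to a convex combination of tensor products of local Pauli errors acting on the physical qubits of the BwS, and then invoke Lemmas \ref{lem:r} and \ref{lem:c} to bound Bob's success probability.

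Completeness ($\delta=1$) is immediate: by Lemma \ref{lem:sp}, whenever Alice accepts the state-preparation step she holds exactly the same input state as in Protocol \hyperlink{pr:pr1}{1}, and since the unitary structure of the BwS (entanglement by $cZ$ followed by adaptive rotated-basis measurements) is identical, the honest analysis of Sub-protocols \hyperlink{pr:spr1}{1.1} and \hyperlink{pr:spr2}{1.2} given in the proof of Theorem \ref{theorem:ver1} applies verbatim and the last-column outcomes of every trap are deterministically $(0,0,\ldots,0)$.

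For soundness, I would first use Lemma \ref{lem:sp} to argue that, conditioned on Alice accepting the state-preparation round, Bob's deviation during state preparation is trivial: the measurements performed by Alice project whatever state Bob sent onto the correct eight-qubit tensor product $\otimes_\tau |+\rangle_\tau\langle+|$, so we may assume that the qubit Alice returns to Bob is exactly $R_Z(\theta^{(k)}_{i,j}+\pi\sum_{(i',j')\sim(i,j)}r^{\prime(k)}_{i',j'})\ket{+}$. The remaining deviations are thus confined to (i) the entangling operation $U_E$ creating the BwS, and (ii) arbitrary unitaries $U^{(k)}_{i,j}$ that Bob applies to yet-to-be-sent qubits before returning each physical qubit to Alice for measurement (see the final circuit figure in the excerpt). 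This is structurally identical to the deviation model analysed in the proof of Theorem \ref{theorem:ver1}, with the only swap being that Alice rather than Bob performs the final $X$-basis measurements.

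I would then repeat the Pauli-twirl argument verbatim. Merging the deviations into a single CPTP-map $\E$ acting at the end of the circuit, expanding its Kraus operators in the Pauli basis, and averaging over the random parameters $\{\theta^{(k)}_{i,j}\}$, $\{r^{(k)}_{i,j}\}$ and $\{r^{\prime(k)}_{i,j}\}$ (the $\theta$'s randomise the measurement bases as before, while the $r$'s and $r'$'s introduce the appropriate conjugating $Z$ and $X$ operators on each qubit), Lemma \ref{lem:tw} collapses the off-diagonal terms and leaves a convex combination of tensor products of Pauli errors acting locally on each physical qubit of each BwS. Since Pauli-$X$ components stabilise the $\ket{+}$ inputs, the effective deviation reduces, on average, to local phase-flips of the physical qubits of the BwS, exactly as in Figure \ref{fig:MBQC1circ8}. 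From this point on, Lemmas \ref{lem:r} and \ref{lem:c} apply unchanged, and the same combinatorial optimisation over the number $\widetilde{v}$ of corrupted computations yields
\[
\varepsilon=\max_{\widetilde v}\,p(E_1\wedge E_2|\widetilde v)=\frac{\widetilde v}{v+1}\bigg(\frac{7}{8}\bigg)^{\widetilde v-1}\bigg|_{\widetilde v=7}=\frac{7}{v+1}\bigg(\frac{7}{8}\bigg)^6\cong\frac{3.14}{v+1}.
\]
The main delicate point is justifying that the Pauli-twirl still goes through in the trusted-measurement setting: because Alice's measurements are now also potentially entangled with Bob's side (through the $U^{(k)}_{i,j}$ that Bob applies before returning each qubit), one must verify that the order of operations does not spoil the averaging over $\{\theta^{(k)}_{i,j}\}$. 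This is resolved by noting that $cZ$ commutes with $R_Z$, so the $\theta$-rotation applied by Alice at input can be commuted past the entanglement and cancelled against the $\theta$-contribution in the measurement angle $\delta^{(k)}_{i,j}$, producing the completely-mixed classical register as in the proof of Theorem \ref{theorem:ver1}. Once this is checked, the rest of the proof is identical, which is why Protocol \hyperlink{pr:pr2}{2} inherits the same soundness as Protocol \hyperlink{pr:pr1}{1}.
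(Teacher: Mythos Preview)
Your proposal is correct and follows essentially the same route as the paper: use Lemma \ref{lem:sp} to reduce state preparation to the honest case, observe that the remaining deviation structure (entanglement plus per-qubit unitaries before Alice's measurements) is the same as in Figure \ref{fig:MBQC1circ6}, simplify to Figure \ref{fig:MBQC1circ7}, and then invoke the Pauli-twirl reduction and Lemmas \ref{lem:r}--\ref{lem:c} exactly as in Theorem \ref{theorem:ver1}. The paper's own proof is terser---it simply states that the circuit reduces to Figure \ref{fig:MBQC1circ7} and cites Theorem \ref{theorem:ver1}---but the ``delicate point'' you flag about the $\theta$-averaging surviving the trusted-measurement setting is precisely the content of the paper's post-proof remark explaining why the eight-angle encryption is still required even though Bob never sees the $\delta^{(k)}_{i,j}$.
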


\begin{proof}
Completeness can be proven with the same arguments used for Protocol \hyperlink{pr:pr1}{1}, together with correctness of state preparation (Lemma \ref{lem:sp}). 

To prove soundness, suppose that for any qubit $(i,j)$ of any computation $k\in(1,..,v+1)$, the measurements performed during state preparation yield outcome 0. As a consequence, the qubits sent by Alice to Bob are in the correct state (Lemma \ref{lem:sp}) and the computation can be described by the circuit in Figure \ref{fig:MBQC1circ10}. It is possible to simplify the circuit by (i) moving deviations toward the end of the circuit and merging them into a single unitary $U_B$, (ii) expressing controlled-$R_Z$ gates as uncontrolled rotations and (iii) simplifying pre-rotations in the states with $\theta_{i,j}^{(k)}$  in the angles. This yields the circuit in Figure \ref{fig:MBQC1circ7}, and the proof of the theorem follows from the proof of verifiability of Protocol \hyperlink{pr:pr1}{1}. 
\end{proof}
Notice that the deviations in circuit in Figure \ref{fig:MBQC1circ10} do not affect the measurement angles $\delta_{i,j}^{(k)}$. This is due to the fact that Alice performs all of the measurements and Bob can not guess any measurement angle. Thus, one may wonder whether verifiability of Protocol \hyperlink{pr:pr2}{2} indeed requires the encryption of the initial BwS by means of the eight angles $\theta_{i,j}^{(k)}\in\{0,\pi/4,..,7\pi/4\}$, and not by angles chosen uniformly at random from a smaller set (such as, for instance, $\theta_{i,j}^{(k)}\in\{0,\pi/2,\pi,3\pi/2\}$, which on its own would be enough to prevent Bob from getting any information about the initial state of the BwS). The answer is yes, and the intuitive reason is the following. Because of the presence of controlled rotations, the ``overall'' deviation $U_B$ affects measurement angles as well (Figure \ref{fig:MBQC1circ7}). If measurement angles are not ``one-time-padded'' by the angles $\theta_{i,j}^{(k)}$, $U_B$ carries a dependency on the angles $\delta_{i,j}^{(k)}$. In turn, this does not allow to sum over the random parameters $r_{i,j}^{(k)}$ and $r_{i,j}^{\prime {(k)}}$, which is fundamental to reduce deviations to a convex combination of Pauli by-products. Thus, the encryption of the initial state of the BwS by means of angles $\theta_{i,j}^{(k)}$ chosen at random from a eight-element set is crucial to verifiability of Protocol \hyperlink{pr:pr2}{2}.

\section{Conclusion and further steps}~\label{sec:conclusions}
We illustrated a verification protocol that certifies the correctness of a quantum computation by means of several other classically efficiently simulable computations. We proved that our technique is valid provided that a trust assumption is made on state preparation. Next, we adapted it to the case where measurements are trusted and qubits can be reused after the measurements. Our protocols restricts the trust assumptions to operations on the $XY$-plane of the Bloch sphere while maintaining a linear overhead.

An experimentally relevant open question regarding our trap computation technique is whether it can be adapted to a scenario where measurements are indeed trusted, but measured qubits can not be reused. Reusing already measured qubits is practically impossible in many experimental implementations of quantum computing due to practical issues, thus qubits are often re-initialized into the desired state after the measurement. In this case, trusting measurements implicitly means trusting state preparation as well, which defeats the whole purpose of our protocol. Another open question regards the possibility of making our protocols ``device independent'', namely security in a scenario where state preparation and measurement devices are both untrusted, but a space-like separation prevents them from communicating \cite{GKW15}. Finally, as a future step, we intend to exploit the ``symmetry'' characterising our protocols (namely the fact that state preparation or measurements are performed in the $XY$-plane of the BwS, and we make no use of qubits in the states $\ket{0}$ and $\ket{1}$, nor of measurements in Pauli-$Z$ basis) to adapt it to a multi-party scenario, such as \cite{KP16}. 

{\section*{Acknowledgements}}
\noindent{This research was supported by the UK EPSRC (EP/K04057X/2) and the UK Networked Quantum Information Technologies (NQIT) Hub (EP/M013243/1). We acknowledge helpful discussions with Elham Kashefi and Dominic Branford.}

\bibliography{verification}

\begin{thebibliography}{10}

\bibitem{NC00}
M.~Nielsen and I.~Chuang, ``Quantum computation and quantum information: 10th
  anniversary edition,'' {\em Cambridge University Press New York, NY, USA},
  2000.

\bibitem{W16}
G.~Wendin, ``Quantum information processing with superconducting circuits: a
  review,'' {\em Reports on Progress in Physics}, 2017.

\bibitem{B&al17}
R.~Biswas {\em et~al.}, ``A nasa perspective on quantum computing:
  Opportunities and challenges,'' {\em arXiv:1704.04836}, 2017.

\bibitem{AC16}
S.~Aaronson and L.~Chen, ``Complexity-theoretic foundations of quantum
  supremacy experiments,'' {\em arXiv:1612.05903}, 2016.

\bibitem{AV12}
D.~Aaronov and U.~Vazirani, ``Is quantum mechanics falsifiable? a computational
  perspective on the foundations of quantum mechanics,'' {\em arXiv:1206.3686},
  2012.

\bibitem{ABE08}
D.~Aharonov, M.~Ben-Or, and E.~Eban, ``Interactive proofs for quantum
  computations,'' {\em arXiv:0810.5375v2}, 2008.

\bibitem{BFK09}
A.~Broadbent, J.~Fitzsimons, and E.~Kashefi, ``Measurement-based and universal
  blind quantum computation,'' {\em Proceedings of the 50th Annual IEEE
  Symposium on Foundations of Computer Science, pp. 517-526}, 2009.

\bibitem{F85}
J.~Feigenbaum, ``Encrypting problem instances,'' {\em Conference on the Theory
  and Application of Cryptographic Techniques, pp. 477–488}, 1985.

\bibitem{AS06}
P.~Arrighi and L.~Salvail, ``Blind quantum computations,'' {\em International
  Journal of Quantum Information 4, 883}, 2006.

\bibitem{FK12}
J.~Fitzsimons and E.~Kashefi, ``Unconditionally {V}erifiable {B}lind
  {C}omputation,'' {\em Phys. Rev. A 96, 012303}, 2017.

\bibitem{RUV12}
B.~Reichardt, F.~Unger, and U.~Vazirani, ``A classical leash for a quantum
  system: Command of quantum systems via rigidity of chsh games,'' {\em
  arXiv:1209.0448}, 2012.

\bibitem{HM15}
M.~Hayashi and T.~Morimae, ``Verifiable measurement-only blind quantum
  computing with stabilizer testing,'' {\em Phys. Rev. Lett. 115, 220502},
  2015.

\bibitem{GKW15}
A.~Gheorghiu, E.~Kashefi, and P.~Wallden, ``Robustness and device independence
  of verifiable blind quantum computing,'' {\em New J. Phys. 17, 083040}, 2015.

\bibitem{M16}
M.~McKague, ``Interactive proofs for bqp via self-tested graph states,'' {\em
  Theory of Computing 12(3)}, 2016.

\bibitem{F16}
J.~Fitzsimons, ``Private quantum computation: An introduction to blind quantum
  computing and related protocols,'' {\em npj Quantum Information 3}, 2017.

\bibitem{GKK17}
A.~Gheorghiu, T.~Kapourniotis, and E.~Kashefi, ``Verification of quantum
  computation: An overview of existing approaches,'' {\em arXiv:1709.06984},
  2017.

\bibitem{B15}
A.~Broadbent, ``How to verify a quantum computation,'' {\em arXiv:1509.09180},
  2015.

\bibitem{KD17}
T.~Kapourniotis and A.~Datta, ``Nonadaptive fault-tolerant verification of
  quantum supremacy with noise,'' {\em arXiv:1703.09568v2}, 2017.

\bibitem{KW17}
E.~Kashefi and P.~Wallden, ``Optimised resource construction for verifiable
  quantum computation,'' {\em J. Phys. A: Math. Theor. 50}, 2017.

\bibitem{MF16}
T.~Morimae and J.~Fitzsimons, ``Post hoc verification with a single prover,''
  {\em arXiv:1603.06046}, 2016.

\bibitem{HKSE17}
D.~Hangleiter, M.~Kliesh, M.~Schwarz, and J.~Eisert, ``Direct certification of
  a class of quantum simulations,'' {\em Quantum Sci. Technol. 2, 015004},
  2017.

\bibitem{HPF15}
M.~Hajdusek, C.~Perez-Delgado, and J.~Fitzsimons, ``Device-independent
  verifiable blind quantum computation,'' {\em ArXiv:1502.02563}, 2015.

\bibitem{HH16}
M.~Hayashi and M.~Hajdusek, ``Self-guaranteed measurement-based quantum
  computation,'' {\em arXiv:1603.02195v3}, 2016.

\bibitem{NV16}
A.~Natarajan and T.~Vidick, ``Robust self-testing of many-qubit states,'' {\em
  Xiv:1610.03574}, 2016.

\bibitem{B&al16}
S.~Boixo {\em et~al.}, ``Characterizing quantum supremacy in near-term
  devices,'' {\em arXiv:1608.00263}, 2016.

\bibitem{BFNV18}
A.~Bouland, B.~Fefferman, C.~Nirkhe, and U.~Vazirani, ``Quantum supremacy and
  the complexity of random circuit sampling,'' {\em arXiv:1803.04402}, 2018.

\bibitem{RB01}
R.~Raussendorf and H.~Briegel, ``A one-way quantum computer,'' {\em Phys. Rev.
  Lett. 86, 5188}, 2001.

\bibitem{DK05}
V.~Danos and E.~Kashefi, ``Determinism in the one-way model,'' {\em Phys. Rev.
  A, 74, 052310}, 2006.

\bibitem{DFPR14}
V.~Dunjko, J.~Fitzsimons, C.~Portmann, and R.~Renner, ``Composable security of
  delegated quantum computation,'' {\em Advances in Cryptology–ASIACRYPT},
  2014.

\bibitem{AB99}
D.~Aharonov and M.~Ben-Or, ``Fault-tolerant quantum computation with constant
  error rate,'' {\em arXiv:quant-ph/9906129v1}, 1999.

\bibitem{D12}
V.~Dunjko, ``Ideal quantum protocols in the non-ideal physical world,'' {\em
  PhD thesis}, 2012.

\bibitem{DCEL09}
C.~Dankert, E.~Cleve, J.~Emerson, and E.~Livine, ``Exact and approximate
  unitary 2-designs: Constructions and applications,'' {\em Physical Review A
  80, 012304}, 2009.

\bibitem{KP16}
E.~Kashefi and A.~Pappa, ``Multiparty delegated quantum computing,'' {\em
  Cryptography}, 2016.

\end{thebibliography}
\bibliographystyle{ieeetr}

\begin{appendix}

\onecolumngrid
\newpage
\section{Correctness of Protocol 1}~\label{app:corr}
Here we give a proof of the correctness of Protocol \hyperlink{pr:pr1}{1}. If Bob follows Alice's instructions, the state of the system immediately before the measurements is of the form (Figure \ref{fig:MBQC1circ6}, where the deviations are set to the identity)
\begin{align}
{\sigma}_{\textup{out}}&=\sum_{\overline{\theta},\overline{r},\overline{r}'}\frac{{cR}_{n,m}^{\dagger(v+1)}..{cR}_{1,1}^{\dagger(1)}E\cdot(\sigma^{\overline{\theta},\overline{r},\overline{r}'}_{\textup{in}})}{2^{2nm(v+1)}8^{nm(v+1)}}
\end{align}
In the above expression, we use the notation $O\cdot\sigma=O\sigma O^{\dagger}$ for an operator $O$ and a state $\sigma$. Each operator ${cR}_{i,j}^{\dagger(k)}$ represents the rotation acting on qubit $(i,j)$ of computation $k$ and controlled by angle $\delta_{i,j}^{(k)}$, $E$ represents the entangling operation and the summation is made over all the combinations $\overline{\theta}=\{\theta_{i,j}^{(k)}\}$, $\overline{r}=\{r_{i,j}^{(k)}\}$ and $\overline{r}'=\{r_{i,j}^{\prime(k)}\}$. For any fixed $\overline{\theta}$, $\overline{r}$ and $\overline{r}'$, the state $\sigma^{\overline{\theta},\overline{r},\overline{r}'}_{\textup{in}}$ is equal to
\begin{align}
\sigma_{\textup{in}}^{\overline{\theta},\overline{r},\overline{r}'}&=\bigotimes_{i,j,k}|+\rangle_{\theta_{i,j}^{^{\prime(k)}}}\langle+|\otimes|\delta_{i,j}^{(k)}\rangle\langle\delta_{i,j}^{(k)}|=\bigotimes_{i,j,k}\bigg[R_{Z}\bigg(\theta_{i,j}^{(k)}+\pi\sum_{(i',j')\sim(i,j)}^{(k)}r_{i',j'}^{\prime(k)}\bigg)\cdot|+\rangle_{i,j}^{(k)}\langle+| \bigg]\otimes|\delta_{i,j}^{(k)}\rangle\langle\delta_{i,j}^{(k)}|\textrm{ ,}
\end{align}
After rewriting the controlled rotations as un-controlled $R_Z$-gates, the classical registers containing the information about the measurement angles can be traced out. Denoting as $\rho^{\overline{\theta},\overline{0},\overline{r}'}_{\textup{in}}$ the input state obtained by tracing out the angles (notice that it has no dependency on $\overline{r}$), ${\sigma}_{\textup{out}}$ can be rewritten as
\begin{align*}
{\rho}_{\textup{out}}&=\sum_{\overline{\theta},\overline{r},\overline{r}'}\frac{\big[{R_{Z}^{\dagger}}(\delta_{n,m}^{(v+1)})\otimes..\otimes{R_{Z}^{\dagger}}(\delta_{1,1}^{(1)})\big]E\cdot(\rho^{\overline{\theta},\overline{0},\overline{r}'}_{\textup{in}})}{2^{2nm(v+1)}8^{nm(v+1)}}\cr
	&=\sum_{\overline{\theta},\overline{r},\overline{r}'}\frac{\big[ Z^{r_{n,m}^{(v+1)}}R_{Z}^{\dagger}\big((-1)^{r_{n,m}^{\prime(v+1)}}\phi_{n,m}^{(v+1)}\big){R^{\dagger}_{Z}}(\theta_{n,m}^{(v+1)})\otimes..\otimes Z^{r_{1,1}^{(1)}}{R_{Z}^{\dagger}}\big((-1)^{r_{1,1}^{\prime(1)}}\phi_{1,1}^{(1)}\big){R_{Z}^{\dagger}}(\theta_{1,1}^{(1)})\big]E\cdot(\rho^{\overline{\theta},\overline{0},\overline{r}'}_{\textup{in}})}{2^{2nm(v+1)}8^{nm(v+1)}}
\end{align*}
Since the rotations in the $XY$-plane commute with $E$, we can commute the rotations by angle $\theta_{i,j}^{(k)}$ with $E$ and cancel them out. Thus, summing over $\overline{\theta}$,
\begin{align}
{\rho}_{\textup{out}}&=\sum_{\overline{r},\overline{r}'}\frac{\big[ Z^{r_{n,m}^{(v+1)}}X^{r_{n,m}^{\prime(v+1)}}{R^{\dagger}_{Z}}(\phi_{n,m}^{(v+1)})X^{r_{n,m}^{\prime(v+1)}}\otimes..\otimes Z^{r_{1,1}^{(1)}}X^{r_{1,1}^{\prime(1)}}{R^{\dagger}_{Z}}(\phi_{1,1}^{(1)})X^{r_{1,1}^{\prime(1)}}\big]E\cdot(\rho^{\overline{0},\overline{0},\overline{r}'}_{\textup{in}})}{2^{2nm(v+1)}}\textrm{ ,}
\end{align}
where we also rewrote each $R_Z$-gate by angle $-\phi$ as $XR_Z(\phi)X$. Commuting the Pauli-$X$ on the right-hand side of the remaining rotations with $E$, ${\rho}_{\textup{out}}$ becomes
\begin{align}
{\rho}_{\textup{out}}&=\sum_{\overline{r},\overline{r}'}\frac{\big[ Z^{r_{n,m}^{(v+1)}}X^{r_{n,m}^{\prime(v+1)}}{R^{\dagger}_{Z}}(\phi_{n,m}^{(v+1)})\otimes..\otimes Z^{r_{1,1}^{(1)}}X^{r_{1,1}^{\prime(1)}}{R^{\dagger}_{Z}}(\phi_{1,1}^{(1)})\big]E\cdot(\rho^{\overline{0},\overline{0},\overline{0}}_{\textup{in}})}{2^{2nm(v+1)}}
\end{align}
After the measurement of the first qubit, we obtain
\begin{align*}
{\rho'}_{\textup{out}}=&\sum_{s_{1,1}^{(1)}}\bigg[\bigg({}_{1,1}^{(1)}\langle+|Z^{s_{1,1}^{(1)}}\otimes\mathbb{1}\bigg)\bigg(\sum_{\overline{r},\overline{r}'}\frac{\big[ Z^{r_{n,m}^{(v+1)}}X^{r_{n,m}^{\prime(v+1)}}{R^{\dagger}_{Z}}(\phi_{n,m}^{(v+1)})\otimes..\otimes Z^{r_{1,1}^{(1)}}X^{r_{1,1}^{\prime(1)}}{R^{\dagger}_{Z}}(\phi_{1,1}^{(1)})\big]E\cdot(\rho^{\overline{0},\overline{0},\overline{0}}_{\textup{in}})}{2^{2nm(v+1)}}\bigg)\bigg(Z^{s_{1,1}^{(1)}}|+\rangle_{1,1}^{(1)}\otimes\mathbb{1}\bigg)\bigg]\cr
	&\otimes Z^{s_{1,1}^{(1)}}|+\rangle\langle+|Z^{s_{1,1}^{(1)}}\cr
	=&\sum_{s_{1,1}^{(1)}, r_{1,1}^{(1)}}\bigg[\bigg({}_{1,1}^{(1)}\langle+|Z^{s_{1,1}^{(1)}\oplus r_{1,1}^{(1)}}\otimes\mathbb{1}\bigg)\bigg(\sum_{r_{2,1}^{(1)},..}\frac{\big[ Z^{r_{n,m}^{(v+1)}}X^{r_{n,m}^{\prime(v+1)}}{R^{\dagger}_{Z}}(\phi_{n,m}^{(v+1)})\otimes..\otimes {R^{\dagger}_{Z}}(\phi_{1,1}^{(1)})\big]E\cdot(\rho^{\overline{0},\overline{0},\overline{0}}_{\textup{in}})}{2^{nm(v+1)}}\bigg)\bigg(Z^{s_{1,1}^{(1)}\oplus r_{1,1}^{(1)}}|+\rangle_{1,1}^{(1)}\otimes\mathbb{1}\bigg)\bigg]\cr
	&\otimes Z^{s_{1,1}^{(1)}}|+\rangle\langle+|Z^{s_{1,1}^{(1)}}\textrm{ ,}
\end{align*}
where $\mathbb{1}$ is the identity on the rest of the system. In the last line we used the fact that Pauli-$X$ operators stabilize qubits in the $\ket{+}$ state to cancel out the residual Pauli-$X$ acting on the measured qubit.
When Alice recomputes the measurement outcome $s_{1,1}^{(1)}$ as $s_{1,1}^{(1)}\oplus r_{1,1}^{(1)}$ (step 3.2 of the Protocol), we obtain
\begin{align*}
{\rho'}_{\textup{out}}=&\sum_{s_{1,1}^{(1)}, r_{1,1}^{(1)}}\bigg[\bigg({}_{1,1}^{(1)}\langle+|Z^{s_{1,1}^{(1)}\oplus r_{1,1}^{(1)}}\otimes\mathbb{1}\bigg)\bigg(\sum_{r_{2,1}^{(1)},..}\frac{\big[ Z^{r_{n,m}^{(v+1)}}X^{r_{n,m}^{\prime(v+1)}}{R^{\dagger}_{Z}}(\phi_{n,m}^{(v+1)})\otimes..\otimes {R^{\dagger}_{Z}}(\phi_{1,1}^{(1)})\big]E\cdot(\rho^{\overline{0},\overline{0},\overline{0}}_{\textup{in}})}{2^{nm(v+1)}}\bigg)\bigg(Z^{s_{1,1}^{(1)}\oplus r_{1,1}^{(1)}}|+\rangle_{1,1}^{(1)}\otimes\mathbb{1}\bigg)\bigg]\cr
	&\otimes Z^{s_{1,1}^{(1)}\oplus r_{1,1}^{(1)}}|+\rangle_{1,1}^{(1)}\langle+|Z^{s_{1,1}^{(1)}\oplus r_{1,1}^{(1)}}\cr
	=&\sum_{s_{1,1}^{(1)}}\bigg[\bigg({}_{1,1}^{(1)}\langle+|Z^{s_{1,1}^{(1)}}\otimes\mathbb{1}\bigg)\bigg(\sum_{r_{2,1}^{(1)},..}\frac{\big[ Z^{r_{n,m}^{(v+1)}}X^{r_{n,m}^{\prime(v+1)}}{R^{\dagger}_{Z}}(\phi_{n,m}^{(v+1)})\otimes..\otimes {R^{\dagger}_{Z}}(\phi_{1,1}^{(1)})\big]E\cdot(\rho^{\overline{0},\overline{0},\overline{0}}_{\textup{in}})}{2^{nm(v+1)}}\bigg)\bigg(Z^{s_{1,1}^{(1)}}|+\rangle_{1,1}^{(1)}\otimes\mathbb{1}\bigg)\bigg]\cr
	&\otimes Z^{s_{1,1}^{(1)}}|+\rangle_{1,1}^{(1)}\langle+|Z^{s_{1,1}^{(1)}}
\end{align*}
To obtain the second equality, one needs to define a new variable $s_{1,1}^{\prime(1)} = s_{1,1}^{(1)}+r_{1,1}^{(1)}$ and make the change of variable $s_{1,1}^{(1)}\rightarrow s_{1,1}^{\prime(1)}$ (which is possible because $s_{1,1}^{(1)}$ and $r_{1,1}^{(1)}$ are independent parameters). Importantly, this change of variable removes all dependence of the angles $\phi_{i,j}^{(k)}$ of the other qubits from $r_{1,1}^{(1)}$. 

Overall, the state ${\rho'}_{\textup{out}}$ represents the state of the system after the first qubit has been measured. As it can be seen, it does not depend on any of the random variables associated to the first qubit, namely $\theta_{1,1}^{(1)}$, $r_{1,1}^{(1)}$ and $r_{1,1}^{\prime(1)}$. Repeating the same calculation as the above one for every other qubits (keeping in mind that the measurement angles of yet-to-be-measured qubits have to be recomputed as in step 3.2 of the protocol, and namely as $\{(-1)^{s_X}+s_Z\pi\}$), it can be shown that the dependency on the remaining random variables vanishes as well. The correctness of Protocol \hyperlink{pr:pr1}{1} can finally be proven based on the correctness of MBQC. 

\section{Blindness of Protocol 1}~\label{app:oldproofs}
Here, we give a proof of Theorem \ref{th:bl1}, which states that Protocol \hyperlink{pr:pr1}{1} is blind. Blindness of Protocol \hyperlink{pr:pr1}{1} can be proven with similar arguments as for other protocols in the prepare-and-send class  \cite{D12,BFK09}.

After Bob receives all of the physical qubits and the measurement angles, he holds the state
\begin{align}
\rho_B &=\frac{1}{2^{2(v+1)nm}}\frac{1}{8^{(v+1)nm}}\sum_{\overline{\phi},\overline{\theta},\overline{r},\overline{r^{\prime}}}p_{\overline{\phi}}\bigotimes_{i,j,k}|+\rangle_{\theta^{\prime(k)}_{i,j}}\langle+|\otimes|\delta_{i,j}^{(k)}\rangle\langle\delta_{i,j}^{(k)}|\cr
	&=\frac{1}{2^{2(v+1)nm}}\frac{1}{8^{(v+1)nm}}\sum_{\overline{\phi},\overline{\theta},\overline{r},\overline{r^{\prime}}}p_{\overline{\phi}}\bigotimes_{i,j,k}\bigg[R_Z\big({\theta}^{\prime(k)}_{i,j}\big)\cdot|+\rangle_{i,j}^{(k)}\langle+|\bigg]\otimes|\delta_{i,j}^{(k)}\rangle\langle\delta_{i,j}^{(k)}|\textrm{ ,}
\end{align} 
where, for simplicity, we have used the notation $O\cdot\sigma$ to indicate the action $O\sigma O^{\dagger}$ of an operator $O$ on a state $\sigma$. In the above expression, $\theta_{i,j}^{^{\prime(k)}}$ is equal to $\theta_{i,j}^{^{(k)}}+\pi\sum_{(i',j')\sim(i,j)}^{(k)}r_{i',j'}^{\prime(k)}$, while $\overline{\phi}$ represents the set of computational angles $\{\phi_{i,j}^{(k)}\}$ for any computation $k$, $\overline{\theta}$ represents the set of angles $\{\theta_{i,j}^{(k)}\}$ for any computation $k$, $\overline{r}$ and $\overline{r^{\prime}}$ represent the sets of random parameters $\{r_{i,j}^{(k)}\}$ and $\{r_{i,j}^{\prime(k)}\}$ and $p(\overline{\phi})$ is the probability of the specific set of angles $\overline{\phi}$ representing Bob's prior knowledge (if Bob has no has no prior knowledge of Alice's desired computation, then $p(\overline{\phi})=1/8^{nm(v+1)}$ for any $\overline{\phi}$). The state $\rho_B$ can be rewritten as
\begin{align*}
\rho_B=\frac{1}{2^{2(v+1)nm}}&\frac{1}{8^{(v+1)nm}}\sum_{\overline{\phi},\overline{\theta},\overline{r},\overline{r^{\prime}}}p_{\overline{\phi}}\bigotimes_{i,j,k}\bigg[R_Z\bigg(\delta^{(k)}_{i,j}-(-1)^{r_{i,j}^{\prime(k)}}\phi_{i,j}^{(k)}+\pi
{r}^{(k)}_{i,j}+\pi\sum_{(i',j')\sim(i,j)}^{(k)}r_{i',j'}^{\prime(k)}\bigg)\cdot|+\rangle_{i,j}^{(k)}\langle+|\bigg]\otimes|\delta_{i,j}^{(k)}\rangle\langle\delta_{i,j}^{(k)}|
\end{align*}
Suppose that we act on the above state with the classically controlled unitary $CCU$, whose action on a state $\sigma$ and on an angle $\delta$ is defined as
\begin{equation}
CCU\big(\sigma\otimes|{\delta}\rangle\langle\delta|\big)CCU^{\dagger}=R_Z{(-\delta)}\sigma R_Z^{\dagger}(-\delta)\otimes|{\delta}\rangle\langle\delta|
\end{equation}
Thus, $\rho_B$ becomes
\begin{align*}
\rho'_B=CUU\cdot\rho_B=\frac{1}{2^{2(v+1)nm}}\frac{1}{8^{(v+1)nm}}\sum_{\overline{\phi},\overline{\theta},\overline{r},\overline{r^{\prime}}}p_{\overline{\phi}}&\bigotimes_{i,j,k}\bigg[R_Z\bigg(-(-1)^{r_{i,j}^{\prime(k)}}\phi_{i,j}^{(k)}+\pi{r}^{(k)}_{i,j}+\pi\sum_{(i',j')\sim(i,j)}^{(k)}r_{i',j'}^{\prime(k)}\bigg)\cdot|+\rangle_{i,j}^{(k)}\langle+|\bigg]\cr
&\otimes|\delta_{i,j}^{(k)}\rangle\langle\delta_{i,j}^{(k)}|
\end{align*}
Summing over all possible $\overline{\theta}$ (which are now only contained in the angles $\delta_{i,j}^{(k)}$ in the classical register), we obtain
\begin{equation*}
\rho'_B=\frac{1}{2^{2(v+1)nm}}\sum_{\overline{\phi},\overline{r},\overline{r^{}}^{\prime}}p_{\overline{\phi}}\bigotimes_{i,j,k}\bigg[R_Z\bigg(-(-1)^{r_{i,j}^{\prime(k)}}\phi_{i,j}^{(k)}+\pi{r}^{(k)}_{i,j}+\pi\sum_{(i',j')\sim(i,j)}^{(k)}r_{i',j'}^{\prime(k)}\bigg)\cdot|+\rangle_{i,j}^{(k)}\langle+|\bigg]\otimes\mathbb{1}^{\otimes3(v+1)nm}
\end{equation*}
where $\mathbb{1}^{\otimes t}$ represents the identity on $t$ qubits. Finally, we can sum over the random parameter $r_{i,j}$. Notice that we can not sum over all of them at the same time. Intuitively, since Alice recomputes every measurement outcome $s_{i,j}^{(k)}$ as $s_{i,j}^{(k)}\oplus r_{i,j}^{(k)}$ and subsequently redefines the set of measurement angles based on the recomputed outcome (step 3.2 of the protocol), the angles $\phi_{i,j}^{(k)}$ depend on the parameters $r_{i,j}^{(k)}$ of previous qubit. Therefore, we start from the summations over $r_{i,m}^{(k)}$ (i.e. from the $r_{i,j}^{(k)}$ of last-column qubits). We obtain
\begin{align*}
\rho'_B=\frac{1}{2^{2(v+1)n(m-1)}}\sum_{\overline{\phi}}p_{\overline{\phi}}&\bigotimes_{i,j<m,k}\sum_{r_{i,j}^{(k)},r_{i,j}^{\prime(k)}}\bigg[R_Z\bigg(-(-1)^{r_{i,j}^{\prime(k)}}\phi_{i,j}^{(k)}+\pi{r}^{(k)}_{i,j}+\pi\sum_{(i',j')\sim(i,j)}^{(k)}r_{i',j'}^{\prime(k)}\bigg)\cdot|+\rangle_{i,j}^{(k)}\langle+|\otimes\mathbb{1}^{m(v+1)}\bigg]\cr
&\otimes\mathbb{1}^{\otimes3(v+1)nm}
\end{align*}
Similarly, summing right-to-left (i.e. column per column, from $j=m-1$ to $j=1$) over the remaining random parameters , we obtain
\begin{equation*}
\rho'_B=\sum_{\overline{\phi}}p_{\overline{\phi}}\textrm{ }\mathbb{1}^{\otimes(v+1)nm}\otimes\mathbb{1}^{\otimes3(v+1)nm}\textrm{ = }\mathbb{1}^{\otimes(v+1)nm}\otimes\mathbb{1}^{\otimes3(v+1)nm}
\end{equation*}
The above state is completely mixed. It is equivalent to the one in Bob's register up to the unitary $CUU$. Thus, Bob holds in his register the completely mixed state, and can not retrieve any information about the computational angles other than he originally had before the protocol.

\end{appendix}

\end{document}